\def\BibTeX{{\rm B\kern-.05em{\sc i\kern-.025em b}\kern-.08em
    T\kern-.1667em\lower.7ex\hbox{E}\kern-.125emX}}
\DeclareMathAlphabet{\mathpzc}{OT1}{pzc}{m}{it}
\newcommand{\R}{\mathbb{R}}
\theoremstyle{definition}
\newtheorem{proposition}{Proposition}
\newtheorem{definition}{Definition}
\newtheorem{corollary}{Corollary}
\newtheorem{lemma}{Lemma}
\newtheorem{theorem}{Theorem} 
\newtheorem{remark}{Remark}
\newtheorem{condition}{Condition} 
\newcommand{\cbf}{h}
\newcommand{\traj}{\bm{x}}
\newcommand{\ctrl}{\bm{u}}
\newcommand{\trajeta}{\bm{\eta}}
\newcommand{\trajphi}{\bm{\phi}}
\newcommand{\trajmu}{\bm{\mu}}
\let\phi\relax
\newcommand{\phi}{\varphi}
\newcommand{\phid}{\phi_{d}}
\newcommand{\trajphid}{\bm{\phid}}
\newcommand{\phidiff}{\Delta\phi}
\newcommand{\trajphidiff}{\bm{\phidiff}}
\newcommand{\XiToPsi}{T}
\newcommand{\barrierconstraint}{barrier constraint}
\newcommand{\safeset}{\mathcal{C}}
\newcommand{\dtheta}{\dot{\theta}}
\newcommand{\ddtheta}{\ddot{\theta}}
\newcommand{\pspi}{\kappa_{\text{ps}}}
\newcommand{\norm}[1]{\left\lVert #1\right\rVert}
\newcommand{\qio}{p_{\text{io}}}
\newcommand{\pcbf}{p_{\text{cbf}}}
\newcommand{\qcbf}{q_{\text{cbf}}}
\begin{document}
\title{When are safety filters safe? On minimum phase conditions of control barrier functions}
\author{Jason J. Choi, \IEEEmembership{Member, IEEE}, Claire J. Tomlin, \IEEEmembership{Fellow, IEEE},  Shankar Sastry, \IEEEmembership{Fellow, IEEE}, \\ and Koushil Sreenath, \IEEEmembership{Member, IEEE}
\thanks{% \jcnote{TODO: Date of submission.} 
This work is supported in part by NSF CMMI-1944722.}
\thanks{Jason J. Choi, Claire J. Tomlin, Shankar Sastry, and Koushil Sreenath are with University of California, Berkeley. Contact info: \tt jason.choi@berkeley.edu}}

\maketitle

\begin{abstract} 
In emerging control applications involving multiple and complex tasks, safety filters are gaining prominence as a modular approach to enforcing safety constraints. Among various methods, control barrier functions (CBFs) are widely used for designing safety filters due to their simplicity---imposing a single linear constraint on the control input at each state. In this work, we focus on the internal dynamics of systems governed by CBF-constrained control laws. Our key observation is that, although CBFs guarantee safety by enforcing state constraints, they can inadvertently be ``unsafe'' by causing the internal state to diverge. We investigate the conditions under which the full system state, including the internal state, can remain bounded under a CBF-based safety filter. Drawing inspiration from the input-output linearization literature, where boundedness is ensured by minimum phase conditions, we propose a new set of CBF minimum phase conditions tailored to the structure imposed by the CBF constraint. A critical distinction from the original minimum phase conditions is that the internal dynamics in our setting is driven by a nonnegative virtual control input, which reflects the enforcement of the safety constraint. We include a range of numerical examples---including single-input, multi-input, linear, and nonlinear systems---validating both our analysis and the necessity of the proposed CBF minimum phase conditions.
\end{abstract}

\begin{IEEEkeywords}safety filters, control barrier functions, feedback linearization, minimum phase
\end{IEEEkeywords}

\vspace{-0.5em}
\section{Introduction}
\label{sec:introduction}

\subsection{Motivation \& Related Work}

\IEEEPARstart{S}{afety} filters are emerging as an effective paradigm for designing safe control systems  \cite{Wabersich2023, hsu2023safety}. Unlike conventional methods that integrate safety and performance into a single controller, the safety filter approach introduces a separate module as the final layer before the plant, which checks whether the commanded input may violate safety constraints and, if necessary, minimally overrides it. By decoupling safety from performance, this modular design offers simplicity and flexibility, making it well-suited for emerging applications involving complex and multi-task control \cite{he2024agile, lavanakul2024safety}.

% This approach decouples the design for performance from design for safety. % with the latter only intervening as needed to ensure safety. 
% The designer does not need to reconcile safety specifications while developing task-specific controllers, allowing the safety problem to be treated independently from the task problem. Such modular design process offers both simplicity and flexibility, making the safety filter approach well-suited for emerging applications that involve complex, multi-task control systems \cite{he2024agile, lavanakul2024safety}. 

Control barrier functions (CBFs) \cite{Prajna2006, ogren2006autonomous, Ames2016} represent a popular method for designing safety filters, alongside other methods including Hamilton-Jacobi reachability \cite{bansal2017hamilton} and predictive control \cite{wabersich2021predictive}. CBFs have gained popularity because they streamline the safety filter design process by imposing a single linear constraint on the control input at each state---referred to in this paper as the \textit{barrier constraint}. Several studies have identified potential drawbacks or issues with CBF-based safety filter designs \cite{reis2020control, grover2021deadlock, cortez2022compatibility, borquez2024safety, mestres2025control}, such as undesirable equilibria or deadlocks, which can compromise the performance of the original controller. 

In this work, we examine the often-overlooked \textit{internal dynamics} of CBF-based safety-filtered systems. The internal state arises as leftover state variables when the CBF's relative degree is lower than the system order. While CBFs ensure compliance with pre-specified safety constraints, our key observation is that they can introduce a different kind of failure by causing the internal state to diverge or blow up, which would not occur in the absence of the filter. For a motivating example illustrating the use of a CBF safety filter on a linear dynamical system leading to an unbounded state under some conditions, see Section \ref{subsec:linear-system-result}. 

We investigate conditions under which the full system state, including its internal state, remains bounded under a CBF-constrained control law. Recent work \cite{mestres2025control} analyzed boundedness in a special case where the CBF has relative degree one and a bounded zero-level set boundary. Here, we consider a more general setting with potentially higher relative degree and unbounded zero-level set boundaries.

 % Our analysis reveals that naive CBF-based safety filtering does not guarantee the boundedness of the internal state. 
We draw inspiration from the input-output (IO) linearization problem \cite[Ch. 9]{sastry2013nonlinear}, in which the conditions resulting in the boundedness of the system state are carefully studied. In IO linearization, first, the \textit{zero dynamics} is defined as the internal dynamics under which the output is perfectly regulated as zero. Next, the \textit{minimum phase condition} is defined as the condition under which the zero dynamics has a stable equilibrium. When this minimum phase condition is satisfied, the system state stays bounded under the IO linearization control law.

There are two key aspects that distinguish our problem from the IO linearization problem:
\begin{enumerate}[label=\alph*), leftmargin=1.25em, labelindent=\parindent, listparindent=\parindent, labelwidth=0pt]
    \item In IO linearization, the control law is designed to drive the output to converge to zero, whereas the CBF filter merely regulates the output (the CBF value) to remain nonnegative.
    \item As a result, when analyzing the internal dynamics, we must account for how the output influences the internal dynamics, particularly given the nonnegativity constraint on the output. Such a nonnegativity constraint is not present in the IO linearization control law, where the output variable is allowed to vary between positive and negative values.
\end{enumerate}
These aspects relate to the \textit{positive stabilization} problem discussed in the literature \cite{heemels1998positive}, wherein a system is positively stabilizable if it can be stabilized to an equilibrium using a nonnegative control signal or feedback control law. 

\vspace{-0.5em}
\subsection{Contributions}

These key differences motivate us to propose a new set of minimum phase conditions and zero dynamics tailored to the CBF-constrained system, distinct from those in the IO linearization problem. In our new definition of CBF zero dynamics (Definition \ref{def:zero-dynamics-cbf}), a virtual control input appears as a scalar input in the dynamics. Our analysis shows that, for a CBF whose zero dynamics is positively stabilizable, the boundedness of the CBF-constrained dynamics can be guaranteed under additional conditions. 

Accordingly, we define the new CBF minimum phase condition as the condition in which the given CBF results in positively stabilizable zero dynamics. This new definition generalizes the original definition of the minimum phase condition, as open-loop stability implies positive stabilizability. We specifically define three different types of CBF minimum phase conditions (Definitions \ref{cond:minimum-phase-simple}, \ref{cond:minimum-phase-feedback}, and \ref{cond:minimum-phase-multi-input}) for both single-input and multi-input systems.
% \begin{enumerate}[leftmargin=1.25em, labelindent=\parindent, listparindent=\parindent, labelwidth=0pt]
%     \item  \textit{Fixed-virtual-control-input} minimum phase condition (Condition \ref{cond:minimum-phase-simple}),
%     \item \textit{Positive-feedback-stabilized} minimum phase condition (Condition \ref{cond:minimum-phase-feedback}),
%     \item \textit{Open-loop-positive-stabilized} minimum phase condition (Condition \ref{cond:minimum-phase-open-loop}).
% \end{enumerate}
If the CBF satisfies one of these conditions, the CBF-constrained dynamics can be guaranteed to remain bounded, provided certain additional requirements are met (for instance, that the CBF minimum decay rate is sufficiently large in Theorem \ref{thm:min-phase-feedback}).
This result is established in Theorems \ref{thm:min-phase-simple-global} and \ref{thm:min-phase-feedback}, which together represent the main contribution of our work. 

The main takeaway from our analysis is that naive CBF safety filters can be problematic and potentially dangerous. \textit{CBF safety filters must be designed with care:} we should verify that the chosen CBF satisfies the CBF minimum phase conditions, and the stability of the internal dynamics must be addressed concurrently with the barrier constraint when employing the safety filter.

\vspace{-1em}
\subsection{Organization \& Notations}
The rest of the article is organized as follows. In Section \ref{sec:background}, we review the IO linearization problem and CBFs with general relative degree. Section \ref{sec:minimum-phase-cbf} presents our main results on the CBF minimum phase conditions for single-input systems. Section \ref{sec:multi-input} extends our analysis to multi-input systems. Section \ref{sec:simulations} presents a simulation study for the single-input case. Section \ref{sec:multi-input-examples} extends the simulation study to the multi-input case. We conclude with closing remarks in Section \ref{sec:conclusion}.

\textbf{Notations.} We use bold symbols when we consider them as signals in time. $|\cdot|$ indicates vector two-norm and $\|\cdot\|$ indicates matrix two-norm.

\vspace{-0.5em}

\section{Background}
\label{sec:background}

% \jcnote{We will write everything in single-input single-output case and will consider multi-input in the later section. We will keep our exposition for single-output case since we are considering a single CBF throughout the paper.}

Consider a control-affine system
\begin{equation}
    \dot{\traj}(t) = f(\traj(t)) + g(\traj(t)) \ctrl(t),\quad \traj(0) = x_0,
    \label{eq:system}
\end{equation}
where $\traj(t)\!\in\!\R^n$, $\ctrl(t)\!\in\!\R^m$, $f:\!\R^n\!\rightarrow\!\R^n$, $g:\!\R^n\!\rightarrow\!\R^{n \times m}$. We assume that $f, g$ are Lipschitz continuous with respect to $x$ and $u$, which is required for the forward completeness of the trajectory. We consider the single-input case ($m\!=\!1$) in Section \ref{sec:minimum-phase-cbf} and consider the multi-input case ($m\!\ge\!2$) in Section \ref{sec:multi-input}.

Next, we consider a scalar function $h:\R^n \rightarrow \R$ of the state $x$, which is considered an output function in the context of IO linearization with the goal of driving $h(x)$ to zero, and it is considered the CBF in the context of CBF-constrained dynamics with the goal of maintaining $h(x)$ nonnegative for all time. We will keep our exposition for this single-\textit{output} case throughout the paper since we focus on a single CBF constraint in this work.

Given the function $h$, we define the notion of relative degree.

\begin{definition}[Relative degree]
\label{def:relative-degree}
A continuously differentiable function $h:\R^n \rightarrow \R$ is said to \textit{have a relative degree} $r$ \textit{at} $x_0 \in \R^n$ if for all $i = 1, \cdots, m$, and $k = 0, \cdots, r-2$,
\begin{equation}
    L_{g_i} L_f^{k} h(x) = 0,
\end{equation}
for all $x$ in an open set around $x_0$, and the row vector
\begin{equation}
    L_{g} L_f^{r-1} h(x) = \left[ L_{g_1} L_f^{r-1} h(x) \quad \cdots \quad L_{g_m} L_f^{r-1} h(x) \right]
\end{equation}
has a non-zero element at $x_0$.    
\end{definition}
\vspace{-0.25em}
In this work, we consider the case where $r < n$ in which the internal dynamics will appear as we will see next.

\vspace{-0.5em}
\subsection{Input-Output (IO) Linearization}
\label{subsec:io-linearization}

Now, we briefly introduce the IO linearization problem, where one is interested in regulating the output $y:=h(x)$ to zero.
% We focus on the single input single output (SISO) problem since when we move on to CBF, we only consider a single CBF constraint.

If $h$ has relative degree $r$, we have 
\begin{equation}
    y^{(k)}=L^{k}_f h(x) \;\;\text{for all} \;\;k < r,
    \label{eq:io-linearization-output-derivative}
\end{equation}
where $u$ does not appear, and the first time the control term appears in the high-order time derivative of $y$ is 
\begin{equation}
    y^{(r)} = L_f^{r} h(x) + L_g L_f^{r -1} h(x) u.
    \label{eq:io-linearization}
\end{equation}
We define \vspace{-0.25em}
\begin{equation}
\label{eq:output-vector}
    \xi(x) := \left[\cbf(x), L_f \cbf(x), \cdots, L_f^{r-1} \cbf(x) \right]^\top \in \R^r, \vspace{-0.25em}
\end{equation}
as the \textit{output derivative vector} of $\cbf$, and $\nu : = y^{(r)}$ as the \textit{virtual control input} for IO linearization.
% \begin{equation}
%     \nu : = y^{(\gamma)} = L_f^{\gamma} h(x) + L_g L_f^{\gamma -1} h(x) u,
%     \label{eq:io-linearization}
% \end{equation}
% IO linearization rule:
% \begin{equation}
%     u = \frac{1}{L_g L_f^{\gamma -1} h(x)}\left(-L_f^{\gamma} h(x) + \nu \right),
%     
% \end{equation}
%where $\nu$ is the virtual control input for IO linearization. 
An appropriate $u$ that sets $\nu$ to a desired value can always be selected, since $L_g L_f^{r -1} h(x)$ has a non-zero element according to Definition \ref{def:relative-degree}. In particular, in the single-input case, the IO linearization rule can be determined as \vspace{-0.25em}
\begin{equation}
u = (L_g L_f^{r - 1} h(x))^{-1}(-L_f^r h (x) + \nu).
\label{eq:io-linearization-siso} \vspace{-0.25em}
\end{equation}

% Combining \eqref{eq:io-linearization-output-derivative} and the definition of $\nu$, we get the dynamics of the output derivative vector $\xi$ as
% \begin{equation}
%     \dot{\xi} = A \xi + B \nu,
% \label{eq:io-output-dynamics}    
% \end{equation}
% with
% {\small
% \begin{equation*}
%     A := \begin{bmatrix} 0 & 1 & 0 & \cdots & 0 \\
%     0 & 0 & 1 & \cdots & 0 \\
%     \; & \; & \ddots & \ddots & \; \\
%     0 & \; & \cdots & 0 & 1 \\    
%     0 & \; & \cdots & 0 & 0  \end{bmatrix},\;\; B = \begin{bmatrix} 0 \\ \vdots \\ 0 \\ 1\end{bmatrix}.
% \end{equation*}}

To regulate the output to zero, we can design $\nu$ as the linear feedback of the output derivative vector: \vspace{-0.25em}
\begin{equation}
    \nu = - k^\top \xi(x),
    \label{eq:io-virtual-input-law} \vspace{-0.25em}
\end{equation}
where $k \in \R^{r}$ is the feedback gain vector. Under \eqref{eq:io-virtual-input-law}, the closed-loop dynamics of the output becomes \vspace{-0.25em}
\begin{equation}
    \dot{\xi} = A_k \xi,
    \label{eq:io-output-dynamic-closed} \vspace{-0.25em}
\end{equation} 
where
\begin{equation}
    A_k := \resizebox{.4\hsize}{!}{$\displaystyle \begin{bmatrix} 0 & 1 & 0 & \cdots & 0 \\
    0 & 0 & 1 & \cdots & 0 \\
    \; & \; & \ddots & \; & \; \\
    0 & 0 & \cdots & 0 & 1 \\    
    -k_1 & -k_2 & \cdots & -k_{r-1} & -k_r \end{bmatrix}$}. \vspace{-0.25em}
\label{eq:A_xi}
\end{equation}

\noindent The feedback gain $k$ is chosen such that $A_k$ and the polynomial $s^r \!+\!k_r s^{r-1}\!+\!\cdots \!+\!k_1$ are Hurwitz, resulting in the output dynamics to be stabilized to the equilibrium $\xi_e\!=\!0$ where $y$ becomes zero.

Since the output dynamics \eqref{eq:io-output-dynamic-closed} is defined in an $r$-dimensional space, there are $n\!-\!r$ remaining dimensions of the state that are not described by the output dynamics. Note that if $r=n$, we do not have such remaining dimensions and can achieve full-state linearization \cite[Sec. 9.2.1]{sastry2013nonlinear}. To define the coordinates that describe these remaining dynamics, we consider a coordinate system transformation that is a diffeomorphism, defined by $n-r$ functions $\eta_1(x), \cdots, \eta_{n-r}(x)$ satisfying the following condition:
\begin{condition}[Valid diffeomorphic change of coordinates]\label{cond:nonsingular} The Jacobian of $[\xi^\top, \eta^\top]^\top$, \vspace{-0.5em}
$$
\begin{bmatrix} d \xi \\ d \eta \end{bmatrix} \in \mathbb{R}^{n \times n}, \vspace{-0.25em}
$$
is nonsingular at $x_0$. Under this condition, a local coordinate transformation $\Psi_\xi: x \rightarrow [\xi^\top, \eta^\top ]^\top$ is a local diffeomorphism at a small neighborhood of $x_0$ \cite[Theorem 3.35]{sastry2013nonlinear}.
\end{condition}

If Condition \ref{cond:nonsingular} is satisfied, we consider $\eta \in \R^{n-r}$ as the \textit{internal state} at $x_0$. We discuss the minimum phase conditions for IO linearization for the single-input case in Section \ref{subsec:single-input-io}.

\vspace{-0.5em}

\subsection{Control Barrier Functions with General Relative Degree}
\label{subsec:cbf}

Next, we introduce the concept of Control Barrier Functions with general relative degree, that can be used to design a safety filter for the system \eqref{eq:system}. Consider a safety constraint that requires the trajectory to stay inside a \textit{constraint set} $\safeset$ indefinitely, given as 
\begin{equation}
\traj(t) \in \safeset:= \{x\in\R^n \;|\; \cbf(x) \ge 0\}, \;\;\text{for all}\;\; t \ge 0,
\label{eq:safety-spec}
\end{equation}
where $\safeset$ is given as the zero-superlevel set of $h$. If $h$ satisfies the following definition, we can design a CBF-safety filter that ensures \eqref{eq:safety-spec}.

\begin{definition}[Exponential Control Barrier Function \& barrier constraint \cite{nguyen2016exponential}] \label{def:ho-cbf} A function $\cbf: \R^n \rightarrow \R$ is called an \textit{(exponential)} \textit{CBF with relative degree $r$} if $h$ has a relative degree $r$ for all $x \in \safeset= \{x \;|\;\cbf(x) \ge 0\}$, $\frac{\partial h}{\partial x}\neq 0$ for all $x \in \{x\; | \; \cbf(x) = 0\}$, and
\begin{equation}
    \sup_{u \in \R^m} \big[ \underbrace{L_f^r \cbf(x) +  L_g L_f^{r - 1} \cbf(x) u}_{\cbf^{(r)}(x, u)} + k^\top \xi(x) \big] \ge 0
\end{equation}
is satisfied for all $x \in \safeset$, where $\xi$ is the \textit{CBF output derivative vector} defined in \eqref{eq:output-vector} and $k$ is a constant $r$-dimensional vector $k = [k_1, \cdots, k_r]^\top$ such that the roots of the polynomial $s^r + k_r s^{r-1} + \cdots + k_1$ are negative reals.
% , where 
% \begin{equation}
%     \xi(x) := \left[\cbf(x), L_f \cbf(x), \cdots, L_f^{r-1} \cbf(x) \right]^\top \in \R^r.
% \end{equation}
We also define
\begin{equation}
\mu(x, u) := L_f^r \cbf(x) +  L_g L_f^{r - 1} \cbf(x) u + k^\top \xi(x) \ge 0,
\label{eq:barrier-constraint}
\end{equation}
as the (exponential) \textit{barrier constraint} associated with the CBF. According to the definition of the CBF, for all $x \in \safeset$, the barrier constraint must be feasible. We denote the left hand side of \eqref{eq:barrier-constraint}, $\mu(x, u)$, as the \textit{CBF virtual control input}.
\end{definition}

Note that $\mu(x, u)$ captures the slackness (if positive) or violation (if negative) of the barrier constraint for a chosen $u$ at a given state $x$. The constraint is saturated if $\mu = 0$.

Whereas $k$ was the feedback gain in the IO linearization, here, $k$ is the constraint coefficient vector associated with the CBF. Due to the condition on $k$ in Definition \ref{def:ho-cbf}, we can denote the negative real roots of the polynomial $s^r + k_r s^{r-1} + \cdots + k_1=(s+\gamma_1 )(s+\gamma_2 )\cdots (s + \gamma_r)$ as $-\gamma_1, \cdots, -\gamma_r$, where $\gamma_i > 0$ for $i = 1, \cdots, r$. Without loss of generality, we set $\gamma_r$ to
$\gamma_{\min}:=\min_i \gamma_i$ by changing the orders of $\gamma_i$'s. Now, consider
\vspace{-1em}
\begin{align}
\phi_1 & := h(x), \nonumber \\
\phi_2 & := \dot{\phi}_1 + \gamma_1 \phi_1 \;\;(\resizebox{.28\hsize}{!}{$\displaystyle= L_f h(x) + \gamma_1 h(x)$}), \nonumber \\
\phi_3 & := \dot{\phi}_2 + \gamma_2 \phi_2 \;\;(\resizebox{.61\hsize}{!}{$\displaystyle= L_f^{(2)} h(x) + (\gamma_1 + \gamma_2) L_f h(x) + \gamma_1 \gamma_2 h(x)$}), \nonumber \\
\vdots & \nonumber \\
\phi_r & := \dot{\phi}_{r-1} + \gamma_{r-1} \phi_{r-1},
\label{eq:cascade-equations}
\end{align}
and refer to $\phi:= [\phi_1, \cdots, \phi_r]^\top$ as the \textit{cascading constraint vector} of the CBF $h$. A linear transformation from the CBF output derivative vector $\xi$ to the cascading constraint vector $\phi$ exists \cite{xu2016control}, described by
\vspace{-0.25em}
\begin{equation}
 \phi = \XiToPsi \xi, \label{eq:xitophi}
\vspace{-0.5em}
\end{equation}
where \vspace{-0.5em}
\begin{equation}
\XiToPsi=\resizebox{.7\hsize}{!}{$\displaystyle\left[\begin{array}{cccccc}
1 & 0 & 0 & \ldots & 0 & 0 \\
\gamma_1 & 1 & 0 & \ldots & 0 & 0 \\
\gamma_1 \gamma_2 & \gamma_1+\gamma_2 & 1 & \ldots & 0 & 0 \\
\vdots & \vdots & \vdots & \vdots & \ddots & \vdots \\
\prod_i^{r-1} \gamma_i & \ldots & \ldots & \ldots & \sum_{i=1}^{r-1} \gamma_i & 1
\end{array}\right].$}
\label{eq:output2cascade}
\end{equation}
The virtual control input $\mu$ and the barrier constraint on $\mu$ become
\begin{equation}
\label{eq:mu-phi}
    \mu = \dot{\phi}_r + \gamma_r \phi_r \ge 0.
\end{equation}
As can be seen in \eqref{eq:cascade-equations} and \eqref{eq:mu-phi}, the values of $\gamma_i$ determine how aggressively the value of $h(x)$ can decrease. If the values are higher, the barrier constraint becomes less restrictive but more myopic. If the values are smaller, the barrier constraint becomes more restrictive, however, it becomes more precautious about decreasing $h$. % $\gamma_{\min}$, which determines the slowest decaying mode of $h$, will later have an impact on the boundedness property of the CBF-constrained dynamics.

The main result for safety is now presented as below: 

\begin{theorem}\label{thm:HOCBF}
(Safety guarantee of CBF-constrained controller \cite[Theorem 3]{xiao2021high}). For a given exponential CBF $h$ and any state $x \in \safeset$, define the set of control input satisfying the barrier constraint as
\vspace{-0.5em}
\begin{equation}
    \Pi(x) := \{ u \in \R^m |  \; \mu(x, u) \ge 0 \}.
\label{eq:cbf-constraint}
\end{equation}
Then, any Lipschitz continuous controller $\pi:  \R^n \rightarrow \R^m$ such that $\pi(x) \in \Pi(x) \; \forall x \in \safeset$ will render the positive orthant of $\phi$ forward invariant. In other words, for any initial state $x_0$ in
\begin{equation}
 S_{\phi} :=\{x\in\R^n | \phi(x) \in \R^r_{\ge0} \} \subset \safeset, \label{eq:positive_orthant}
\end{equation}
under the control signal $\ctrl(t) \equiv \pi(\traj(t))$, the resulting trajectory satisfies $\traj(t) \in S_{\phi}$, and thus $h(\traj(t)) \ge 0$, for all $t \ge 0$.
\end{theorem}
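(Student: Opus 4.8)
The plan is to exploit the triangular cascade structure built into the constraint vector $\phi$ by \eqref{eq:cascade-equations} and \eqref{eq:mu-phi}, and to propagate nonnegativity \emph{from the bottom component $\phi_r$ upward} to $\phi_1 = h$ using a scalar comparison (Gr\"onwall-type) argument. Fix $x_0 \in S_{\phi}$ and let $\traj(t)$ be the closed-loop trajectory under $\ctrl(t)\equiv\pi(\traj(t))$ on its maximal interval of existence $[0,T_{\max})$; since $f,g,\pi$ are Lipschitz, $\traj(\cdot)$ is $C^1$, and since $h$ admits Lie derivatives up to order $r$, each $\phi_i(\traj(t))$ is continuously differentiable in $t$. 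Rearranging \eqref{eq:cascade-equations} along the trajectory gives $\dot\phi_i = -\gamma_i\phi_i + \phi_{i+1}$ for $i=1,\dots,r-1$ (no control term appears here because $h$ has relative degree $r$, so $L_g L_f^{k}h\equiv 0$ for $k\le r-2$), while \eqref{eq:mu-phi} gives $\dot\phi_r = -\gamma_r\phi_r + \mu(\traj(t),\pi(\traj(t)))$.

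Next, let $[0,\tau)\subseteq[0,T_{\max})$ be the largest subinterval on which $\traj(t)\in\safeset$; it is nonempty since $x_0\in S_{\phi}\subset\safeset$ and $\traj(\cdot)$ is continuous. On $[0,\tau)$ the hypothesis $\pi(x)\in\Pi(x)$ for $x\in\safeset$ yields $\mu(\traj(t),\pi(\traj(t)))\ge 0$. I would then run a finite downward induction on $i=r,r-1,\dots,1$ to show $\phi_i(\traj(t))\ge 0$ on $[0,\tau)$. For $i=r$: $\dot\phi_r\ge -\gamma_r\phi_r$, so the comparison lemma and $\phi_r(x_0)\ge 0$ give $\phi_r(\traj(t))\ge \phi_r(x_0)e^{-\gamma_r t}\ge 0$. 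For the inductive step, if $\phi_{i+1}(\traj(t))\ge 0$ on $[0,\tau)$ then $\dot\phi_i = -\gamma_i\phi_i + \phi_{i+1}\ge -\gamma_i\phi_i$, and again $\phi_i(\traj(t))\ge \phi_i(x_0)e^{-\gamma_i t}\ge 0$ using $\phi_i(x_0)\ge 0$. In particular $h(\traj(t))=\phi_1(\traj(t))\ge 0$ on $[0,\tau)$, hence $\traj(t)\in\safeset$ on $[0,\tau)$ and, by continuity, also at $t=\tau$ whenever $\tau<T_{\max}$ --- contradicting maximality of $\tau$ unless $\tau=T_{\max}$. Therefore $\phi(\traj(t))\in\R^r_{\ge0}$, i.e.\ $\traj(t)\in S_{\phi}$, for all $t\in[0,T_{\max})$; invoking the standing Lipschitz/forward-completeness assumption to get $T_{\max}=\infty$ completes the claim, and $S_{\phi}\subset\safeset$ then gives $h(\traj(t))\ge0$ for all $t\ge0$.

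The computations are routine; the only real subtlety --- and the step I would be most careful about --- is the \emph{direction} of the induction together with the mild circularity it entails. One cannot lower-bound $\dot\phi_i$ by $-\gamma_i\phi_i$ until $\phi_{i+1}\ge 0$ (resp.\ $\mu\ge 0$ for $i=r$) is already known, which forces the bottom-up ordering; and $\mu\ge 0$ is only guaranteed on $\safeset$, so the invariance of $\safeset$ and the sign propagation must be established simultaneously on the maximal interval where $\traj(\cdot)$ stays in $\safeset$, as above. Notably, no transversality/Nagumo boundary analysis is needed: the exponential-CBF cascade turns the set-invariance question into a chain of scalar differential inequalities, which is precisely the mechanism underlying \cite{nguyen2016exponential,xu2016control,xiao2021high}.
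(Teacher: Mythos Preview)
Your argument is correct and is precisely the standard cascade/comparison-lemma proof that underlies the cited results \cite{nguyen2016exponential,xu2016control,xiao2021high}. Note, however, that the paper does not supply its own proof of this theorem: it is stated as a background result and attributed to \cite[Theorem 3]{xiao2021high}, so there is no in-paper proof to compare against. Your write-up, including the bottom-up induction on $i$ and the maximal-interval device to handle the fact that $\mu\ge 0$ is only assumed on $\safeset$, is a faithful reconstruction of the argument those references use.
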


Since $h$ is nonnegative in the positive orthant of $\phi$, i.e., $S_{\phi} \subset \safeset$, Theorem \ref{thm:HOCBF} ensures that \eqref{eq:safety-spec} is satisfied for all $x_0 \in S_{\phi}$; thus, safety is guaranteed under the barrier constraint. To use this theorem to design a safety filter with the CBF, one has to ensure that whenever a desired control command does not satisfy the barrier constraint---i.e., if $u \notin \Pi(x)$---it is overridden with $u \in \Pi(x)$.

\vspace{0.5em}

\begin{remark}
Definition \ref{def:ho-cbf} introduces a special case of the \textit{high-order} CBF, which can be generalized by replacing $\gamma_i \phi_i$ in \eqref{eq:cascade-equations} with general class $\mathcal{K}$ functions \cite{xiao2021high}.
\end{remark}

\section{Single-Input Case}
\label{sec:minimum-phase-cbf}

Before we conduct our main analysis for the single-input case, we first review the original minimum phase condition and its implication to the state boundedness in IO linearization.

\subsection{Background: Minimum Phase Conditions in Single-Input IO Linearization}
\label{subsec:single-input-io}

% We also consider conditions under which we can find the internal state whose dynamics are not actuated by $u$. In order for $u$ to vanish in $\dot{\eta}$, we need the following condition:

% \begin{condition}[Non-actuated internal state]
% \label{cond:nonactuated_internal_state}
% \begin{equation}
%     L_{g_i} \eta_j(x) = 0
% \end{equation}
% for all $x$ in the open set around $x_0$, $i = 1, \cdots, m$, and $j =1, \cdots, n-r$.    
% \end{condition}

In the single-input case with $m=1$, the following holds.

\begin{theorem}
\label{thm:diffeomorphism}
(\!\!\cite[Proposition 5.1.2]{isidori2013nonlinear}) Let $h$ have relative degree $r$ at $x_0$. Then, there exists $n-r$ functions $\eta_1(x), \cdots, \eta_{n-r}(x)$ such that Condition \ref{cond:nonsingular} holds and there exists an open set around $x_0$ such that for all $x$ in the set, $L_{g} \eta_j(x) = 0$ for $j =1, \cdots, n-r$.
\end{theorem}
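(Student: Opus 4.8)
The plan is to construct the internal coordinates directly as first integrals of the vector field $g$ via the flow-box (Frobenius) theorem, and then verify the nonsingularity in Condition \ref{cond:nonsingular} by a dimension count on the annihilator of $g(x_0)$. I would first record the two facts that the relative-degree hypothesis supplies near $x_0$ (with $m=1$): (i) $L_g L_f^{k} h \equiv 0$ for $k=0,\dots,r-2$ on an open set around $x_0$, while $L_g L_f^{r-1}h(x_0)\neq 0$ (Definition \ref{def:relative-degree}); and (ii) the covectors $dh(x_0), dL_f h(x_0), \dots, dL_f^{r-1}h(x_0)$ — that is, the rows $d\xi_1,\dots,d\xi_r$ of the differential of $\xi$ in \eqref{eq:output-vector} — are linearly independent. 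Fact (ii) is the classical fact underpinning the normal form: it follows from (i) by pairing $dL_f^{k}h$ with the vector fields $g,\mathrm{ad}_f g,\dots,\mathrm{ad}_f^{r-1}g$, using the identity $\langle dL_f^{k}h,\mathrm{ad}_f^{j}g\rangle = 0$ for $j+k\le r-2$ and $\langle dL_f^{k}h,\mathrm{ad}_f^{j}g\rangle = (-1)^{j}L_g L_f^{r-1}h$ for $j+k = r-1$; the resulting $r\times r$ matrix of pairings is anti-triangular with nonzero anti-diagonal at $x_0$, hence invertible, which forces the $d\xi_i$ to be independent. (One may instead simply invoke the corresponding step in \cite{isidori2013nonlinear}.) Note that (i) at $k=r-1$ already gives $g(x_0)\neq 0$.

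Next, since $g(x_0)\neq 0$, the one-dimensional distribution $\mathrm{span}\{g\}$ is nonsingular on a neighborhood of $x_0$ and, being one-dimensional, involutive; the flow-box theorem then provides coordinates $z=(z_1,\dots,z_n)$ on a neighborhood $U$ of $x_0$ in which $g=\partial/\partial z_n$. In these coordinates a function $\lambda$ satisfies $L_g\lambda=0$ on $U$ exactly when $\partial\lambda/\partial z_n\equiv 0$, i.e. when $\lambda$ is a function of $(z_1,\dots,z_{n-1})$ only. By (i), $\xi_1,\dots,\xi_{r-1}$ are such functions, and by (ii) their differentials are independent at $x_0$; hence, restricted to a slice $\{z_n=\text{const}\}$, they are functionally independent functions of $(z_1,\dots,z_{n-1})$, and by the rank theorem we may adjoin $n-r$ of the coordinate functions among $z_1,\dots,z_{n-1}$ to complete $\xi_1,\dots,\xi_{r-1}$ to a local coordinate system on the slice. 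Taking $\eta_1,\dots,\eta_{n-r}$ to be these adjoined functions, they depend only on $(z_1,\dots,z_{n-1})$, so $L_g\eta_j\equiv 0$ on a (possibly smaller) neighborhood of $x_0$.

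It then remains to check Condition \ref{cond:nonsingular}. The $n-1$ covectors $d\xi_1(x_0),\dots,d\xi_{r-1}(x_0),d\eta_1(x_0),\dots,d\eta_{n-r}(x_0)$ are linearly independent, being the differentials of $n-1$ functions that form part of a coordinate system on the slice, and each annihilates $g(x_0)$. On the other hand $\langle d\xi_r(x_0),g(x_0)\rangle = L_g L_f^{r-1}h(x_0)\neq 0$, so $d\xi_r(x_0)$ is not in the span of the previous $n-1$ covectors; therefore $d\xi_1,\dots,d\xi_r,d\eta_1,\dots,d\eta_{n-r}$ are $n$ linearly independent covectors at $x_0$, i.e. the Jacobian of $[\xi^\top,\eta^\top]^\top$ is nonsingular at $x_0$. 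This is precisely Condition \ref{cond:nonsingular}, and we have already arranged $L_g\eta_j=0$ near $x_0$, which completes the argument.

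The main obstacle is establishing ingredient (ii) — the linear independence of $dh,\dots,dL_f^{r-1}h$ at $x_0$ — since this is the one place where the precise algebraic structure of the iterated Lie derivatives enters, rather than just Frobenius plus linear algebra; everything afterward is bookkeeping. If one is content to quote (ii) from \cite{isidori2013nonlinear} (it is part of the proof of the cited proposition), the remaining steps are short and essentially self-contained.
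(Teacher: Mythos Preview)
Your proof is correct and follows the classical argument. Note, however, that the paper does not supply its own proof of this theorem: it is stated as a quotation of \cite[Proposition~5.1.2]{isidori2013nonlinear} and used as background. What you have written is essentially the proof given in Isidori---straighten the single vector field $g$ via the flow-box theorem, observe that $\xi_1,\dots,\xi_{r-1}$ are first integrals of $g$, complete them to $n-1$ first integrals, and then use $L_gL_f^{r-1}h(x_0)\neq 0$ to show $d\xi_r$ is transverse to the annihilator of $g(x_0)$---so there is no meaningful methodological difference to report.
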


The second condition implies that $u$ vanishes in $\dot{\eta}$, thus, the internal dynamics is not actuated by $u$ with $\eta$ defined in Theorem \ref{thm:diffeomorphism}. Then, the dynamics under IO linearization can be represented as:

\begin{definition}[Normal form for IO linearization]
\label{def:io-normal-form}
\begin{subequations}
\begin{align}
    \dot{\xi} & = A_k \xi, \label{eq:normal-form-io-output}\\
    \dot{\eta} & = \qio (\eta, \xi), \label{eq:normal-form-io-internal}
\end{align}
\label{eq:normal-form-io}    
\vspace{-1em}
\end{subequations}

\noindent where \eqref{eq:normal-form-io-output} is the output dynamics, and \eqref{eq:normal-form-io-internal} is the internal dynamics where the control $u$ vanishes. The output derivative vector $\xi$ can be considered a virtual control input for the internal dynamics.
\end{definition}

Next, as the objective in IO linearization is to regulate $\xi$ to zero, we consider a special case of the internal dynamics when $\xi\equiv \xi_e := 0$ and define the following:

\begin{definition}
\label{def:io-zero-dynamics}
(Zero dynamics manifold \& zero dynamics \cite[Section 9.2.2]{sastry2013nonlinear}) The \textit{zero dynamics manifold} of the system \eqref{eq:system} for the output $y=\cbf(x)$ is given by
\begin{equation}
    \mathcal{M}_{\text{io}} := \{x \; | \; \xi \equiv \xi_e = 0 \},
\end{equation}
on which the output is held identically zero. This is a manifold of dimension $n-r$. The dynamics of the internal coordinates constrained on $\mathcal{M}_{\text{io}}$, is given as
\begin{equation}
    \dot{\eta} = \qio (\eta, 0),
    \label{eq:io-zero-dynamics}
\end{equation}
which is referred to as the \textit{zero dynamics} of the system \eqref{eq:system} for the output $y=\cbf(x)$.
\end{definition}

\begin{definition}
\label{def:min-phase-io}
(Minimum phase conditions for IO linearization \cite[Definition 9.10]{sastry2013nonlinear}). The system \eqref{eq:system} is said to satisfy the local (global) asymptotic (exponential) \textit{minimum phase condition} for the output $y\!=\!\cbf(x)$ at $x_0$ if an equilibrium point $\eta_e$ of the zero dynamics \eqref{eq:io-zero-dynamics} is locally (globally) asymptotically (exponentially) stable.    
\end{definition}

The main implication of the above minimum phase condition is as follows:

\begin{theorem} (\!\!\cite[Theorem 9.13]{sastry2013nonlinear}) If the system \eqref{eq:system} is locally exponentially minimum phase for the output $y=\cbf(x)$, then the IO linearization control law given by \eqref{eq:io-linearization-siso} and \eqref{eq:io-virtual-input-law}, with $k$ selected such that $A_k$ is Hurwitz, results in $(\xi_e, \eta_e)$ to be locally exponentially stable.
\label{thm:siso-min-phase}
\end{theorem}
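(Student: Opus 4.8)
The plan is to reduce the claim to the standard stability results on cascade (triangular) systems, exploiting the normal form in Definition \ref{def:io-normal-form}. The closed-loop dynamics under the IO linearization law \eqref{eq:io-linearization-siso}–\eqref{eq:io-virtual-input-law} is exactly \eqref{eq:normal-form-io}: $\dot{\xi} = A_k \xi$ with $A_k$ Hurwitz, and $\dot{\eta} = \qio(\eta,\xi)$ driven by $\xi$. The point $(\xi_e,\eta_e) = (0,\eta_e)$ is an equilibrium because $A_k \xi_e = 0$ and, by the definition of the zero dynamics \eqref{eq:io-zero-dynamics}, $\qio(\eta_e,0) = 0$. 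So what must be shown is that local exponential stability of $\eta_e$ for the unforced zero dynamics $\dot{\eta}=\qio(\eta,0)$, together with exponential stability of the $\xi$-subsystem, upgrades to exponential stability of the full interconnection.

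First I would invoke a converse Lyapunov theorem for the zero dynamics: since $\eta_e$ is locally exponentially stable for $\dot{\eta} = \qio(\eta,0)$ and $\qio$ is (locally Lipschitz, hence) smooth enough, there is a Lyapunov function $V(\eta)$ on a neighborhood of $\eta_e$ with $c_1|\eta-\eta_e|^2 \le V(\eta) \le c_2|\eta-\eta_e|^2$, $\frac{\partial V}{\partial \eta}\qio(\eta,0) \le -c_3|\eta-\eta_e|^2$, and $|\frac{\partial V}{\partial \eta}| \le c_4|\eta-\eta_e|$. Second, for the $\xi$-subsystem, since $A_k$ is Hurwitz take $W(\xi) = \xi^\top P \xi$ with $A_k^\top P + P A_k = -I$, giving $\dot{W} \le -|\xi|^2$ along $\dot\xi = A_k\xi$ and $|\xi(t)| \le M e^{-\lambda t}|\xi(0)|$ for suitable $M,\lambda>0$. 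Third, along trajectories of the full system, $\dot{V} = \frac{\partial V}{\partial\eta}\qio(\eta,\xi) = \frac{\partial V}{\partial\eta}\qio(\eta,0) + \frac{\partial V}{\partial\eta}\big(\qio(\eta,\xi) - \qio(\eta,0)\big) \le -c_3|\eta-\eta_e|^2 + c_4 L\, |\eta-\eta_e|\,|\xi|$, using local Lipschitz continuity of $\qio$ in its second argument (constant $L$). Then I would form the composite Lyapunov function $U = V(\eta) + \beta\, W(\xi)$ for a small $\beta>0$, or alternatively argue directly via the cross-term bound and a completion-of-squares / Young's inequality step, to conclude $\dot{U} \le -\tfrac{c_3}{2}|\eta-\eta_e|^2 - \tfrac{\beta}{2}|\xi|^2$ on a neighborhood of $(0,\eta_e)$, which yields local exponential stability of $(\xi_e,\eta_e)$. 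This is essentially the argument behind Lemma 4.6/Corollary 9.1-type results on cascades in the nonlinear systems literature, and the statement cites \cite[Theorem 9.13]{sastry2013nonlinear} for exactly this.

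The main obstacle is a localization/invariance bookkeeping issue rather than a deep one: the normal form and the bounds above hold only on a neighborhood of $x_0$ (equivalently of $(0,\eta_e)$), so one must check that the sublevel sets of $U$ used in the estimate are contained in that neighborhood and are forward invariant, so that trajectories started close enough to $(\xi_e,\eta_e)$ never leave the region where all the estimates are valid. Handling this carefully — choosing the initial ball and the constant $\beta$ so that the $\qio$-Lipschitz bound, the converse-Lyapunov bounds, and the diffeomorphism validity all coexist on one invariant set — is the delicate part; everything else is the routine cascade computation sketched above.
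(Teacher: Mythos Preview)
Your proposal is correct. The paper itself does not give a proof of this theorem---it is cited as background from \cite{sastry2013nonlinear}---but the paper does prove its CBF analog (Theorem~\ref{thm:min-phase-simple-local}) and explicitly says that proof ``resembles the proof of Theorem~\ref{thm:siso-min-phase}.'' That analogous proof takes a different, shorter route than yours: it linearizes the cascade at the equilibrium, observes that the Jacobian is block lower triangular with Hurwitz diagonal blocks ($A_k$ and $\partial\qio/\partial\eta$), and invokes Lyapunov's indirect method. Your composite-Lyapunov argument (converse Lyapunov on the zero dynamics, quadratic Lyapunov on $\xi$, cross-term bound via Lipschitz continuity and completion of squares) is the other standard proof and is in fact closer to what the paper does later for the \emph{global} version (Theorem~\ref{thm:min-phase-simple-global}, Appendix~\ref{appendix:simple-global-proof}), where linearization no longer suffices. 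The linearization route is shorter and sidesteps the localization/invariance bookkeeping you correctly flag as the delicate part; your route is more constructive, yields an explicit Lyapunov function, and transfers more directly to the global and feedback-stabilizable settings treated afterward.
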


In other words, Theorem \ref{thm:siso-min-phase} states that the minimum phase condition in Definition \ref{def:min-phase-io} is a sufficient condition for the coupled dynamics in \eqref{eq:normal-form-io} to be exponentially stable. This also implies that the system state will stay bounded under the IO linearization control law. In contrast, when the system is under a nonminimum phase condition (minimum phase conditions not satisfied), we do not have a guarantee that the system state will stay bounded (even when $\xi$ exponentially converges to 0). In many practical nonminimum phase examples, we can observe system states associated with the internal state exploding under the IO linearization control law.

\vspace{-0.5em}

\subsection{Two Representations of CBF-constrained Dynamics}

Next, returning to the analysis of the CBF-constrained dynamics, similarly to IO linearization, we present representations of the CBF-constrained dynamics that divide the dynamics into CBF output and internal dynamics. 
% and look into the features of the CBF output dynamics, which serve as preliminaries for the main analysis in the next section.
We will present two ways of representing the CBF-constrained dynamics. 
% First, we present an assumption that allows such representations:

% \begin{assumption}[Diffeomorphic change of coordinates]
% \label{assumpt:cbf-dffeomorphism}
% For the CBF $h$ and its \barrierconstraint, for all $x_0 \in S_{\phi}$, the diffeomorphic change of coordinates $\Psi_{\xi}$ is well defined based on Theorem \ref{thm:diffeomorphism}.
% \end{assumption}

% \jcnote{Assumption \ref{assumpt:cbf-dffeomorphism} can be satisfied if Assumption \ref{assump:involutive} is satisfied for all $x_0 \in S_{\phi}$.} Under this assumption, we can consider two representations of the original system dynamics under the barrier constraint.

\begin{definition}[Normal form for CBF-constrained dynamics] 
\label{def:cbf-normal-form}
In the \textit{normal form}, we describe the original system \eqref{eq:system} under the transformed coordinates $(\xi, \eta)$ as
\begin{subequations}
\begin{align}
    \dot{\xi} & = A_k \xi + B \mu  \label{eq:cbf-output-dynamics-normal}\\ 
    \dot{\eta} & = \pcbf (\eta, \xi), \label{eq:cbf-internal-dynamics-normal}
\end{align}
\label{eq:normal-form}
\end{subequations}
\noindent\!\!\!where $A_k$ is given by \eqref{eq:A_xi}, $B:=[0\; \cdots\; 0\; 1]^\top$,
and $\pcbf:\R^{n-r} \times \R^{r} \rightarrow \R^{n-r}$ is the internal dynamics. Here, $\eta$ must satisfy the conditions in Theorem \ref{thm:diffeomorphism}.
% \begin{equation}
%     \dot{\xi} = \underbrace{\begin{bmatrix} 0 & 1 & 0 & \cdots & 0 \\
%     0 & 0 & 1 & \cdots & 0 \\
%     \; & \; & \ddots & \; & \; \\
%     0 & 0 & 0 & \cdots & 1 \\    
%     \; & \; & -k^\top & \; &\; \end{bmatrix}}_{A_{\xi}} \xi + \underbrace{\begin{bmatrix} 0 \\ \vdots \\ 0 \\ 1\end{bmatrix}}_{B_{\xi}} \mu,
% \label{eq:normal-form-output}
% \end{equation}
% and 
% \begin{equation}
%     \dot{\eta} = q_{\xi}(\eta, \xi).
% \label{eq:output-form-zero}
% \end{equation}
\end{definition}

Equation \eqref{eq:cbf-output-dynamics-normal} results from checking that
\begin{equation*}
    \dot{\xi}_r = L_f^r \cbf(x) +  L_g L_f^{r - 1} \cbf(x) u = \mu(x, u) - k^\top \xi(x).
\end{equation*}
While the normal form for IO linearization in \eqref{eq:normal-form-io} defines the \textit{closed-loop} output dynamics under the linear feedback law in \eqref{eq:io-virtual-input-law}, the normal form in \eqref{eq:normal-form} for the CBF-constrained dynamics defines an \textit{open-loop} output dynamics with a restriction on the control, $\mu \ge 0$, from the barrier constraint. This results in the additional $B \mu$ term in \eqref{eq:cbf-output-dynamics-normal}, compared to \eqref{eq:normal-form-io-output}.

From the coordinate transformation defined in \eqref{eq:xitophi} between $\xi$ and $\phi$, we can also consider an equivalent representation of the systems under the coordinates $(\phi, \eta)$. Since $T$ in \eqref{eq:output2cascade} is invertible, if $\Psi_{\xi}: x \rightarrow [\xi^\top, \eta^\top ]^\top$ is a diffeomorphism, the coordinate transformation $\Psi_{\phi}:x \rightarrow [\phi^\top, \; \eta^\top]^\top$ is also a diffeomorphism. This leads to the second representation of the CBF-constrained dynamics, where the normal form in Definition \ref{def:cbf-normal-form} is transformed into the following form:

\begin{definition}[Cascading constraint form]
\begin{subequations}
\begin{align}
\dot{\phi} & = A_{\gamma} \phi + B \mu  \label{eq:cascade-form-output} \\
\dot{\eta} & = \qcbf (\eta, \phi), \label{eq:cascade-form-internal}
\end{align}
\label{eq:cascade-form}    
\end{subequations}
% \begin{equation}
%     \begin{bmatrix}\dot{\phi} \\ \dot{\eta} \end{bmatrix} = \begin{bmatrix}A_{\gamma} \phi + B_\phi \mu  \\ q_{\phi}(\eta, \phi) \end{bmatrix},
% \label{eq:cascade-form}    
% \end{equation}
where 
\vspace{-0.5em}
\begin{equation*}
    A_{\gamma}:= T A_{k} T^{-1} = \resizebox{0.45\hsize}{!}{$\displaystyle\begin{bmatrix} -\gamma_1 & 1 & 0  & 0 & \cdots & 0 \\
                                                    0 & -\gamma_2 & 1 & 0 & \cdots & 0 \\
                                                    \; & \; & \ddots & \; & \; & \; \\
                                                    0 & \cdots & \; & 0 & -\gamma_{r-1} & 1\\
                                                    0 &  \; & \cdots & \; & 0 & -\gamma_r 
                                                    \end{bmatrix}$},
\end{equation*}
and $B = \XiToPsi B = [0\;\cdots\; 0\; 1]^\top$ determine the dynamics of $\phi$, and $\qcbf(\eta, \phi):= \pcbf (\eta, \XiToPsi^{-1} \phi)$ is the internal dynamics, where $\phi$ can be considered as its virtual control input.
\end{definition}

We call \eqref{eq:cbf-output-dynamics-normal}, \eqref{eq:cascade-form-output} the \textit{CBF output dynamics} and \eqref{eq:cbf-internal-dynamics-normal}, \eqref{eq:cascade-form-internal} the \textit{CBF internal dynamics}. With the assumption that $h$ is Lipschitz continuous, as $f, g$ are Lipschitz, the internal dynamics $\qcbf$ is also Lipschitz in $\eta$, $\phi$. Its Lipschitz constants are denoted as $l_\eta$, $l_\phi$ with respect to $\eta$, $\phi$, respectively.

\vspace{-0.6em}

\subsection{Features of the CBF Output Dynamics}
\label{subsec:features}

We next examine the CBF output dynamics in \eqref{eq:cascade-form-output}. The eigenvalues of $A_k$ and $A_{\gamma}$ are $-\gamma_1, \cdots, -\gamma_r$, implying that both matrices are total negative, and thus Hurwitz and invertible. As such, $\gamma_{\min}$ becomes the slowest rate of convergence of the output dynamics. Next, define
\vspace{-0.35em}
\begin{equation}
    \Gamma := -A_{\gamma}^{-1}B =\resizebox{0.27\hsize}{!}{$\displaystyle\begin{bmatrix} (\gamma_1 \gamma_2 \gamma_3 \cdots \gamma_r) ^ {-1} \\ (\gamma_2 \gamma_3 \cdots \gamma_r)^{-1} \\ \vdots \\ \gamma_r^{-1} \end{bmatrix}$}.
    \label{eq:def-Gamma} \vspace{-0.35em}
\end{equation}
From \eqref{eq:cascade-form-output}, for any $\mu \in \R$, $\phi_e = \Gamma \mu$ is a stable controlled equilibrium of the CBF output dynamics by seeing that $\dot{\phi}|_{\phi_e} = A_{\gamma} \Gamma \mu + B \mu = 0$. Thus, \vspace{-0.35em}
\begin{equation}
\Phi_e:=\{\phi \;|\; \exists \mu \ge 0, \text{s.t.}\; \phi = \Gamma \mu\}
\label{eq:equilibria-manifold} \vspace{-0.35em}
\end{equation}
is a straight line in the space of $\phi$ that the output dynamics converges to. On this line, 
we can check that 
% \begin{equation}
% \phi_{k} = \gamma_{k-1} \phi_{k-1}, \quad \xi_{k} = 0.
% \end{equation}
$\phi_{k} = \gamma_{k-1} \phi_{k-1}$ for all $k=2, \cdots, r$.

Given an almost-everywhere differentiable virtual control input signal $\trajmu(t)$, consider \vspace{-0.35em}
\begin{equation}
    \trajphid(t) := \Gamma \trajmu(t), \vspace{-0.35em}
\end{equation}
a signal on $\Phi_e$ moving in time, and define \vspace{-0.35em}
\begin{equation}
    \trajphidiff(t) := \trajphi(t) - \trajphid(t), \vspace{-0.35em}
\end{equation}
the error between $\trajphi(t)$ and $\trajphid(t)$.
Then, we get \vspace{-0.35em}
\begin{align}
    \dot{\trajphidiff}(t) &= \dot{\trajphi}(t) - \dot{\trajphid}(t) = A_{\gamma} \trajphi(t) + B_\phi \trajmu(t) - \Gamma \dot{\trajmu}(t)  \nonumber \\
    & = A_{\gamma} (\trajphi(t) - \trajphid(t)) + \underbrace{(A_{\gamma} \trajphid(t) + B_\phi \trajmu(t))}_{=0} - \Gamma \dot{\trajmu}(t)  \nonumber \\
    & = A_{\gamma} \trajphidiff(t) - \Gamma \dot{\trajmu}(t),
\label{eq:error-dynamics}
\vspace{-0.35em}
\end{align}
which describes the \textit{error dynamics of} $\phi$. If $\trajmu(\cdot)$ is constant, the error $\trajphidiff$ converges exponentially to zero since $A_\gamma$ is Hurwitz. Moreover, we can prove that $\trajphidiff$ stays bounded when $\trajmu(\cdot)$ is Lipschitz continuous in time (Lemma \ref{lemma:boundedness-phidiff} in Appendix \ref{appen:lemmas}). 

Finally, a crucial observation that can be made in \eqref{eq:cascade-form-output} is the nonnegativity of $\mu$ and $\phi_i$'s under the \barrierconstraint. The nonnegativity of $\mu$ results directly from the barrier constraint. The nonnegativity of $\phi_i$'s results from Theorem \ref{thm:HOCBF}. 
% This observation is summarized as the following proposition:

% \begin{proposition} 
% \label{prop:nonnegativity-output}
% Under the dynamics subjected to the \barrierconstraint~\eqref{eq:barrier-constraint}, for any initial state $x_0 \in S_{\phi} \!=\!\{x\in\R^n \; | \;\phi(x) \in \R^r_{\ge0} \}$, we have $\trajmu(t)\ge0$ and $\trajphi(t) \in \R^r_{\ge0}$ for all $t \ge 0$.
% \end{proposition}

% \vspace{1em}

To summarize the key features of the CBF output dynamics, under the barrier constraint, the positive orthant of $\phi$, $S_\phi$, is invariant and contains a line consisting of stable equilibria $\Phi_e$.

\vspace{-0.6em}

\subsection{CBF Minimum Phase Conditions}

\label{subsec:minimum-phase}

Now, using the cascading constraint form of the CBF-constrained dynamics, we examine the CBF internal dynamics in \eqref{eq:cascade-form-internal}. If we consider $\phi$ as a virtual control input, the internal dynamics is driven only by a nonnegative control signal. This is the key aspect that differentiates the analysis of the CBF-constrained dynamics from the analysis in Section \ref{subsec:single-input-io}.
%, where a similar normal form is defined but there is no nonnegativity constraint in regulating both the output and the internal dynamics. 

% The restriction that the internal dynamics are regulated only by a nonnegative control signal motivates us to investigate under which condition the internal state can stay bounded and not ``blow up''. Such conditions in IO linearization is provided by the minimum phase conditions, for instance, as in Theorem \ref{thm:siso-min-phase}. Here, similarly, we provide a set of minimum phase conditions for the CBF internal dynamics, together with the associated theorem that proves the boundedness of the state trajectory.

We start with a simple setting that is most similar to the minimum phase conditions in the IO linearization. Recall that Definition \ref{def:min-phase-io} is defined as the condition of zero dynamics having a stable equilibrium. The ``zero'' of the zero dynamics in \eqref{eq:io-zero-dynamics} is in fact the equilibrium of the output dynamics, $\xi_e = 0$, that the output derivative vector $\xi$ converges to. 

In the CBF-constrained dynamics, the cascading constraint vector $\phi$ in the CBF output dynamics \eqref{eq:cascade-form-output} does not have to converge to zero. In fact, if $\phi$ converges to 0, it means the system is on the verge of violating safety as $h(x) = 0$, which is undesirable. Instead, as discussed in Section \ref{subsec:features}, we have a continuum of equilibrium on the line $\Phi_e$. Thus, the first case we study is when $\mu$ is fixed to a certain value, $\mu_e\ge0$, which results in the output coordinates $\phi$ to converge to $\phi_e = \Gamma \mu_e$. We define our first minimum phase condition for this case, similarly to Definition \ref{def:min-phase-io} as below:

\begin{definition}[Fixed-virtual-control-input exponential minimum phase condition] 
\label{cond:minimum-phase-simple}
If there exists $\mu_e \ge 0$, such that there exists a locally (globally) exponentially stable equilibrium, $\eta_e$, of the internal dynamics under the fixed output derivative vector value, $\trajphi(t)\!\equiv\!\Gamma \mu_e=\phi_e$, i.e., 
    \begin{equation}
        \dot{\eta} = \qcbf (\eta, \Gamma \mu_e),
        \label{eq:cbf-internal-dynamics-simple}
    \end{equation}    
we say that the CBF $h$ satisfies the \textit{fixed-virtual-control-input local (global) exponential minimum phase condition}.
\end{definition}

% Next, we show that the minimum phase condition defined in Condition \ref{cond:minimum-phase-simple} is a sufficient condition to render the CBF-constrained dynamics to be bounded.

With this CBF minimum phase condition defined, the first result we present is the implication of the local exponential minimum phase condition on the CBF-constrained dynamics:

\begin{theorem}
\label{thm:min-phase-simple-local} (Sufficient condition for local exponential stability of CBF-constrained dynamics)
For the CBF $h$ and the initial state $x_0 \in S_{\phi}$,
if the fixed-virtual-control-input \textit{local} exponential minimum phase condition (in Definition \ref{cond:minimum-phase-simple}) is satisfied,
then, under $\trajmu(t)\equiv \mu_e$, the dynamics is \textit{locally} exponentially stable at $(\phi, \eta) = (\Gamma \mu_e, \eta_e)$, where $\mu_e, \eta_e$ are defined in Definition \ref{cond:minimum-phase-simple}.
\end{theorem}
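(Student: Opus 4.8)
The plan is to treat the closed-loop system under $\trajmu(t)\equiv\mu_e$ as a cascade: the $\phi$-error subsystem is linear and globally exponentially stable, and it drives the $\eta$-subsystem through the Lipschitz interconnection $\qcbf$; I would then certify local exponential stability of the full state $(\trajphi,\eta)$ by a composite Lyapunov function obtained by adding a quadratic form in the $\phi$-error to a converse Lyapunov function for the zero-dynamics-like system \eqref{eq:cbf-internal-dynamics-simple}.

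First, I would analyze the $\phi$-error subsystem and check well-posedness. Fixing $\trajmu\equiv\mu_e$ makes $\phi_e:=\Gamma\mu_e$ a stable controlled equilibrium of the CBF output dynamics \eqref{eq:cascade-form-output}, and since $\dot{\trajmu}=0$ the error dynamics \eqref{eq:error-dynamics} collapses to $\dot{\trajphidiff}=A_\gamma\trajphidiff$. As $A_\gamma$ is Hurwitz, I would take $P=P^\top\succ0$ solving $A_\gamma^\top P+PA_\gamma=-I$, so that $V_\phi:=\trajphidiff^\top P\trajphidiff$ satisfies $\dot V_\phi\le-|\trajphidiff|^2$ and $|\trajphidiff(t)|$ decays exponentially. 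I would also record that realizing $\mu(x,u)\equiv\mu_e$ uses the static feedback read off from \eqref{eq:barrier-constraint}, $u(x)=(L_gL_f^{r-1}h(x))^{-1}(\mu_e-L_f^rh(x)-k^\top\xi(x))$; since $\phi_e\in\R^r_{\ge0}$, the state $x_e$ corresponding to $(\phi_e,\eta_e)$ lies in $S_\phi\subset\safeset$, where $h$ has relative degree $r$, so this feedback is well defined and Lipschitz on an open neighborhood of $x_e$ and the closed-loop vector field there is exactly \eqref{eq:cascade-form} with $\mu\equiv\mu_e$.

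Next, by Definition \ref{cond:minimum-phase-simple} the system $\dot\eta=\qcbf(\eta,\phi_e)$ has a locally exponentially stable equilibrium $\eta_e$, so a standard converse Lyapunov theorem (see, e.g., \cite{sastry2013nonlinear}) supplies a neighborhood of $\eta_e$ and $V_\eta$ with $c_1|\eta-\eta_e|^2\le V_\eta(\eta)\le c_2|\eta-\eta_e|^2$, $\frac{\partial V_\eta}{\partial\eta}\qcbf(\eta,\phi_e)\le-c_3|\eta-\eta_e|^2$, and $\big|\frac{\partial V_\eta}{\partial\eta}\big|\le c_4|\eta-\eta_e|$. Using the Lipschitz constant $l_\phi$ of $\qcbf$ in $\phi$, along the cascade
\begin{equation*}
\dot V_\eta=\tfrac{\partial V_\eta}{\partial\eta}\,\qcbf(\eta,\phi_e+\trajphidiff)\le-c_3|\eta-\eta_e|^2+c_4\,l_\phi\,|\eta-\eta_e|\,|\trajphidiff|.
\end{equation*}
I would then form $W:=V_\eta+\rho V_\phi$, bound the cross term by Young's inequality, and choose $\rho$ large enough that $\dot W\le-c\big(|\eta-\eta_e|^2+|\trajphidiff|^2\big)$ on a neighborhood of $(\phi_e,\eta_e)$ for some $c>0$; since $W$ is quadratically sandwiched in $|(\trajphidiff,\eta-\eta_e)|$, this gives local exponential stability of $(\trajphi,\eta)=(\Gamma\mu_e,\eta_e)$, which is the claim (and, taking the neighborhood inside $S_\phi$, which is forward invariant by Theorem \ref{thm:HOCBF}, the statement is consistent with $x_0\in S_\phi$).

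The hard part will be the bookkeeping of neighborhoods. The converse-Lyapunov bounds for $V_\eta$ and the Lipschitz estimate on $\qcbf$ are only valid on a neighborhood of $(\phi_e,\eta_e)$, so I would need to shrink the initial-condition ball enough that a sublevel set of $W$ is contained in that neighborhood---and in the region where the feedback $u(\cdot)$ is well defined---and is therefore forward invariant, so that the inequality on $\dot W$ persists for all $t\ge0$. This is the standard cascade/total-stability argument, but it must be carried out carefully so that the asserted region of attraction is genuinely an open neighborhood of the equilibrium.
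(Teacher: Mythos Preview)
Your proposal is correct but takes a different route from the paper. The paper's proof is a two-line linearization argument: under $\trajmu\equiv\mu_e$ the closed-loop $(\phidiff,\eta)$ system has Jacobian at $(0,\eta_e)$ that is block lower-triangular with diagonal blocks $A_\gamma$ and $\tfrac{\partial\qcbf}{\partial\eta}(\eta_e,\Gamma\mu_e)$; both are Hurwitz (the second by Definition~\ref{cond:minimum-phase-simple}), so the Lyapunov indirect theorem gives local exponential stability directly. Your approach instead builds a composite Lyapunov function $W=V_\eta+\rho V_\phi$ from a converse Lyapunov function for the $\eta$-subsystem and the quadratic $V_\phi=\trajphidiff^\top P\trajphidiff$, and handles the interconnection by a Young-inequality cross-term bound. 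This is essentially the machinery the paper reserves for the \emph{global} result (Theorem~\ref{thm:min-phase-simple-global}); for the local statement it is heavier than necessary, and it forces you into the neighborhood bookkeeping you flag at the end. The upside of your route is that it is constructive (you get an explicit $W$ and, in principle, a region-of-attraction estimate) and it does not require differentiability of $\qcbf$ at the equilibrium, only the Lipschitz bound---whereas the paper's linearization implicitly uses $C^1$ regularity of $\qcbf$.
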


\begin{proof}
The proof resembles the proof of Theorem \ref{thm:siso-min-phase} in \cite{sastry2013nonlinear}. Under $\trajmu(t)\equiv \mu_e$, $\dot{\trajphidiff}(t) = A_{\gamma} \trajphidiff(t)$ from \eqref{eq:error-dynamics}. Thus, the closed-loop system of ($\phidiff, \eta$) becomes
\begin{align}
\label{eq:fixed-closed-loop}
    \dot{\phidiff} & = A_{\gamma} \phidiff \\
    \dot{\eta} & = \qcbf (\eta, \Gamma \mu_e + \phidiff) \nonumber
\end{align}
The linearization of the above system at ($\phidiff, \eta$) = ($0, \eta_e$) becomes \vspace{-0.5em}
\begin{equation*}
    \left[\begin{array}{cc}
A_{\gamma} & 0 \\
\frac{\partial \qcbf}{\partial \phi}(\eta_e,\Gamma \mu_e) & \frac{\partial \qcbf}{\partial \eta}(\eta_e,\Gamma \mu_e)
\end{array}\right].
\end{equation*}
Note that $A_{\gamma}$ is Hurwitz, and from Definition \ref{cond:minimum-phase-simple}, $\frac{\partial \qcbf}{\partial \eta}(\eta_e,\Gamma \mu_e)$ is also Hurwitz. Thus, by Lyapunov indirect theorem, ($0, \eta_e$) is a locally exponential equilibrium of the dynamics of ($\phidiff, \eta$), while the barrier constraint is satisfied by $\trajmu(t)\equiv \mu_e \ge 0$.
\end{proof}

% \begin{remark} 
% \label{remark:comparison} (Comparison to IO linearization)
% Definition \ref{cond:minimum-phase-simple} closely resembles Definition \ref{def:min-phase-io} since we first consider the notion of minimum phase closest to the existing IO linearization literature. In fact, Definition \ref{def:min-phase-io} for IO linearization is analogous to a special case of Definition \ref{cond:minimum-phase-simple} when $\mu_e = 0$, and Theorem \ref{thm:min-phase-simple-local} is analogous to Theorem \ref{thm:siso-min-phase}. % Thus, for any $h$ that can be IO linearized and stabilized to $0$ (by satisfying Definition \ref{def:min-phase-io}), if $h$ were the CBF, the CBF-constrained dynamics could also stay bounded (by saturating the barrier constraint, $\mu = 0$).
% \end{remark}

\begin{remark}[Comparison to IO linearization]
\label{remark:comparison} Definition \ref{def:min-phase-io} for IO linearization is analogous to a special case of Definition \ref{cond:minimum-phase-simple} when $\mu_e = 0$, and Theorem \ref{thm:min-phase-simple-local} is analogous to Theorem \ref{thm:siso-min-phase}.
\end{remark}

Theorem \ref{thm:min-phase-simple-local} is not sufficient for the boundedness of the CBF-constrained dynamics since it only provides the \textit{local} stability condition. If the initial condition $x_0$ is outside the region of attraction of $(\Gamma \mu_e, \eta_e)$, the state might still diverge from the equilibrium and go unbounded. Next, we show that if $h$ is Lipschitz continuous, with the Lipschitzness of the internal dynamics $\qcbf$, the \textit{global} minimum phase condition becomes a sufficient condition for the boundedness of the CBF-constrained dynamics.

\begin{theorem}
\label{thm:min-phase-simple-global} (Sufficient condition for boundedness of CBF-constrained dynamics---fixed-virtual-control-input case)
For the CBF $h$ which is Lipschitz continuous and the initial state $x_0 \in S_{\phi}$, if the fixed-virtual-control-input \textit{global} exponential minimum phase condition (in Definition \ref{cond:minimum-phase-simple}) is satisfied,
then, under $\trajmu(t)\equiv \mu_e$, the system converges to $\left(\trajphi(t), \trajeta(t) \right) \rightarrow \left(\Gamma \mu_e, \eta_e \right)$. Thus, the system under the \barrierconstraint~can stay bounded.
\end{theorem}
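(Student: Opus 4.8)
The plan is to exploit the fact that, once $\trajmu(t)$ is held at the constant value $\mu_e$, the $\phi$-subsystem decouples completely from $\eta$, so the closed loop becomes a \emph{cascade}: an autonomous, exponentially convergent $\phi$-part driving a nonautonomous $\eta$-part. Concretely, setting $\dot{\trajmu}\equiv 0$ in the error dynamics \eqref{eq:error-dynamics} gives $\dot{\trajphidiff}=A_{\gamma}\trajphidiff$, hence $\trajphidiff(t)=e^{A_{\gamma}t}\trajphidiff(0)$; since $A_{\gamma}$ is Hurwitz with eigenvalues $-\gamma_1,\dots,-\gamma_r$, there exist $M\ge 1$ and $\lambda>0$ with $\norm{\trajphidiff(t)}\le M e^{-\lambda t}\norm{\trajphidiff(0)}$, so $\trajphi(t)\to\Gamma\mu_e$ exponentially and, in particular, $\trajphi$ stays bounded. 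It then remains only to show that the internal state stays bounded and converges to $\eta_e$ when driven by this signal.

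For the $\eta$-part I would write the internal dynamics \eqref{eq:cascade-form-internal} as $\dot{\eta}=\qcbf(\eta,\Gamma\mu_e)+\big(\qcbf(\eta,\Gamma\mu_e+\trajphidiff(t))-\qcbf(\eta,\Gamma\mu_e)\big)$, i.e.\ as a vanishing perturbation of the nominal system $\dot{\eta}=\qcbf(\eta,\Gamma\mu_e)$. By the fixed-virtual-control-input \emph{global} exponential minimum phase hypothesis (Definition \ref{cond:minimum-phase-simple}), $\eta_e$ is a globally exponentially stable equilibrium of this nominal system, and because $\qcbf$ is globally Lipschitz in $\phi$ with constant $l_\phi$, the perturbation term is bounded by $l_\phi\norm{\trajphidiff(t)}\le l_\phi M e^{-\lambda t}\norm{\trajphidiff(0)}$, uniformly in $\eta$. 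I would then invoke a converse Lyapunov theorem for exponential stability, applicable since $\qcbf(\cdot,\Gamma\mu_e)$ is globally Lipschitz, to obtain $V$ with $c_1|\eta-\eta_e|^2\le V\le c_2|\eta-\eta_e|^2$, $\dot V\le -c_3|\eta-\eta_e|^2$ along the nominal flow, and $|\partial V/\partial\eta|\le c_4|\eta-\eta_e|$. Along the perturbed flow this yields $\dot V\le -c_3|\eta-\eta_e|^2+c_4 l_\phi|\eta-\eta_e|\,\norm{\trajphidiff(t)}$; passing to $W:=\sqrt V$ gives a scalar linear differential inequality $\dot W\le -aW+b(t)$ with $a>0$ and $b(t)$ exponentially decaying, so by the comparison lemma $W(t)$ — hence $|\eta(t)-\eta_e|$ — is bounded on the maximal interval of existence and tends to $0$; the uniform bound rules out finite escape time, so the solution is global. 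Combining this with the first paragraph gives $\big(\trajphi(t),\trajeta(t)\big)\to(\Gamma\mu_e,\eta_e)$, in particular boundedness; and since $\mu_e\ge 0$ the barrier constraint holds for all $t$, so by Theorem \ref{thm:HOCBF} the trajectory remains in $S_{\phi}\subset\safeset$.

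The routine parts are the $\phi$-estimate and the algebra of the Lyapunov inequality. The step needing care — and the main obstacle — is the cascade/vanishing-perturbation argument: one must (i) have a converse Lyapunov function valid under only Lipschitz (not $\mathcal{C}^1$) regularity of $\qcbf$, which is precisely why the global Lipschitzness of the internal dynamics is assumed, and (ii) upgrade boundedness to actual convergence $\eta(t)\to\eta_e$ rather than mere ultimate boundedness, which is exactly what the \emph{exponential} decay of $\norm{\trajphidiff(t)}$ (as opposed to a generic small bound) buys in the comparison step. A secondary technical point is that the normal-form coordinates are only guaranteed to define a diffeomorphism locally; the cleanest resolution is to assume the change of coordinates is valid on a neighborhood of $S_{\phi}$ large enough to contain the trajectory, or to state the result on that region.
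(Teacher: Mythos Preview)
Your proposal is correct and follows essentially the same route as the paper's proof: both exploit the cascade structure under $\trajmu\equiv\mu_e$ to get $|\trajphidiff(t)|\le Ce^{-\gamma_{\min}t}$ from $\dot{\trajphidiff}=A_\gamma\trajphidiff$, then invoke a converse Lyapunov function for the globally exponentially stable nominal $\eta$-dynamics and bound $\dot V\le -\alpha_4|\eta-\eta_e|^2+\alpha_3 l_\phi|\eta-\eta_e|\,|\trajphidiff(t)|$. The only cosmetic difference is in the last step---the paper completes the square to show $\eta$ is confined to an exponentially shrinking ball, whereas you pass to $W=\sqrt V$ and apply a comparison lemma---but these are interchangeable standard maneuvers, and your additional remarks on finite escape time, the role of exponential (versus merely bounded) decay of $\trajphidiff$, and the local nature of the diffeomorphism are all apt.
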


\begin{proof}
See Appendix \ref{appendix:simple-global-proof}.
\end{proof}

Theorem \ref{thm:min-phase-simple-global} is the first establishment that the boundedness of the CBF-constrained dynamics can be ensured when the minimum phase condition is secured. However, its implication is still limited as the boundedness is guaranteed by setting $\trajmu(\cdot)$ to a fixed value, constraining the use of CBFs for safety filtering. Thus, we next explore a more relaxed minimum phase condition that still ensures the boundedness of the state trajectory under the barrier constraint. It can be achieved by considering an internal dynamics that may be stabilized through a feedback control law:

\begin{definition}[Positive-feedback-stabilizable exponential minimum phase condition] 
\label{cond:minimum-phase-feedback} If there exists a Lipschitz feedback controller $\pspi:\R^{n-r}\rightarrow \R_{\ge0}$, which determines $\mu$, such that the internal dynamics under $\trajphi(t)= \Gamma \pspi(\trajeta(t))$, i.e.,
\begin{equation}
    \dot{\eta} = \qcbf\left(\eta, \Gamma \pspi(\eta)\right),
    \label{eq:cbf-internal-dynamics-feedback}
\end{equation}
has a locally (globally) exponentially stable equilibrium $\eta_e$, 
we say that the CBF $h$ satisfies the \textit{positive-feedback-stabilizable local (global) exponential minimum phase condition}.
\end{definition}

% In words, we say that the CBF-constrained dynamics satisfy the minimum phase condition, when the internal dynamics is positive-feedback-stabilizable. 
Note that Definition \ref{cond:minimum-phase-simple} now becomes the special case of Definition \ref{cond:minimum-phase-feedback} when $\pspi(\eta) \equiv \mu_e$. We now show that under the new global minimum phase condition, with some additional conditions, the CBF-constrained dynamics can stay bounded.

\begin{figure}[t]
\centering
\includegraphics[width=\columnwidth]{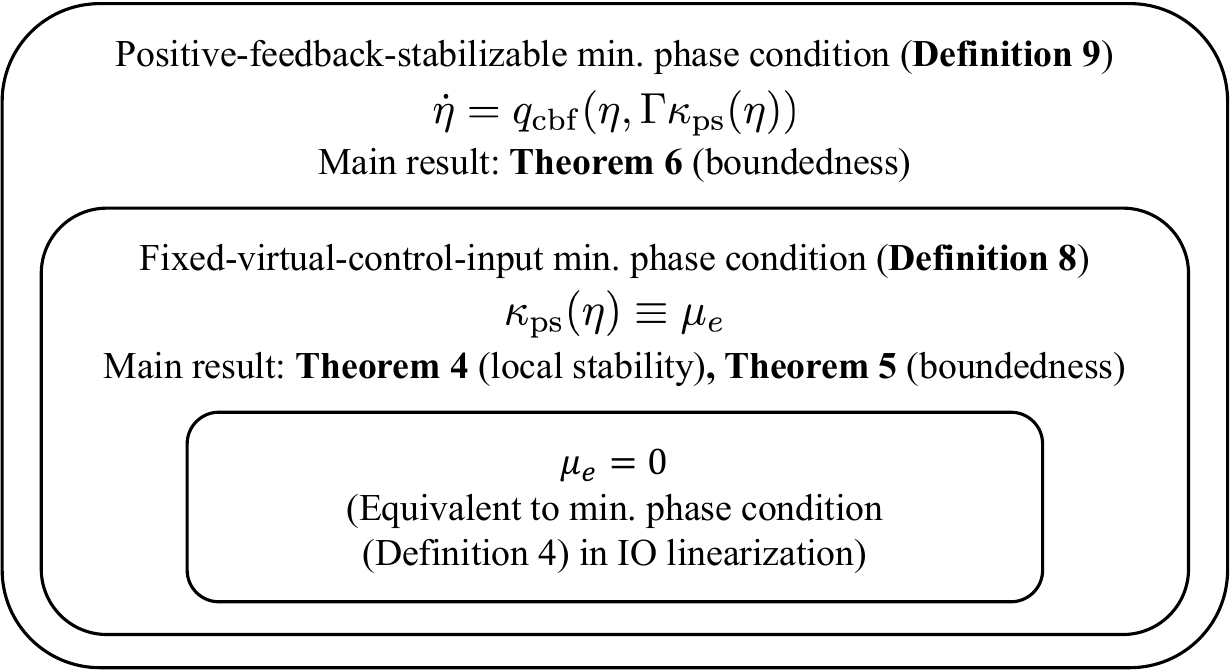}
\caption{Relationship between different CBF minimum conditions. For linear systems, all conditions collapse to the same condition that $A_\eta$ in \eqref{eq:linear-case-zero} itself has to be Hurwitz.}
\vspace{-0.5em}
\label{fig:venn-diagram}
\end{figure}

\begin{table}[]
\caption{Comparison between the analysis for the dynamics under the input-output (IO) linearization and the CBF filter. In each case, $h$ is the output and the CBF, respectively.}
\label{table:main-comparison}
\renewcommand{\arraystretch}{1.5} % Default value: 1
\addtolength{\tabcolsep}{-4pt}
\begin{tabular}{|l|c|c|}
\hline
\; & IO linearization \cite{sastry2013nonlinear} & CBF-filtered (simple ver.) \\ \hline \hline
Virtual ctrl input & \begin{tabular}[c]{@{}c@{}} {\scriptsize $\nu:= \cbf^{(r)} (x, u)$}\end{tabular} & \begin{tabular}[c]{@{}c@{}}{\scriptsize$\mu:=\cbf^{(r)}(x, u) \!+\! k^\top \xi(x)$}\end{tabular} \\ \hline
Control law                                                           & \begin{tabular}[c]{@{}c@{}}$\nu = - k^\top \xi(x)$ \\ (output-zeroing-feedback)\end{tabular}    & \begin{tabular}[c]{@{}c@{}}$\mu\ge 0$ \eqref{eq:barrier-constraint}\\ (barrier constraint)\end{tabular}    \\ \hline
Output dynamics & \begin{tabular}[c]{@{}c@{}}$\dot{\xi} = A_k \xi$ \eqref{eq:io-output-dynamic-closed} \\ (closed-loop)\end{tabular}    & \begin{tabular}[c]{@{}c@{}}
$\dot{\xi} = A_k \xi + B \mu$ 
\eqref{eq:cbf-output-dynamics-normal} $/$ \\
$\dot{\phi} = A_{\gamma} \phi + B \mu$ \eqref{eq:cascade-form-output} \\ where $\phi=T \xi$ \eqref{eq:xitophi}\end{tabular}    \\ \hline
\begin{tabular}[c]{@{}l@{}}Property of the \\ system matrix\end{tabular} & $A_k$ Hurwitz    & \begin{tabular}[c]{@{}c@{}} $A_k \;/\; A_\gamma$ Total negative % \\  ($-\gamma_1, \cdots, -\gamma_r$)
\end{tabular}    \\ \hline
\begin{tabular}[c]{@{}l@{}} Equilibria in \\ output space \end{tabular} & $\xi_e=0$, the stable eq.& \begin{tabular}[c]{@{}c@{}} $\phi_e \!=\!\Gamma\mu$, $\forall \mu \ge 0$, \\ is a stable controlled eq. \end{tabular}    \\ \hline

\begin{tabular}[c]{@{}l@{}}Forward invariance \\ for safety \end{tabular} & N\slash A
&
$\big\{\phi \; | \; \phi \in \R^\gamma_{\ge0}\big\}$  \\ \hline

\begin{tabular}[c]{@{}l@{}}Diagram in \\ output space \end{tabular} & \begin{tabular}[c]{@{}l@{}}\; \vspace{-0.5em} \\ \includegraphics[width=0.3\columnwidth]{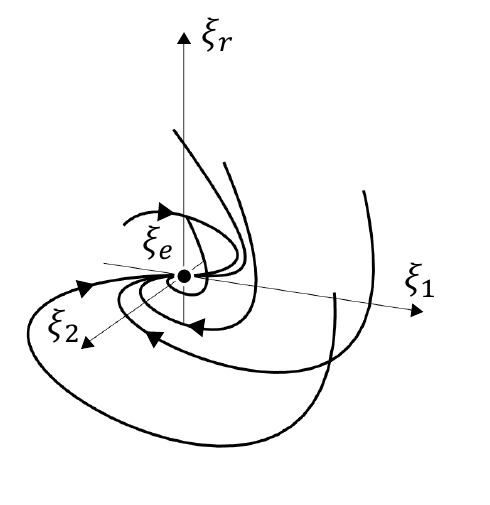}\end{tabular} &  \begin{tabular}[c]{@{}l@{}}\; \vspace{-0.5em} \\ \includegraphics[width=0.35\columnwidth]{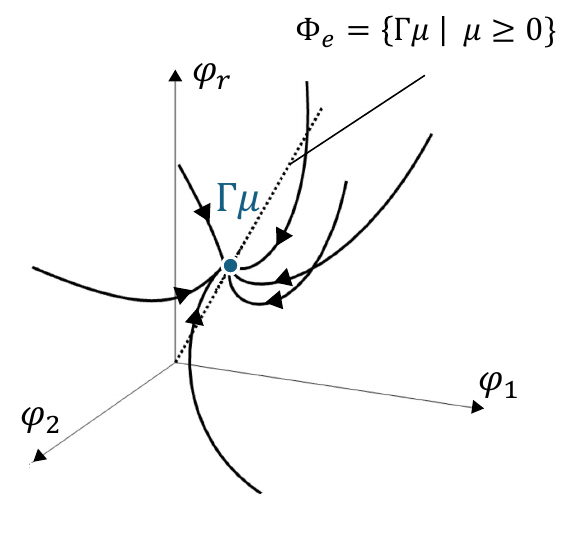}\end{tabular} \\ \hline

Internal dynamics & $\dot{\eta} = \qio (\eta, \xi)$ \eqref{eq:normal-form-io-internal} & $\dot{\eta} = \qcbf (\eta, \phi)$ \eqref{eq:cascade-form-internal}\\ \hline
% \begin{tabular}[c]{@{}l@{}}Zero dynamics \\ manifold\end{tabular} & $\{x \; | \; \xi 
% - \xi_e \equiv 0 \}$ & \begin{tabular}[c]{@{}c@{}}$\{x \; | \; \phi - \Gamma \mu_e \equiv 0\}$\\ where $\mu_e$ defined in Cond. \ref{cond:minimum-phase-simple}\end{tabular} \\ \hline

Zero dynamics & $\dot{\eta} = \qio (\eta, 0)$ \eqref{eq:io-zero-dynamics} & $\dot{\eta} = \qcbf (\eta, \Gamma \mu)$ \eqref{eq:cbf-zero-dynamics} \\ \hline

% \begin{tabular}[c]{@{}l@{}}Min-phase cond.\end{tabular} & Condition \ref{def:min-phase-io} & Condition \ref{cond:minimum-phase-simple} \\ \hline

\begin{tabular}[c]{@{}l@{}}Minimum phase \\ condition\end{tabular} & \begin{tabular}[c]{@{}c@{}} zero dynamics \\ with stable equilibrium \end{tabular} & \begin{tabular}[c]{@{}c@{}} positive stabilizable \\ zero dynamics \end{tabular} \\ \hline

% \begin{tabular}[c]{@{}l@{}}Implication of \\ Min-phase cond.\end{tabular} & Theorem \ref{thm:siso-min-phase} & Theorems \ref{thm:min-phase-simple-local} \& \ref{thm:min-phase-simple-global} \\ \hline

\begin{tabular}[c]{@{}l@{}}Implication of \\ Min. phase cond.\end{tabular} & \begin{tabular}[c]{@{}c@{}} Full state stability \\ (Theorem \ref{thm:siso-min-phase}) \end{tabular} & 

\begin{tabular}[c]{@{}c@{}} Full state boundedness \\ (Theorem \ref{thm:min-phase-feedback})   \end{tabular} \\ \hline
\end{tabular}
\vspace{-1em}
\end{table}

\begin{theorem}
\label{thm:min-phase-feedback} 
(Sufficient condition for boundedness of CBF-constrained dynamics---positive-feedback-stabilizable case)
For the CBF $h$ which is Lipschitz continuous and the initial state $x_0 \in S_{\phi}$, let the positive-feedback-stabilizable global exponential minimum phase condition (in Definition \ref{cond:minimum-phase-feedback}) hold. Without loss of generality, we set $\eta_e=0$. From Definition \ref{cond:minimum-phase-feedback}, by the converse Lyapunov theorem, there exists $V$ such that \vspace{-0.25em}
\begin{equation}
\begin{aligned}
        \alpha_1 |\eta|^2 \le V(\eta) \le \alpha_2 |\eta|^2, \quad \left|\frac{\partial V}{\partial \eta}\right| \le \alpha_3 |\eta|, \\
    \frac{\partial V}{\partial \eta} \cdot \qcbf (\eta, \Gamma \pspi(\eta)) \le - \alpha_4 |\eta|^2,
\end{aligned}
\label{eq:min-phase-lyap} \vspace{-0.5em}
\end{equation}
with $\alpha_{1, 2, 3, 4} \!>\!0$. If $\alpha_4\!>\!\left(\!{\alpha_3 l_\phi \over 2} \!\right)^2$ and if $\gamma_{\min}$ is large enough, the system under the \barrierconstraint~can stay bounded. In particular, by setting $\trajmu(t)=\pspi(\trajeta(t))$, the system converges to $\left(\trajphi(t), \trajeta(t)\right) \rightarrow \left(\Gamma \pspi(\eta_e), \eta_e \right)$, and thus stays bounded.
\end{theorem}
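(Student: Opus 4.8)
The plan is to close the loop with the stabilizing feedback $\trajmu(t)=\pspi(\trajeta(t))$ from Definition~\ref{cond:minimum-phase-feedback}, pass to the error coordinates of Section~\ref{subsec:features}, and build a composite Lyapunov function. First I would check admissibility: since $\pspi\ge 0$ the barrier constraint $\mu\ge 0$ holds, so (for $x_0\in S_\phi$) Theorem~\ref{thm:HOCBF} preserves safety, and since $h$ has relative degree $r$ the relation \eqref{eq:barrier-constraint} solves for a control $u$, Lipschitz in $x$, that realizes $\mu=\pspi(\eta)$; thus the closed loop is a genuine CBF-constrained trajectory. Next, set $\trajphid:=\Gamma\pspi(\trajeta)$ and $\trajphidiff:=\trajphi-\trajphid$, so \eqref{eq:error-dynamics} gives $\dot{\trajphidiff}=A_\gamma\trajphidiff-\Gamma\dot{\trajmu}$, with $\dot{\mu}=\tfrac{\partial\pspi}{\partial\eta}\,\qcbf\!\big(\eta,\Gamma\pspi(\eta)+\phidiff\big)$ along trajectories. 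Since $\eta_e=0$ and $\qcbf(0,\Gamma\pspi(0))=0$, the Lipschitz bounds on $\qcbf$ (constants $l_\eta,l_\phi$) and on $\pspi$ (constant $L_{\pspi}$) give $|\dot\mu|\le L_{\pspi}\big((l_\eta+l_\phi|\Gamma|L_{\pspi})|\eta|+l_\phi|\phidiff|\big)$. The target is $(\eta,\phidiff)\to(0,0)$ exponentially, which is equivalent to $\big(\trajphi,\trajeta\big)\to\big(\Gamma\pspi(\eta_e),\eta_e\big)$ and in particular gives boundedness.

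For the Lyapunov step, take $V$ as in \eqref{eq:min-phase-lyap}, let $P\succ 0$ solve $A_\gamma^\top P+PA_\gamma=-I$ (possible as $A_\gamma$ is Hurwitz), and set $W(\eta,\phidiff):=V(\eta)+\phidiff^\top P\phidiff$. Differentiating along the closed loop and splitting $\qcbf(\eta,\Gamma\pspi(\eta)+\phidiff)$ as $\qcbf(\eta,\Gamma\pspi(\eta))$ plus a term of norm $\le l_\phi|\phidiff|$, the last inequality in \eqref{eq:min-phase-lyap} gives $\dot V\le-\alpha_4|\eta|^2+\alpha_3 l_\phi|\eta||\phidiff|$; and $\tfrac{d}{dt}\big(\phidiff^\top P\phidiff\big)=-|\phidiff|^2-2\phidiff^\top P\Gamma\dot\mu$, which by the bound on $|\dot\mu|$ is at most $-\big(1-2\|P\||\Gamma|L_{\pspi}l_\phi\big)|\phidiff|^2+2\|P\||\Gamma|L_{\pspi}\big(l_\eta+l_\phi|\Gamma|L_{\pspi}\big)|\eta||\phidiff|$. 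With $z:=(|\eta|,|\phidiff|)^\top$ these combine into $\dot W\le-z^\top Q z$, where $Q$ is symmetric with $Q_{11}=\alpha_4$, $Q_{22}=1-2\|P\||\Gamma|L_{\pspi}l_\phi$, and $Q_{12}=-\tfrac12\alpha_3 l_\phi-\|P\||\Gamma|L_{\pspi}\big(l_\eta+l_\phi|\Gamma|L_{\pspi}\big)$.

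It remains to force $Q\succ 0$. Because $A_\gamma$ is total negative with eigenvalues $-\gamma_1,\dots,-\gamma_r$, both $\|P\|$ and $|\Gamma|$ (see \eqref{eq:def-Gamma}) are $O(1/\gamma_{\min})$, so $\|P\||\Gamma|\to 0$ as $\gamma_{\min}\to\infty$ and $Q$ tends to $\begin{bmatrix}\alpha_4&-\alpha_3 l_\phi/2\\-\alpha_3 l_\phi/2&1\end{bmatrix}$, which is positive definite exactly when $\alpha_4>(\alpha_3 l_\phi/2)^2$. Hence, under that inequality there is a threshold $\bar\gamma$ so that $\gamma_{\min}>\bar\gamma$ makes $Q\succ 0$; then $\dot W\le-\lambda_{\min}(Q)(|\eta|^2+|\phidiff|^2)\le-cW$ for some $c>0$, and Gr\"onwall gives exponential decay of $W$, hence of $(\eta,\phidiff)$, so $\trajphi=\phidiff+\Gamma\pspi(\eta)\to\Gamma\pspi(\eta_e)$ and $\trajeta\to\eta_e$, with the whole state bounded. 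One technicality: $\pspi$ being only Lipschitz, $\dot\mu$ exists merely almost everywhere, but this is harmless because $t\mapsto W$ is locally Lipschitz, hence absolutely continuous, so $\dot W\le-cW$ a.e.\ suffices for Gr\"onwall (alternatively one may assume $\pspi\in C^1$).

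I expect the main obstacle to be the $\phidiff$-equation. Unlike Theorem~\ref{thm:min-phase-simple-global}, where fixing $\mu$ makes $\phidiff$ decay on its own, here $\phidiff$ is forced by $\dot\mu$, which depends on $\dot\eta$ and hence on $\trajphi$, closing a feedback loop between the internal state and the output-error state. The two hypotheses are exactly what break this loop: enlarging $\gamma_{\min}$ both speeds up the CBF output dynamics and shrinks $\|P\||\Gamma|$, so the forcing $-\Gamma\dot\mu$ becomes asymptotically negligible, while $\alpha_4>(\alpha_3 l_\phi/2)^2$ is the $\gamma_{\min}$-independent robustness margin the internal dynamics needs to absorb the residual $\phidiff$. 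A secondary task is to make the estimates $\|P\|,|\Gamma|=O(1/\gamma_{\min})$ rigorous---writing $A_\gamma=\gamma_{\min}\tilde A$ with $\tilde A$ Hurwitz uniformly over the bounded ratios $\gamma_i/\gamma_{\min}$, together with the closed form \eqref{eq:def-Gamma}---and to confirm $\bar\gamma$ can be chosen consistently given how the auxiliary constants depend on the $\gamma_i$.
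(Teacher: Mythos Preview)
Your proposal is correct and follows essentially the same route as the paper: the same error coordinate $\phidiff=\phi-\Gamma\pspi(\eta)$, the same composite Lyapunov function $V(\eta)+\phidiff^\top P\phidiff$ with $A_\gamma^\top P+PA_\gamma=-I$, the same Lipschitz splitting of $\qcbf$ to bound $\dot V$ and $\dot\mu$, and the same $2\times 2$ sign-definiteness test that collapses to $\alpha_4>(\alpha_3 l_\phi/2)^2$ once $\gamma_{\min}$ is large. The paper handles the $O(1/\gamma_{\min})$ estimates on $\|P\|$, $|\Gamma|$, and $\|P\Gamma\|$ via three short lemmas (Lemmas~\ref{lem:P}--\ref{lem:PLambda}, using the integral formula for $P$ and the closed form \eqref{eq:def-Gamma}) and checks positive definiteness by writing the determinant as a quartic in $\gamma_{\min}$ rather than by your limiting/continuity argument, but these are presentational differences rather than substantive ones.
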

\vspace{-0.5em}
\begin{proof}
See Appendix \ref{appendix:feedback-theorem-proof}.\vspace{-0.25em}
\end{proof}

In the theorem, we have two important conditions in addition to the global minimum phase condition. First, the internal dynamics needs strong enough stability by virtue of the condition on $\alpha_4$. More importantly, as can be seen in the theorem proof (equation \eqref{eq:proof-gamma-poly}), $\gamma_{\min}$ has to be large enough as compared to the internal dynamics response. This requirement mainly results from the fact that the internal dynamics is not directly controlled by $\trajmu(t)$; it is controlled by $\trajphi(t)$. Thus, if we wanted the internal dynamics to behave like \eqref{eq:cbf-internal-dynamics-feedback} and be stabilized, we need $\trajphi(t)$ to quickly converge to $\Gamma \pspi(\trajeta(t))$. The transient behavior during convergence, governed by the output dynamics response \eqref{eq:cascade-form-output}, is dominated by the slowest decay rate $\gamma_{\min}$. In Section \ref{subsec:nonlinear-ex}, we highlight the necessity of the condition on $\gamma_{\min}$ through a nonlinear system example.

Note that Theorem \ref{thm:min-phase-feedback} is the possibility result for the boundedness. Similarly to Theorem \ref{thm:min-phase-simple-global}, Theorem \ref{thm:min-phase-feedback} is proved by showing that by setting $\mu$ to $\pspi$. However, it does not guarantee that \textit{any} nonnegative $\trajmu(\cdot)$ will result in the bounded state, even when the global minimum phase condition is satisfied. In contrast to the result in IO linearization, where any stable feedback law ensures the boundedness of the dynamics under the original minimum phase condition, such a universal result is not achievable for safety filters, as their closed-loop behavior depends on an arbitrary reference control command. 

\begin{remark}
Definition \ref{cond:minimum-phase-feedback} is framed in terms of positive stabilizability of $\eta$ and a constructive approach to verifying this property must be further developed, for instance, by verifying the Lyapunov function $V$ together with $\pspi$, satisfying \eqref{eq:min-phase-lyap}. We discuss the special case of linear systems in Remark \ref{remark:linear-system-case} and leave the more general case as a future research direction.
\end{remark}

\vspace{-1.1em}
\subsection{CBF Zero Dynamics}

Notice that, in our analysis so far, the CBF-constrained dynamics stays bounded when it can get stabilized to the dynamics constrained by $\phi \in \Phi_e$ and $\dot{\eta} = \qcbf\left(\eta, \Gamma \mu \right)$. As noted before, the zero dynamics in IO linearization is defined as the dynamics when the output derivative vector $\xi$ is constrained to the equilibrium zero. Similarly, we can define the zero dynamics of the CBF as the special case of the internal dynamics when the cascading constraint vector $\phi$ is constrained on the set of equilibria:

\begin{definition}[CBF zero dynamics \& zero dynamics manifold]
\label{def:zero-dynamics-cbf}
For the CBF $h$, the  \textit{CBF zero dynamics} of the system \eqref{eq:system} is defined as
\vspace{-0.5em}
\begin{equation}
    \dot{\eta} = \qcbf\left(\eta, \Gamma \mu \right),
    \label{eq:cbf-zero-dynamics}
\end{equation}
in which $\mu \ge 0 $ is considered as a virtual nonnegative input to the dynamics. This is a special case of the internal dynamics \eqref{eq:cascade-form-internal} where $\phi$ is constrained on the equilibria manifold $\Phi_e$ \eqref{eq:equilibria-manifold}. We define $\mathcal{M}_{\text{cbf}} := \{x \; | \; \phi = \Gamma \mu\}$ as the corresponding \textit{CBF zero dynamics manifold}.
\end{definition}

Whereas the zero dynamics \eqref{eq:io-zero-dynamics} in IO linearization does not involve any virtual control, the CBF zero dynamics \eqref{eq:cbf-zero-dynamics} includes a (nonnegative) scalar input $\mu$. Although no zeros define the zero dynamics for the CBF, we retain the naming convention from IO linearization to maintain consistency between the two bodies of literature. Using the new CBF zero dynamics definition, we can reiterate our previous definitions of the minimum phase conditions:
\begin{enumerate}[leftmargin=1.25em, labelindent=\parindent, listparindent=\parindent, labelwidth=0pt]
    \item (Definition \ref{cond:minimum-phase-simple}) The CBF satisfies the \textit{fixed-virtual-control-input minimum phase condition} if the CBF zero dynamics is positively stabilizable under a fixed virtual input $\mu_e \ge 0$.
    \item (Definition \ref{cond:minimum-phase-feedback}) The CBF satisfies the \textit{positive-feedback-stabilizable minimum phase condition} if the CBF zero dynamics is positively stabilizable under a nonnegative feedback control law $\pspi$.
\end{enumerate}
The new definitions presented for the CBF-constrained dynamics along with the analysis conducted so far are summarized in Table \ref{table:main-comparison} and Figure \ref{fig:venn-diagram} in comparison to IO linearization.

\vspace{1em}

\begin{remark}[Special case: linear systems]
\label{remark:linear-system-case}
The fact that the CBF zero dynamics form a single-input system (where the input is $\mu$) leads to interesting properties in the special case of linear systems. For linear systems, the internal dynamics \eqref{eq:cascade-form-internal} becomes $\dot{\eta} = A \eta + B \phi$, where $A \in \R^{(n-r)\times (n-r)}$, and $B \in \R^{(n-r) \times (r)}$. The CBF zero dynamics becomes
\begin{equation}
    \dot{\eta} = A \eta + B \Gamma \mu.
\label{eq:linear-case-zero}
\end{equation}
Revisiting the sufficient and necessary condition for the positive stabilizability of linear systems \cite{heemels1998positive}, we have an important conclusion for the linear system case, stated in Proposition \ref{cor:linear-system-case} below. It collapses the CBF minimum phase conditions to the minimum phase condition in IO linearization. Therefore, for single-input linear systems, we can guarantee the boundedness of the CBF-constrained dynamics only when the internal dynamics is open-loop stable.
\end{remark}

% \begin{definition}[Positive Stabilizability of Linear Systems \cite{heemels1998positive}] A linear system ($A, B$) is said to be \textit{positively stabilizable} if for every initial state $x_0 \in \R^n$, there exists an asymptotically stabilizing control signal $\ctrl(\cdot) \in \pcfset$ \footnote{For a linear system, a control signal $\ctrl \in \cfset$ is said to be asymptotically stabilizing for an initial state $x_0 \in \R^n$, if $\traj(t)\rightarrow 0$ as $t \rightarrow \infty$.}.
% \end{definition}

% \begin{theorem}\label{thm:positive_stabilizable}
% (\!\!\cite[Theorem 3.1]{heemels1998positive}). The linear system ($A, B$) is positively stabilizable if and only if $(A, B)$ is stabilizable and any real eigenvector of $A^\top$ corresponding to a positive eigenvalue of $A^\top$, $v \in \R^n$, satisfies that $B^\top v \in \R^m$ has at least one strictly positive element.
% \end{theorem}

% \noindent An important corollary of Theorem \ref{thm:positive_stabilizable} is as below:

% \begin{corollary}\label{thm:positive_stabilizable-special-case}
% (\!\!\cite[Corollary 3.3]{heemels1998positive}). If $m = 1$ (scalar input), the linear system ($A, B$) is positively stabilizable if and only if $A$ is Hurwitz.
% \end{corollary}

% As the CBF zero dynamics \eqref{eq:linear-case-zero} has a scalar input $\mu$, from Corollary \ref{thm:positive_stabilizable-special-case}, we have an important conclusion for the linear system case:
% \end{remark}

\vspace{1em}

\begin{proposition}
\label{cor:linear-system-case}
For single-input linear systems, the CBF satisfies the minimum phase condition if and only if $A$ in \eqref{eq:linear-case-zero} is Hurwitz. 
\end{proposition}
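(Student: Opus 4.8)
The plan is to recognize the statement as an instance of the classical necessary-and-sufficient condition for \emph{positive stabilizability} of linear systems, specialized to a single input. First I would pin down the object of study: by Definition~\ref{def:zero-dynamics-cbf} together with \eqref{eq:linear-case-zero}, the CBF zero dynamics of a single-input linear system is $\dot{\eta} = A\eta + (B\Gamma)\mu$ with scalar virtual input $\mu \ge 0$; since $B\Gamma \in \R^{(n-r)\times 1}$, this is itself a single-input LTI system driven by a nonnegative input, and the CBF minimum phase conditions of Definitions~\ref{cond:minimum-phase-simple} and \ref{cond:minimum-phase-feedback} are precisely the assertions that this system is positively stabilizable (by a fixed $\mu_e\ge 0$, respectively by a nonnegative feedback $\kappa$).

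For the ``if'' direction, suppose $A$ is Hurwitz. Then the constant choice $\mu_e = 0 \ge 0$ already renders $\dot{\eta} = A\eta$ globally exponentially stable with equilibrium $\eta_e = 0$, so the CBF satisfies the fixed-virtual-control-input global exponential minimum phase condition (Definition~\ref{cond:minimum-phase-simple}); since that is the strongest of the conditions in Figure~\ref{fig:venn-diagram} (it implies the positive-feedback-stabilizable one via the constant feedback $\kappa \equiv \mu_e$), every CBF minimum phase condition holds.

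For the ``only if'' direction I would argue the contrapositive: a nonnegative scalar input cannot stabilize $\dot{\eta} = A\eta + (B\Gamma)\mu$ when $A$ is not Hurwitz. The core observation is that a single nonnegative input confines the instantaneous control action to the half-line $\{(B\Gamma)\mu : \mu \ge 0\}$, which is too small to counteract an expanding mode of $A$. Concretely, if $\lambda \ge 0$ is a real eigenvalue of $A$ with left eigenvector $w$ normalized so that $w^\top B\Gamma \ge 0$, then $z := w^\top\eta$ obeys $\dot z = \lambda z + (w^\top B\Gamma)\mu \ge \lambda z$ for every admissible $\mu$, which—according to whether $\lambda > 0$ or $\lambda = 0$—either forces $z$ to grow without bound or prevents it from ever decreasing back to its equilibrium value once started above it; in either case no globally asymptotically (hence no exponentially) stable equilibrium can exist. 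For the fixed-input case this is even more immediate, as $\dot{\eta} = A\eta + (B\Gamma)\mu_e$ admits an isolated exponentially stable equilibrium only when $A$ is Hurwitz. The remaining eigenvalue configurations are absorbed by invoking the full necessary-and-sufficient characterization of positive stabilizability from \cite{heemels1998positive}, which, specialized to a single input, collapses to the requirement that $A$ be Hurwitz.

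The step I expect to be the main obstacle is this last one: faithfully invoking the characterization of \cite{heemels1998positive} and verifying that, for the single-input structure of the CBF zero dynamics, it truly reduces to ``$A$ Hurwitz''—in particular reconciling the local versus global versions of the minimum phase condition and handling non-real (imaginary-axis) eigenvalues, where the scalar reduction $z=w^\top\eta$ is not directly available. Combining the two directions yields the stated equivalence, and, through Theorems~\ref{thm:min-phase-simple-global}--\ref{thm:min-phase-feedback}, the conclusion of Remark~\ref{remark:linear-system-case} that boundedness of the CBF-constrained dynamics of a single-input linear system can be guaranteed only when its internal dynamics is open-loop stable.
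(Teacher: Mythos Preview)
Your proposal is correct and ultimately takes the same route as the paper: both reduce the question to the positive-stabilizability characterization of \cite{heemels1998positive} for single-input linear systems. The paper's proof is a one-line invocation of \cite[Corollary~3.3]{heemels1998positive} (single-input $(A,B)$ is positively stabilizable iff $A$ is Hurwitz), whereas you additionally sketch the elementary ``if'' direction and the real-eigenvalue case of ``only if'' before deferring to the same reference for the remaining configurations; those extra steps are fine but unnecessary once you cite the corollary.
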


\begin{proof} This results from \cite[Corollary 3.3]{heemels1998positive}, which states that a single-input linear system $(A, B)$ is positively stabilizable if and only if $A$ is Hurwitz.
\end{proof}

\vspace{1em}

\subsection{Remark on the Feasibility of CLF-CBF-QP}
We conclude this section by establishing the relevance of our result to control designs that enforce both safety and stability. The exposition here also applies to the multi-input case discussed in Section \ref{sec:multi-input}.

One approach to guarantee boundedness of the system state is to stabilize the system state to an equilibrium using a control Lyapunov function (CLF) \cite{SONTAG1989117}. If the control input can be constrained to satisfy a CLF-based stability condition, this condition can be enforced alongside a CBF-based safety constraint to yield a control law that both safely and stably regulates the system. For instance, the work in \cite{Ames2016} enforces both conditions via the following quadratic program:

\newpage
\hrule
\vspace{2mm}
\noindent \textbf{CLF-CBF-QP \cite{Ames2016}}:
\begin{subequations}
\label{eq:clf-cbf-qp}
\begin{align}
\pi_{\text{cbf}}(x) & = & & \underset{u\in \R^{m}}{\arg\min}  \quad \norm{u-\pi_{\text{ref}}(x)}_2^2 \label{eq:clf-cbf-qp-cost}\\
& \text{s.t.} & & L_f W(x) + L_g W(x)u + \lambda  W(x) \le 0,\label{eq:clf-cbf-qp-clf} \\
& \; & & L_f^r \cbf(x) +  L_g L_f^{r - 1} \cbf(x) u + k^\top \xi(x) \ge 0. \label{eq:clf-cbf-qp-cbf}
\vspace{-.5em}
\end{align}
\end{subequations}
\hrule
\vspace{2mm}
\noindent Here, \eqref{eq:clf-cbf-qp-clf} encodes the CLF stability constraint on the control input $u$, where $W(x)$ denotes the CLF. However, in general, the feasibility of this QP is not guaranteed. The infeasibility arises when the CLF constraint \eqref{eq:clf-cbf-qp-clf} and the CBF constraint \eqref{eq:clf-cbf-qp-cbf} cannot be simultaneously satisfied. In such cases, one of the constraints must be relaxed to ensure feasibility. Often, the CLF condition is relaxed to prioritize safety, as in \cite{Ames2016}, resulting in a controller that may no longer ensure boundedness of the system state.

A similar issue arises in predictive safety control schemes where both safety and stability are simultaneously enforced \cite{wabersich2021predictive, Wabersich2023}. The work in \cite{didier2024predictive} proves the recursive feasibility of a predictive stability filter under the assumption that, at every point within the safe set, there exists a control input that simultaneously satisfies both the stability condition and the control invariance condition.

Thus, enforcing both safety and stability constraints requires special care to avoid conflicts between them. One solution is to design a CLF that is compatible with the given CBF \cite{dai2024verification}. Alternatively, a single function satisfying both objectives can be constructed from scratch, such as in the control Lyapunov-barrier function framework \cite{romdlony2016stabilization}.

In our work, Definition \ref{cond:minimum-phase-feedback} provides sufficient conditions for the feasibility of the CLF-CBF-QP. In particular, the Lyapunov function $W$ constructed in the proof of Theorem~\ref{thm:min-phase-feedback} serves as a CLF that is inherently compatible with the CBF.

\begin{corollary}[Guaranteed Feasibility of CLF-CBF-QP under the CBF minimum phase condition] 
Under the global exponential minimum phase condition and other assumptions of Theorem \ref{thm:min-phase-feedback}, with the CLF $W$ defined in equation \eqref{eq:lyapunov-total}, and with $\lambda > 0$ chosen as the minimum eigenvalue of the positive definite matrix appearing on the right-hand side of \eqref{eq:proof-uvdot}, the CLF-CBF-QP in \eqref{eq:clf-cbf-qp} is always feasible.
\end{corollary}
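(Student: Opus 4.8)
The plan is to prove feasibility by exhibiting, at every state $x\in S_{\phi}$ (equivalently, wherever the normal form \eqref{eq:normal-form}/\eqref{eq:cascade-form} is valid), one control input that lies in both the CBF half-space \eqref{eq:clf-cbf-qp-cbf} and the CLF half-space \eqref{eq:clf-cbf-qp-clf}; this makes the feasible set of \eqref{eq:clf-cbf-qp} nonempty, and feasibility follows. The input I would use is the one that realizes the positive feedback $\mu=\pspi(\eta)$ constructed in Theorem \ref{thm:min-phase-feedback}, so the QP, in the worst case, returns a safe and stabilizing input.

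First I would recall the two objects produced in the proof of Theorem \ref{thm:min-phase-feedback} (with $\eta_e=0$, so everything is centered at the origin): the composite Lyapunov function $W$ of \eqref{eq:lyapunov-total} (which augments the converse-Lyapunov function $V$ guaranteed by Definition \ref{cond:minimum-phase-feedback} with an error term penalizing $\phidiff := \phi - \Gamma\pspi(\eta)$), and the dissipation inequality \eqref{eq:proof-uvdot}, which states that along the closed loop obtained by setting $\mu=\pspi(\eta)$ one has $\dot W \le -z^\top M z$, where $z:=[\,\phidiff^\top,\ \eta^\top\,]^\top$ and $M\succ0$ is the positive definite matrix named in the corollary (this is exactly where $\alpha_4 > (\alpha_3 l_\phi/2)^2$ and $\gamma_{\min}$ large enough are used, cf.\ \eqref{eq:proof-gamma-poly}). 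Using the upper bound on $W$ supplied by \eqref{eq:lyapunov-total}, namely $W(z)\le|z|^2$, this gives $-z^\top M z \le -\lambda_{\min}(M)\,|z|^2 \le -\lambda_{\min}(M)\,W$, so with $\lambda:=\lambda_{\min}(M)>0$ we obtain $\dot W \le -\lambda W$ along that closed loop; and, since \eqref{eq:proof-uvdot} is a pointwise-in-state bound, this inequality holds at every state, not merely asymptotically.

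Next I would read this back as a constraint on $u$. Fix $x$ as above; since $h$ has relative degree $r$ at $x$, $L_g L_f^{r-1}h(x)\neq0$ and there is an input $u^\star=u^\star(x)$ with $\mu(x,u^\star)=\pspi(\eta(x))$. Because $\pspi\ge0$, this $u^\star$ satisfies the CBF constraint \eqref{eq:clf-cbf-qp-cbf}. For the CLF constraint, the key structural fact is that the derivative of $W$ along \eqref{eq:system} depends on $u$ only through the scalar $\mu$: through the diffeomorphism and the cascading constraint form \eqref{eq:cascade-form}, $L_f W(x)+L_g W(x)\,u = \frac{\partial W}{\partial\phi}\big(A_{\gamma}\phi + B\,\mu(x,u)\big) + \frac{\partial W}{\partial\eta}\,\qcbf(\eta,\phi)$, in which $\dot\eta=\qcbf(\eta,\phi)$ carries no $u$ (by the property $L_g\eta=0$ from Theorem \ref{thm:diffeomorphism}). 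Evaluating at $u=u^\star$ sets $\mu=\pspi(\eta(x))$, so $L_f W(x)+L_g W(x)\,u^\star$ equals the derivative of $W$ at $x$ along the $\mu=\pspi(\eta)$ closed loop, which is $\le -\lambda W(x)$ by the previous step. Hence $L_f W(x)+L_g W(x)\,u^\star + \lambda W(x)\le0$, so $u^\star$ also satisfies \eqref{eq:clf-cbf-qp-clf}. Thus the feasible set of \eqref{eq:clf-cbf-qp} is nonempty at $x$; since the cost is a coercive convex quadratic objective and both constraints are affine in $u$, a minimizer exists and the QP is feasible.

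I expect the main obstacle to be purely bookkeeping: one must confirm that the derivative of $W$ computed intrinsically along \eqref{eq:system} with $u=u^\star$ agrees term-for-term with the closed-loop bound \eqref{eq:proof-uvdot} of Theorem \ref{thm:min-phase-feedback} --- in particular that the contribution of $\frac{d}{dt}\big(\Gamma\pspi(\eta)\big)$ inside $\dot{\phidiff}$ is correctly absorbed --- and that the constants in \eqref{eq:lyapunov-total} are normalized so that $W\le|z|^2$, which is what makes $\lambda=\lambda_{\min}(M)$ the correct gain (otherwise $\lambda$ would pick up the upper-bound constant of $W$). A secondary technical point is that $\pspi$ is only Lipschitz, so $W$ is merely locally Lipschitz in $x$ and the CLF decrease inequality should be read in the almost-everywhere / Dini sense (or restricted to points of differentiability of $\pspi$); this does not affect QP feasibility, which is the pointwise statement that the affine-in-$u$ constraint set is nonempty at each $x$.
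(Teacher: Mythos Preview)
Your proposal is correct and follows essentially the same approach as the paper: exhibit the input $u^\star$ realizing $\mu(x,u^\star)=\pspi(\eta(x))$, note that $\pspi\ge0$ gives the CBF constraint \eqref{eq:clf-cbf-qp-cbf}, and invoke the dissipation bound \eqref{eq:proof-uvdot} from the proof of Theorem~\ref{thm:min-phase-feedback} to obtain the CLF constraint \eqref{eq:clf-cbf-qp-clf}. Your caveat about the normalization needed for $W\le|z|^2$ (so that $\lambda=\lambda_{\min}(M)$ is the correct rate) is well-placed and in fact sharper than the paper's own one-line justification, which simply cites \eqref{eq:proof-uvdot} without addressing the upper-bound constant of $W$.
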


\begin{proof} This result follows directly from Theorem \ref{thm:min-phase-feedback}. When the control input $u$ satisfies $\mu(x, u) = \pspi(\eta(x))$, the stability constraint \eqref{eq:clf-cbf-qp-clf} is satisfied by \eqref{eq:proof-uvdot} and the barrier constraint \eqref{eq:clf-cbf-qp-cbf} is satisfied because $\mu$ is nonnegative.
\end{proof}

\section{Multi-Input Case}
\label{sec:multi-input}

Next, we extend our analysis to the multi-input case of $m \ge 2$. The main distinction from the single-input case is that the control input appears in the internal dynamics, as we show below. This contrasts with the single-input case, where we can always find the internal state coordinates $\eta$ in which $u$ disappears in its dynamics (Theorem \ref{thm:diffeomorphism}). As such, we will define the notion of the minimum phase condition in a closed-loop sense for the multi-input case. % In Sections \ref{subsec:multi-input-linear} and \ref{subsec:multi-input-nonlinear}, through multi-input linear and nonlinear examples, we will see that we can use the control terms that appear in the internal dynamics to actively stabilize the internal states while satisfying the barrier constraint.

% \subsection{Internal Dynamics \& CBF Minimum Phase Condition}
% \label{subsec:multi-input-analysis}

In order for the internal dynamics to \textit{not} depend on $u$, $\eta$ must satisfy the condition below:
\begin{condition}[Non-actuated internal state]
\label{cond:nonactuated_internal_state}
There exists an open set around $x_0$ such that for all $x$ in the set,
\begin{equation}
    L_{g_i} \eta_j(x) = 0
\end{equation}
for all $i = 1, \cdots, m$, and $j =1, \cdots, n-r$.    
\end{condition}
However, in the multi-input case with $g_1, \cdots, g_m$ linearly independent at $x$, the control term will \textit{always} show up in the internal dynamics by the following theorem.

\begin{theorem}
\label{thm:multi-input-internal}
% Under Assumption \ref{assmp:g_indepent}, 
If $g_1, \cdots, g_m$ are linearly independent at $x$, for any coordinate transformation $\eta:\R^{n} \rightarrow R^{n-r}$ differentiable at $x$, satisfying both Conditions \ref{cond:nonsingular} and \ref{cond:nonactuated_internal_state} is not possible.
\end{theorem}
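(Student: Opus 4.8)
The plan is to derive a contradiction from a dimension count on the rows of the Jacobian appearing in Condition~\ref{cond:nonsingular}. Write $\mathcal{G} := \mathrm{span}\{g_1(x), \dots, g_m(x)\} \subseteq \R^n$; since the $g_i$ are linearly independent at $x$, we have $\dim \mathcal{G} = m$, so its annihilator $\mathcal{G}^{\circ} := \{ w \in \R^{1 \times n} : w\, g_i(x) = 0,\ i = 1,\dots,m \}$ has dimension $n - m$.

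First I would identify which rows of the Jacobian $[\,d\xi^\top\ d\eta^\top\,]^\top$ are forced to lie in $\mathcal{G}^{\circ}$. From $\xi = [h, L_f h, \dots, L_f^{r-1}h]^\top$ we have $d\xi_{k+1} = d(L_f^k h)$, and the relative-degree hypothesis of Definition~\ref{def:relative-degree} gives $L_{g_i} L_f^k h(x) = d(L_f^k h)(x)\, g_i(x) = 0$ for all $i$ and all $k = 0,\dots,r-2$. Hence $d\xi_1(x), \dots, d\xi_{r-1}(x) \in \mathcal{G}^{\circ}$. Likewise, Condition~\ref{cond:nonactuated_internal_state} states exactly that $L_{g_i}\eta_j(x) = d\eta_j(x)\, g_i(x) = 0$ for all $i, j$, so $d\eta_1(x), \dots, d\eta_{n-r}(x) \in \mathcal{G}^{\circ}$ as well.

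Then I would count. The covectors $d\xi_1, \dots, d\xi_{r-1}, d\eta_1, \dots, d\eta_{n-r}$ number $(r-1) + (n-r) = n - 1$, and all of them lie in $\mathcal{G}^{\circ}$, a subspace of dimension $n - m \le n - 2$ because $m \ge 2$. Therefore these $n - 1$ covectors are linearly dependent. But they constitute $n - 1$ of the $n$ rows of the Jacobian $[\,d\xi^\top\ d\eta^\top\,]^\top$; a matrix with $n - 1$ linearly dependent rows cannot have rank $n$, so the Jacobian is singular at $x$, contradicting Condition~\ref{cond:nonsingular}. Hence no differentiable $\eta$ can satisfy both conditions simultaneously.

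Since the argument is essentially a rank/dimension count, there is no real technical obstacle; the only point requiring care is the bookkeeping of exactly which derivative rows the relative-degree hypothesis forces into $\mathcal{G}^{\circ}$ — namely $dh$ through $d(L_f^{r-2}h)$, but \emph{not} $d(L_f^{r-1}h)$, since $L_g L_f^{r-1}h(x) \ne 0$. That single escaping row $d\xi_r$ is precisely why the count lands at $n-1$ rather than $n$, so $m \ge 2$ is exactly what is needed to overfill the annihilator. The degenerate case $r = 1$ (no $\xi$-rows in $\mathcal{G}^{\circ}$, just the $n - 1$ rows $d\eta_j$) is covered by the same count, and the standing assumption $r < n$ ensures $n - r \ge 1$ so the statement is non-vacuous.
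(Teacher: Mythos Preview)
Your proof is correct and is essentially the same argument as the paper's, packaged as a direct dimension count on the rows of the Jacobian rather than the paper's dual formulation. The paper instead constructs an explicit nonzero $q \in \mathcal{G}$ orthogonal to every $d\xi_k$ (using the same relative-degree observation that only the row $L_g L_f^{r-1}h$ survives in $d\xi \cdot [g_1\ \cdots\ g_m]$, so one can pick $q = [g_1\ \cdots\ g_m]\,c$ with $c$ orthogonal to that row) and hence to $\mathcal{G}^\circ + \mathrm{span}\{d\xi_k\}$; this is exactly the witness dual to your overcount of $n-1$ covectors in an $(n-m)$-dimensional annihilator.
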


\begin{proof}
    See Appendix \ref{appen:theorem-multi}.
\end{proof}

\begin{remark}
In \cite[Remark 5.1.3]{isidori2013nonlinear}, it is noted that finding the internal state coordinates $\eta$ whose dynamics does not depend on $u$ is possible for systems with different numbers of inputs and outputs when the distribution $\{g_1, \cdots, g_m\}$ is involutive. However, here, we prove that this is \textit{not} true for multi-input-single-output systems when $g_1, \cdots, g_m$ are linearly independent, regardless of the involutivity condition.
\end{remark}

Theorem \ref{thm:multi-input-internal} implies that with $\eta$ satisfying Condition \ref{cond:nonsingular}, establishing a valid diffeomorphism from $x$ to $(\phi, \eta)$, Condition \ref{cond:nonactuated_internal_state} cannot be satisfied and the CBF internal dynamics will be represented as
$$
\dot{\eta} = \qcbf (\eta, \phi, u),
$$
in which the control term $u$ always appears. Similarly to \cite{sastry1988feedback}, we consider the closed-loop dynamics of $\qcbf$ under a state-feedback control law $\pi:\R^{n}\rightarrow \R^m$ for $u$:
$$
\qcbf^{\pi}(\eta, \phi) := \qcbf (\eta, \phi, \pi(x)),
$$
where $\pi(x)$ can be expressed in terms of $(\phi, \eta)$ through the diffeomorphic transformation. By considering this closed-loop dynamics, we extend Definition \ref{cond:minimum-phase-feedback} to the multi-input case:

\begin{definition}[Closed-loop minimum phase condition for multi-input case] 
\label{cond:minimum-phase-multi-input} Consider a Lipschitz feedback controller $\pi:\R^{n}\rightarrow \R^m$, which determines $u$, satisfying the barrier constraint $\mu(x, \pi(x)) \ge 0$ for all $x \in S_\phi$. If under the policy $\pi$ and $\phi = \Gamma \mu$, the closed-loop internal dynamics $\qcbf^{\pi}$
% is stable,
% i.e.,
% \begin{equation}
%     \dot{\eta} = \qcbf^\pi (\eta, \Gamma \mu),
%     \label{eq:cbf-internal-dynamics-mimo-feedback}
% \end{equation}
has a (global) exponential stable equilibrium $\eta_e$, we say that the CBF $h$ satisfies the \textit{closed-loop (global) exponential minimum phase condition under $\pi$}.
\end{definition}

Note that with the restriction of $\phi = \Gamma \mu$, $\qcbf^{\pi}$ in Definition \ref{cond:minimum-phase-multi-input} depends solely on $\eta$, defining the closed-loop zero dynamics $\dot{\eta} = \qcbf^\pi (\eta)$ under $\pi$.
% \begin{equation}
%     \Psi_\phi^{-1}(\Gamma \mu, \eta)
% \end{equation}
% \begin{equation}
%     \pspi:=\mu(
% \end{equation}
Similarly, Theorem \ref{thm:min-phase-feedback} can be extended to the closed-loop dynamics $\qcbf^\pi$. By considering $\pspi(\eta) = \mu(x, \pi(x))$, we get the following corollary:

\begin{figure*}
\centering
\includegraphics[width=\textwidth]{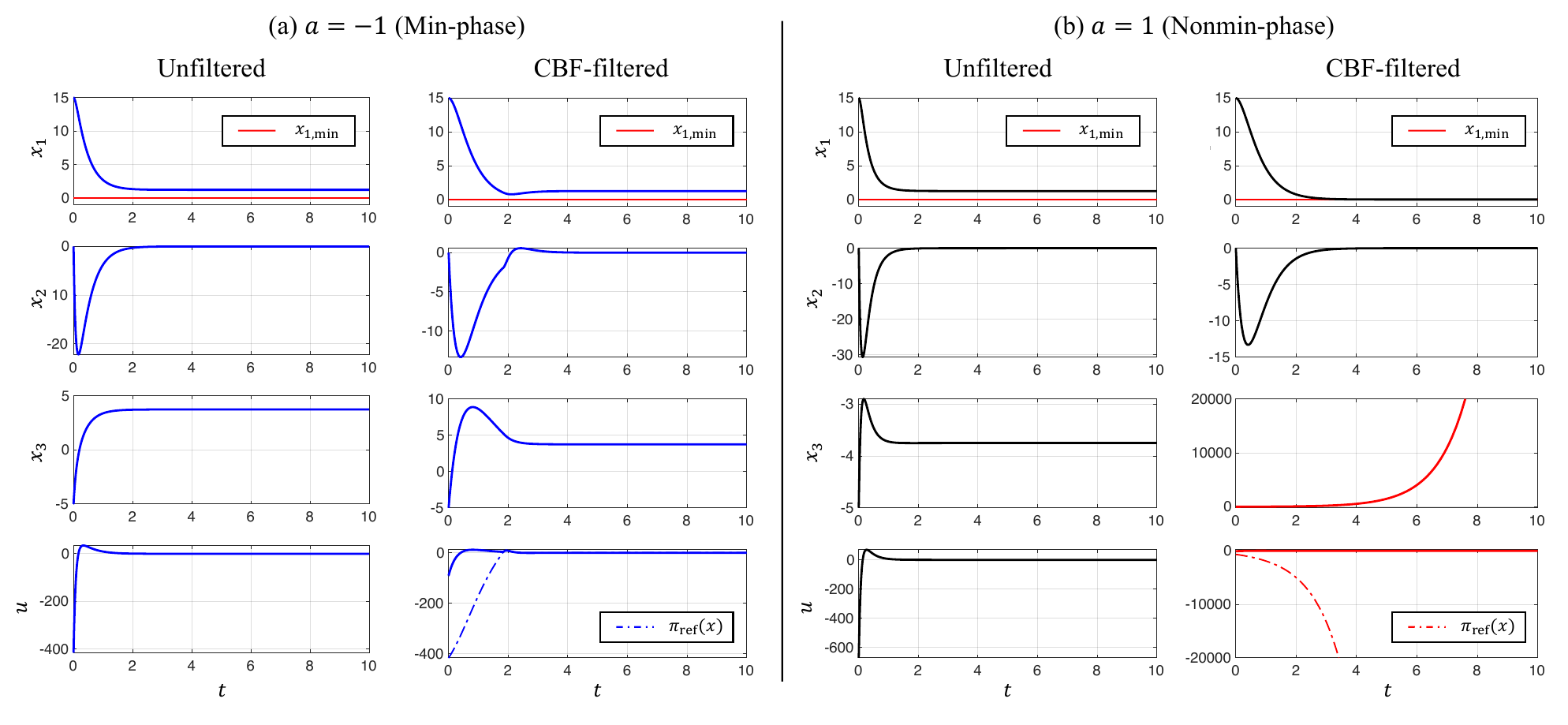}
\vspace{-2.25em}
\caption[\;]{Linear system example with single-input: state and input trajectories, under the LQR reference control input $\pi_{\textrm{ref}}(x)$ (Unfiltered), and the CBF-filtered control input $\pi_{\textrm{cbf}}(x)$ (CBF-filtered). The CBF filter is designed to ensure $x_1$ stays nonnegative, and $x_3$ is the internal state. Whereas (a) in the minimum phase condition ($a=-1$), the CBF-filtered system stays bounded, (b) in the nonminimum phase ($a=1$) condition, the CBF filter causes the internal state $x_3$ to explode\protect\footnotemark.}
\label{fig:linear-sys-ex}
\vspace{-1em}
\end{figure*}

\begin{corollary}
(Sufficient condition for boundedness of CBF-constrained dynamics---closed-loop multi-input case)
For the CBF $h$ which is Lipschitz continuous and the initial state $x_0 \in S_{\phi}$, let the closed-loop exponential minimum phase condition under $\pi$ (in Definition \ref{cond:minimum-phase-multi-input}) hold. Consider $V$ and $\alpha_{1, 2, 3, 4}$ defined in \eqref{eq:min-phase-lyap}. If $\alpha_4 > \left({\alpha_3 l_\phi \over 2}\right)^2$, and if $\gamma_{\min}$ is large enough, the system under the \barrierconstraint~can stay bounded. In particular, under $u=\pi(x)$, $\traj(t)$ converges to $x_e$, which is the solution to the equation,
\begin{equation}
    \Psi_{\phi} (x_e) = \big(\Gamma \mu(x_e, \pi(x_e)), \eta_e \big).
    \label{eq:equilibrium}
\end{equation}
\end{corollary}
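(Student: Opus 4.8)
The plan is to reduce the claim to the single-input machinery of Theorem~\ref{thm:min-phase-feedback} after closing the loop with $u = \pi(x)$. First I would substitute $u = \pi(x)$ into \eqref{eq:system} and rewrite the closed loop in the diffeomorphic coordinates $(\phi,\eta) = \Psi_{\phi}(x)$, obtaining $\dot{\phi} = A_{\gamma}\phi + B\bar\mu(\phi,\eta)$ and $\dot{\eta} = \qcbf^{\pi}(\eta,\phi)$, where $\bar\mu(\phi,\eta) := \mu(x,\pi(x))$ with $x = \Psi_{\phi}^{-1}(\phi,\eta)$, and $\bar\mu \ge 0$ on $S_{\phi}$ by the feasibility hypothesis on $\pi$ in Definition~\ref{cond:minimum-phase-multi-input}. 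On the CBF zero dynamics manifold $\phi = \Gamma\mu$, the relation $\mu = \bar\mu(\Gamma\mu,\eta)$ is a fixed-point equation in the scalar $\mu$; since every entry of $\Gamma$ in \eqref{eq:def-Gamma} carries the factor $\gamma_{\min}^{-1}$, the map $\mu \mapsto \bar\mu(\Gamma\mu,\eta)$ is a contraction once $\gamma_{\min}$ is large enough (using Lipschitzness of $\mu$ and $\pi$), so its unique solution defines a Lipschitz $\pspi(\eta)$, and $\qcbf^{\pi}(\eta) := \qcbf^{\pi}(\eta,\Gamma\pspi(\eta))$ is precisely the closed-loop zero dynamics of Definition~\ref{cond:minimum-phase-multi-input}, assumed globally exponentially stable with equilibrium $\eta_e$.

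Next I would introduce the error coordinate $\trajphidiff = \trajphi - \Gamma\pspi(\trajeta)$ and differentiate it as in \eqref{eq:error-dynamics}. Using $A_{\gamma}\Gamma\pspi(\eta) + B\pspi(\eta) = 0$, one gets $\dot{\trajphidiff} = A_{\gamma}\trajphidiff + B\bigl(\bar\mu(\trajphi,\trajeta) - \pspi(\trajeta)\bigr) - \Gamma\tfrac{\partial\pspi}{\partial\eta}\qcbf^{\pi}(\trajeta,\trajphi)$, and since $\bar\mu(\trajphi,\trajeta) - \pspi(\trajeta) = \bar\mu(\trajphi,\trajeta) - \bar\mu(\Gamma\pspi(\trajeta),\trajeta)$, Lipschitz continuity of $\bar\mu$ in $\phi$ bounds this term by a constant times $|\trajphidiff|$. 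Thus the error dynamics is that of the single-input case plus one extra term that is itself $O(|\trajphidiff|)$; it perturbs only the linear part $A_{\gamma}$, whose Lyapunov decay rate scales with $\gamma_{\min}$, so for $\gamma_{\min}$ large enough the perturbed linear part remains uniformly negative in the relevant metric. The $\eta$-equation, meanwhile, is $\dot{\trajeta} = \qcbf^{\pi}(\trajeta,\Gamma\pspi(\trajeta)+\trajphidiff) = \qcbf^{\pi}(\trajeta) + [\text{term Lipschitz in }\trajphidiff]$, exactly as in Theorem~\ref{thm:min-phase-feedback}.

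From here I would run the composite-Lyapunov argument of the proof of Theorem~\ref{thm:min-phase-feedback} (Appendix~\ref{appendix:feedback-theorem-proof}) essentially verbatim: take $V$ from the converse Lyapunov theorem for $\qcbf^{\pi}$ satisfying \eqref{eq:min-phase-lyap}, pair it with a quadratic $\trajphidiff^\top P \trajphidiff$ where $A_{\gamma}^\top P + P A_{\gamma} = -I$, and show the composite is a strict Lyapunov function provided $\alpha_4 > (\alpha_3 l_\phi / 2)^2$ and $\gamma_{\min}$ exceeds the threshold coming from the polynomial inequality analogous to \eqref{eq:proof-gamma-poly}, now with slightly enlarged coefficients due to the extra $O(|\trajphidiff|)$ term. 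Convergence $(\trajphidiff,\trajeta) \to (0,\eta_e)$ then yields $(\trajphi,\trajeta) \to (\Gamma\pspi(\eta_e),\eta_e)$, which pulled back through $\Psi_{\phi}$ is exactly the $x_e$ characterized by \eqref{eq:equilibrium} (note $\mu(x_e,\pi(x_e)) = \pspi(\eta_e)$), and boundedness follows. I expect the opening step to be the main obstacle: unlike the single-input case, where $\mu$ is a free virtual input simply set equal to $\pspi(\eta)$, here $\mu(x,\pi(x))$ is pinned by $\pi$ and genuinely depends on $\phi$ as well as $\eta$, so one must both justify that $\pspi$ is well-defined and Lipschitz — an implicit-function/contraction argument that itself consumes the largeness of $\gamma_{\min}$ — and carry the residual $\phi$-dependence of $\mu$ through the error dynamics without spoiling the $\gamma_{\min}$-based domination.
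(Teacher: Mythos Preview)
Your proposal is correct and follows the same high-level route as the paper: close the loop with $u=\pi(x)$, identify $\pspi$ with $\mu(x,\pi(x))$ on the zero-dynamics manifold, and then run the composite-Lyapunov argument of Theorem~\ref{thm:min-phase-feedback}. The paper's own proof is a single sentence that simply invokes Theorem~\ref{thm:min-phase-feedback} with $\pspi(\eta)=\mu(x,\pi(x))$ and observes that \eqref{eq:equilibrium} is the pullback of $(\Gamma\pspi(\eta_e),\eta_e)$ through $\Psi_\phi$. You go further than the paper by naming two subtleties it leaves implicit: (i) $\mu(x,\pi(x))$ depends on the full state, so extracting a well-defined Lipschitz $\pspi(\eta)$ from the constraint $\phi=\Gamma\mu$ requires the fixed-point/contraction step you sketch via $\|\Gamma\|=O(\gamma_{\min}^{-1})$; and (ii) because $\mu$ is no longer a free virtual input set equal to $\pspi(\eta)$, the $\trajphidiff$-equation acquires an extra $B(\bar\mu(\phi,\eta)-\pspi(\eta))=O(|\trajphidiff|)$ term absent from the single-input proof, which you correctly absorb into the $A_\gamma$-part using $\|PB\|\le\|P\|=O(\gamma_{\min}^{-1})$. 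Both issues are resolved by the same large-$\gamma_{\min}$ mechanism already present in Appendix~\ref{appendix:feedback-theorem-proof}, so the paper's terse reduction is ultimately defensible; your version is simply the more honest accounting of what that reduction entails.
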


\begin{proof}
This is a corollary of Theorem \ref{thm:min-phase-feedback} where \eqref{eq:equilibrium} defines the coordinate transformation from $x_e$ to $(\Gamma \pspi(\eta_e), \eta_e)$, which $(\trajphi(t), \trajeta(t))$ converges to as $t\rightarrow\infty$.
\end{proof}

In Section \ref{sec:multi-input-examples}, we will see through examples how the control input appearing in the internal dynamics can be utilized to shape the closed-loop minimum phase condition.

\vspace{-0.5em}
\section{Simulation Study: Single-Input Case}
\label{sec:simulations}

We now present two simulation examples---a linear and a nonlinear system---to verify the analysis we conducted in Section \ref{sec:minimum-phase-cbf} for single-input systems under the barrier constraint.

\vspace{-0.5em}
\subsection{Linear System}
\label{subsec:linear-system-result}

\subsubsection{Dynamics \& reference controller} Consider the problem of designing a stabilizing controller for a single-input linear system, described by $\dot{x} = Ax + Bu$, with
\begin{equation}
    A = \begin{bmatrix}
        0 & 1 & 0 \\
        0 & 0 & 0 \\
        3 & 1 & a
    \end{bmatrix}, \quad B = \begin{bmatrix}
        0 \\ 1 \\ 0
    \end{bmatrix},
\label{eq:linear-sys-ex}
\end{equation}
where $a$ is a model parameter. This system is controllable for any value of $a$, and any $x_e$ in the span of $[a, 0, -3]^\top$ is an equilibrium point of the undriven system. 

Consider a linear feedback controller $\pi_{\textrm{ref}}(x) = -K(x - x_e)$, whose feedback gain $K$ is derived as an LQR solution. Under this reference controller, $x$ converges to $x_{e}$ for any value of $a$. Figure \ref{fig:linear-sys-ex} (Unfiltered) shows such trajectories of the system when (a) $a=-1$ and (b) $a=1$, both successfully stabilizing to $x_{e}=[1.25, 0, 3.75]^\top$, $[1.25, 0, -3.75]^\top$, respectively. % In both cases, the system successfully stabilizes to the desired equilibria.

\subsubsection{Safety constraint, CBF, output dynamics} The safety constraint is to maintain $x_1$ to be nonnegative, i.e., $x_1(t) \ge 0\; \forall t\ge 0$. To achieve this for any reference controllers that might violate safety constraints---for instance, $\pi_{\textrm{ref}}$ with updated feedback gain $K$---we consider a CBF given as $h(x)\!=\!x_1$, whose relative degree $r$ is two, and the constraint coefficient vector $k = [6, 5]^\top$ is used, which results in $\gamma_1 = 2, \gamma_2 = 3$, and $\Gamma = [1\slash 6, 1\slash 3]^\top$ in \eqref{eq:def-Gamma}. We use a standard min-norm CBF filter which would filter $\pi_{\textrm{ref}}$ only when it violates the constraint, as
\vspace{-0.5em}
%That is, we apply
\begin{equation}
 \pi_{\textrm{cbf}}(x) = \max\left\{- (\gamma_1 \!+ \!\gamma_2) x_2 - \gamma_1 \gamma_2 x_1, \pi_{\textrm{ref}}(x) \right\}.
 \label{eq:single-input-ex-cbf-policy}
\vspace{-0.25em}
\end{equation}
% Note that the system is controllable and the safety filter minimally intervenes only when $\pi_{\textrm{ref}}(x)$ does not satisfy the barrier constraint. 

The CBF output derivative vector and the cascading constraint vector are given as
\begin{equation}
    \xi = \begin{bmatrix}
        x_1 \\ x_2
    \end{bmatrix}, \quad \phi = T \xi = \begin{bmatrix} 1 &  0 \\ 2 & 1 \end{bmatrix} \begin{bmatrix}
        x_1 \\ x_2
    \end{bmatrix} = \begin{bmatrix} x_1 \\ 2 x_1 + x_2\end{bmatrix}.
\end{equation}
The CBF output dynamics, represented in the cascading constraint form, is
\vspace{-0.5em}
\begin{equation}
        \dot{\phi} = \begin{bmatrix}
        -2 & 1 \\ 0 & -3
    \end{bmatrix} \phi + \begin{bmatrix}
        0 \\ 1 
    \end{bmatrix} \mu.
    \label{eq:linear-ex-cbf-output-dynamics}
\end{equation}

\subsubsection{CBF internal \& zero dynamics}
The internal coordinate can be easily found as $\eta = x_3$, and the CBF internal dynamics is expressed as \vspace{-0.5em} 
\begin{equation}
    \dot{\eta} = a \eta + \begin{bmatrix}
        1 & 1
    \end{bmatrix} \phi,
    \label{eq:single-input-example-zero} \vspace{-0.25em}
\end{equation}
With $\phi = \Gamma \mu$, the CBF zero dynamics becomes \vspace{-0.25em}
$$\dot{\eta} = a \eta + \frac{1}{2} \mu. \vspace{-0.25em}$$
Under the barrier constraint, $\mu \ge 0 $, the zero dynamics is stable only when $a < 0$, and is unstable when $a > 0$, which matches with Corollary \ref{cor:linear-system-case}.

%% FOOTNOTE FOR LINEAR EXAMPLE FIGURE
\footnotetext{In this example, the CBF safety filter intervenes in $\pi_{\textrm{ref}}(x)$ even when $x_1 \ge 0$ is satisfied under unfiltered $\pi_{\textrm{ref}}(x)$, since $\pi_{\textrm{ref}}(x)$ violates the barrier constraint \eqref{eq:barrier-constraint}, which constrain how quickly $x_1$ decrease to zero.}

\begin{figure*}
\centering
\includegraphics[width=\textwidth]{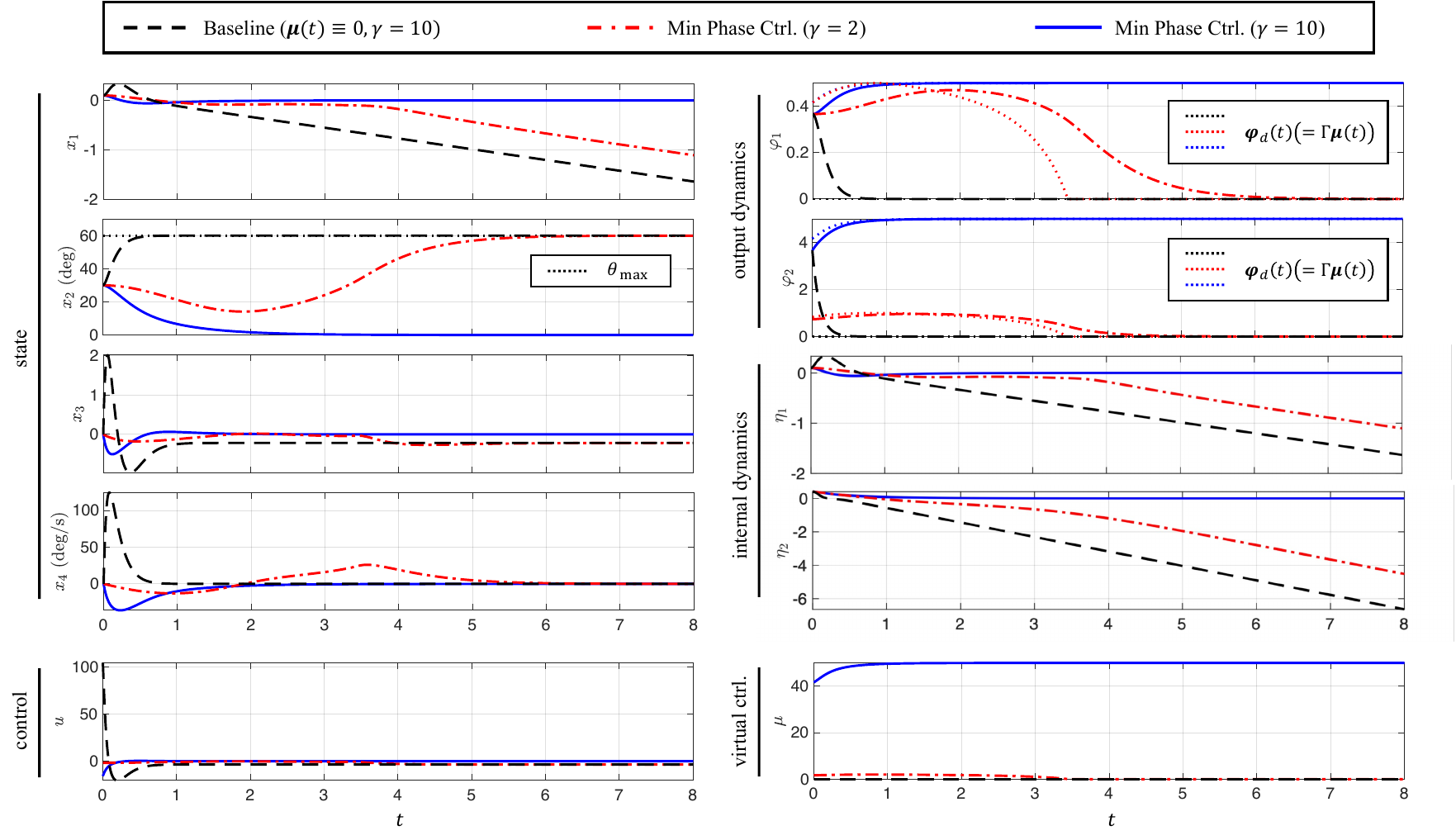}
\vspace{-1.75em}
\caption{Nonlinear system example with single input (cart-pole dynamics)---Baseline (black dashed): Barrier constraint imposed by naive safety filtering results in the internal state $\eta$ to drift. Min Phase Ctrl: Under the virtual control input feedback law that positively stabilizes the internal dynamics, with $\gamma=2$ (red dash-dotted), the internal state still drifts due to slow output dynamics response. When $\gamma=10$ (blue solid), the internal state is stabilized and the entire state stays bounded.}
\label{fig:carte-pole}
\vspace{-1.25em}
\end{figure*}

\subsubsection{Results}

Now, we test the CBF filter under the model parameters $a = -1$ and $1$. If $a < 0$, we can ensure the boundedness of $\eta$ according to Corollary \ref{cor:linear-system-case}. When $a = -1$ (Figure \ref{fig:linear-sys-ex} (a) CBF-Filtered), the internal state $x_3$ converges to $\eta_e = 3.75$ under $\mu_e = 7.5$ in Definition \ref{cond:minimum-phase-simple}. Although the CBF filter slightly overrides the reference controller, $x_1$ is still maintained nonnegative and $x$ converges to $x_{e}$, barely compromising $\pi_{\textrm{ref}}$. 

Next, we test the CBF filter under $a = 1$ (Figure \ref{fig:linear-sys-ex} (b) CBF-Filtered). While at first glance, the CBF-filter did its job in maintaining $x_1$ to be nonnegative, it not only failed to stabilize the system to $x_{e}$ (as safety is prioritized over stability), but the internal state $x_3$ of the CBF-constrained dynamics blows up. Whatever nonnegative signal $\trajmu(\cdot)$ is used, the internal dynamics will still diverge when the initial value of $\eta$ is positive according to Corollary \ref{cor:linear-system-case}. Thus, under this circumstance, the designer must not use the CBF safety filter.

\vspace{-0.5em}

\subsection{Nonlinear System Example}
\label{subsec:nonlinear-ex}

In this nonlinear example, we highlight that even under the minimum phase condition (Definition \ref{cond:minimum-phase-feedback}), 1) naive CBF-based safety filtering might result in an unbounded internal state; however, 2) this can be prevented by stabilizing the internal dynamics while satisfying the barrier constraint.

\subsubsection{Dynamics} We consider the nonlinear mechanics of a cart-pole system with an additional nonlinear drag term. The state consists of $x = [s, \theta, \dot{s}, \dtheta]^\top$, where $s$ is the cart position and $\theta$ is the pole angle with respect to the upright configuration. The control input $u$ is the horizontal force on the cart. The positional and angular accelerations are given as
\vspace{-0.5em}

\begin{subequations}
\label{eq:cart-pole}
    \begin{align}
        & \resizebox{0.87\hsize}{!}{$\displaystyle(1 \!+\!\sin^2\theta) \ddot{s}\! = - \dtheta^2 \sin{\theta}  + \cos\theta \sin\theta \!- \!\frac{b \dot{s}}{\cos\theta} \!(1 \!+ \!\sin^2\theta)\!+\!u,$} \label{eq:cart-pole1} \\
    & \resizebox{0.72\hsize}{!}{$\displaystyle(1 \!+ \!\sin^2\theta) \ddtheta\! = - \dtheta^2 \cos\theta \sin\theta  + 2 \sin\theta + \cos\theta \; u,$} \label{eq:cart-pole2}
    \end{align}
\end{subequations}

\noindent where $b=4$ is a drag coefficient. From \eqref{eq:cart-pole}, we can derive the state dynamics in the form of \eqref{eq:system}.

\subsubsection{Safety constraint, CBF, output dynamics}
The safety constraint we consider is to keep the pole angle within the range, $|\theta| \le \theta_{\max}$, where $\theta_{\max}$ is set to $60$\si{\degree}. To satisfy this constraint, we consider a CBF given as
\vspace{-0.25em}
\begin{equation}
    h(x) = \cos \theta - \cos \theta_{\max}.
    \label{eq:cart-pole-cbf}
    \vspace{-0.25em}
\end{equation}
This CBF has relative degree 2, and the CBF output derivative vector $\xi$ is given by 
\vspace{-0.25em}
\begin{equation}
\begin{aligned}
\xi_1 & = h(x) = \cos \theta - \cos \theta_{\max}, \\
\xi_2 & = \dot{\xi}_1 = -\dtheta \sin \theta.
\end{aligned}
\vspace{-0.25em}
\end{equation}

We set $\gamma = \gamma_1 = \gamma_2$, which results in $k$ in \eqref{eq:barrier-constraint} to be $[\gamma^2, 2\gamma]^\top$, and $\Gamma$ in \eqref{eq:def-Gamma} to be $[1\slash \gamma^2, \slash\gamma]^\top$. The virtual control input and the barrier constraint becomes
\vspace{-0.25em}
\begin{align}
\label{eq:cart-pole-mu}
    \mu & = \dot{\xi}_2 + 2\gamma \xi_2 + \gamma^2 \xi_1 \\
    & = \resizebox{0.9\hsize}{!}{$\displaystyle-\ddtheta \sin\theta - \dtheta^2 \cos\theta - 2 \gamma \dtheta \sin\theta + \gamma^2 (\cos\theta - \cos\theta_{\max}) \ge 0,$} \nonumber \vspace{-0.25em}
\end{align}
where $\ddtheta$ is determined by $u$ from \eqref{eq:cart-pole2}.

The cascading constraint vector $\phi$ is given by $\phi_1 = \xi_1$, $\phi_2 = \xi_2 + \gamma \xi_1$. When $\phi$ is constrained on the equilibrium manifold, $\Phi_e$ \eqref{eq:equilibria-manifold},  $\phi_2 = \gamma \phi_1$, and $\xi_2 = -\dtheta \sin \theta = 0$.

\subsubsection{CBF internal \& zero dynamics}

Based on Condition \ref{cond:nonactuated_internal_state}, we can find the internal state $\eta$ as
\vspace{-0.25em}
\begin{equation}
    \eta_1 = s, \quad
    \eta_2 = \dot{s} \cos\theta - \dtheta + b s, \vspace{-0.25em}
\label{eq:cart-pole-internal-state}
\end{equation}
and the internal dynamics becomes
\begin{equation}
\begin{aligned}
\dot{\eta}_1 & = \frac{1}{\cos\theta}\left(- b \eta_1 + \eta_2 + \dtheta \right), \\
\dot{\eta}_2 & = -\sin\theta \left(1 + \dot{s} \dtheta\right).
\end{aligned}
\end{equation}

If $\theta \neq 0$, on the CBF zero dynamics manifold ($\phi \in \Phi_e$), $\dtheta = 0$, since $\dtheta \sin \theta = 0$. Thus, the zero dynamics becomes
\begin{equation}
\label{eq:cart-pole-zero-dynamics}
\begin{aligned}
\dot{\eta}_1 & = \frac{1}{\cos\theta}\left(- b \eta_1 + \eta_2 \right), \\
\dot{\eta}_2 & = -\sin\theta.
\end{aligned}
\end{equation}

\subsubsection{Design of positive-feedback-stabilizable minimum phase condition}
\label{subsec:cart-pole-min-phase-design}

We design $\pspi(\eta)$ in Theorem \ref{thm:min-phase-feedback} indirectly by designing a desired $\theta$ that would result in a minimum phase zero dynamics. In \eqref{eq:cart-pole-zero-dynamics}, if $\theta$ becomes
\begin{equation}
    \theta_d(\eta) := \arcsin \left( \max \{ \min\{\eta_2, \sin\theta_{\max}\}, -\sin\theta_{\max} \} \right),
\label{eq:theta_d}
\end{equation}
we have the zero dynamics being
\begin{equation}
\label{eq:cart-pole-zero-dynamics-feedback}
\begin{aligned}
\dot{\eta}_1 & = \frac{1}{\cos\theta}\left(- b \eta_1 + \eta_2 \right), \\
\dot{\eta}_2 & = - \max \{ \min\{\eta_2, \sin\theta_{\max}\}, -\sin\theta_{\max} \}.
\end{aligned}
\end{equation}
When $|\eta_2|$ is not saturated to $\sin\theta_{\max}$, the Jacobian of \eqref{eq:cart-pole-zero-dynamics-feedback} becomes
\begin{equation*}
    \begin{bmatrix}
        -b \slash \cos\theta & 1 \slash \cos\theta \\
        0 & -1
    \end{bmatrix},
\end{equation*}
and since $\cos \theta \ge \cos \theta_{\max} > 0$, the zero dynamics becomes exponentially stable. The corresponding virtual control law for $\mu$ is
\begin{equation}
\label{eq:cart-pole-min-phase-design}
    \pspi(\eta) = \gamma^2 \left(\cos\theta_d(\eta) - \cos\theta_{\max} \right),
\end{equation}
which is nonnegative since $|\theta_d(\eta)| \le \theta_{\max}$ from \eqref{eq:theta_d}. We can compute $u$ that results in $\mu \equiv \pspi(\eta)$ from \eqref{eq:cart-pole2} and \eqref{eq:cart-pole-mu}.

\subsubsection{Results}
We test three controllers in simulation from the initial state $x_0 = [0.1, \pi\slash6, 0, 0]^\top$, all of which are constrained by the barrier constraint \eqref{eq:cart-pole-mu}. 
When a naive CBF-based safety filter is implemented, if the reference $u$ does not satisfy the barrier constraint, the barrier constraint will continue to saturate ($\mu =0$). We first test this case as our baseline. As shown in Figure \ref{fig:carte-pole} (Baseline), although $x_2$ is maintained below $\theta_{\max}$, the internal state $\eta$ and the cart position continues to drift.

We next impose the minimum phase condition in Definition \ref{cond:minimum-phase-feedback} through the feedback control designed in \eqref{eq:cart-pole-min-phase-design}. We test this controller under two different values of the CBF convergence rate, $\gamma=2$, and $10$. As shown in Figure \ref{fig:carte-pole}, we observe that when $\gamma = 2$ (red), the drift of the internal state $\eta$ is still not prevented. In contrast, when $\gamma = 10$ (blue), the internal dynamics is successfully stabilized, and all state variables stay bounded. Recall that in Theorem \ref{thm:min-phase-feedback}, the boundedness is only guaranteed when $\gamma$ is large enough. Our simulation result confirms the necessity of this condition.

\section{Simulation Study: Multi-Input Case}
\label{sec:multi-input-examples}
In this section, by introducing additional control inputs to the previous single-input linear and nonlinear examples, we will see how we can shape the multi-input closed-loop minimum phase condition by designing $\pi$ in Definition \ref{cond:minimum-phase-multi-input}.

\vspace{-0.5em}

\subsection{Linear System Example}
\label{subsec:multi-input-linear}

We add an additional control input to the single-input system in \eqref{eq:linear-sys-ex}, making $m=2$, given as
\begin{equation}
    A = \begin{bmatrix}
        0 & 1 & 0 \\
        0 & 0 & 0 \\
        3 & 1 & a
    \end{bmatrix}, \quad B = \begin{bmatrix}
        0 & 0 \\ 1 & 0\\ 0 & 1
    \end{bmatrix}.
\label{eq:linear-sys-ex-mi}
\end{equation}
With the same CBF $h(x) = x_1$ considered, the CBF output dynamics remains the same as \eqref{eq:linear-ex-cbf-output-dynamics}. We use the same values of $\gamma_1 = 2$, $\gamma_2 = 3$ as before, resulting in the CBF virtual input $\mu = u_1 + 6 x_1 + 5 x_2$. We consider $a=1$ in this multi-input case, which led to the unbounded internal state in the previous single-input case. 

Considering the internal coordinate $\eta = x_3$, we get the internal dynamics $\dot{\eta} = \eta + \begin{bmatrix}
1 & 1
\end{bmatrix} \phi + u_2$,
% \vspace{-0.25em}
% \begin{align}
% \dot{\eta} = \eta + \begin{bmatrix}
% 1 & 1
% \end{bmatrix} \phi + u_2,
% \vspace{-0.25em}
% \end{align}
and the zero dynamics $\dot{\eta} = \eta + \frac{1}{2} \mu + u_2.$
% \vspace{-0.25em}
% \begin{align}
% \dot{\eta} = \eta + \frac{1}{2} \mu + u_2.
% \vspace{-0.25em}
% \end{align}
Now we can utilize $u_2$ to impose a stability condition on the internal dynamics. We design an equality condition given as
\vspace{-0.25em}
\begin{equation}
    \frac{1}{2} u_1 + 3 x_1 + \frac{5}{2}x_2 + u_2 = -2 x_3,
    \label{eq:linear-mimo-ex-internal}
    \vspace{-0.25em}
\end{equation}

\begin{figure}[t]
\centering
\vspace{-1em}
\includegraphics[width=\columnwidth]{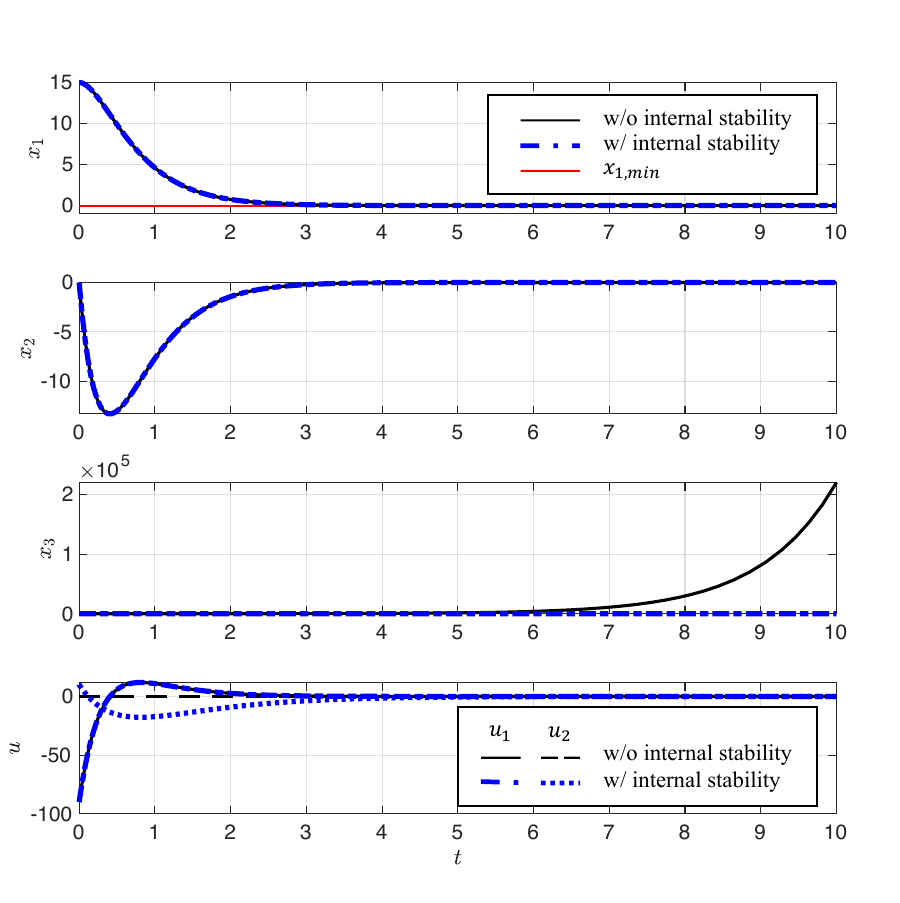}
\vspace{-2.25em}
\caption{Linear system example with multi-input. While the naive CBF safety filter results in unstable internal dynamics (black), by utilizing the additional input $u_2$, we can design a CBF safety filter that incorporates the additional condition \eqref{eq:linear-mimo-ex-internal}, designed for the stability of the internal dynamics (blue).}
\label{fig:linear-mimo}
\vspace{-1em}
\end{figure}

\noindent which is equivalent to $\frac{1}{2} \mu + u_2 = -2 \eta$. Thus, the equality condition results in the zero dynamics of $\dot{\eta} = -\eta$, which becomes the exponential minimum phase condition.

\begin{figure*}
\centering
\includegraphics[width=\textwidth]{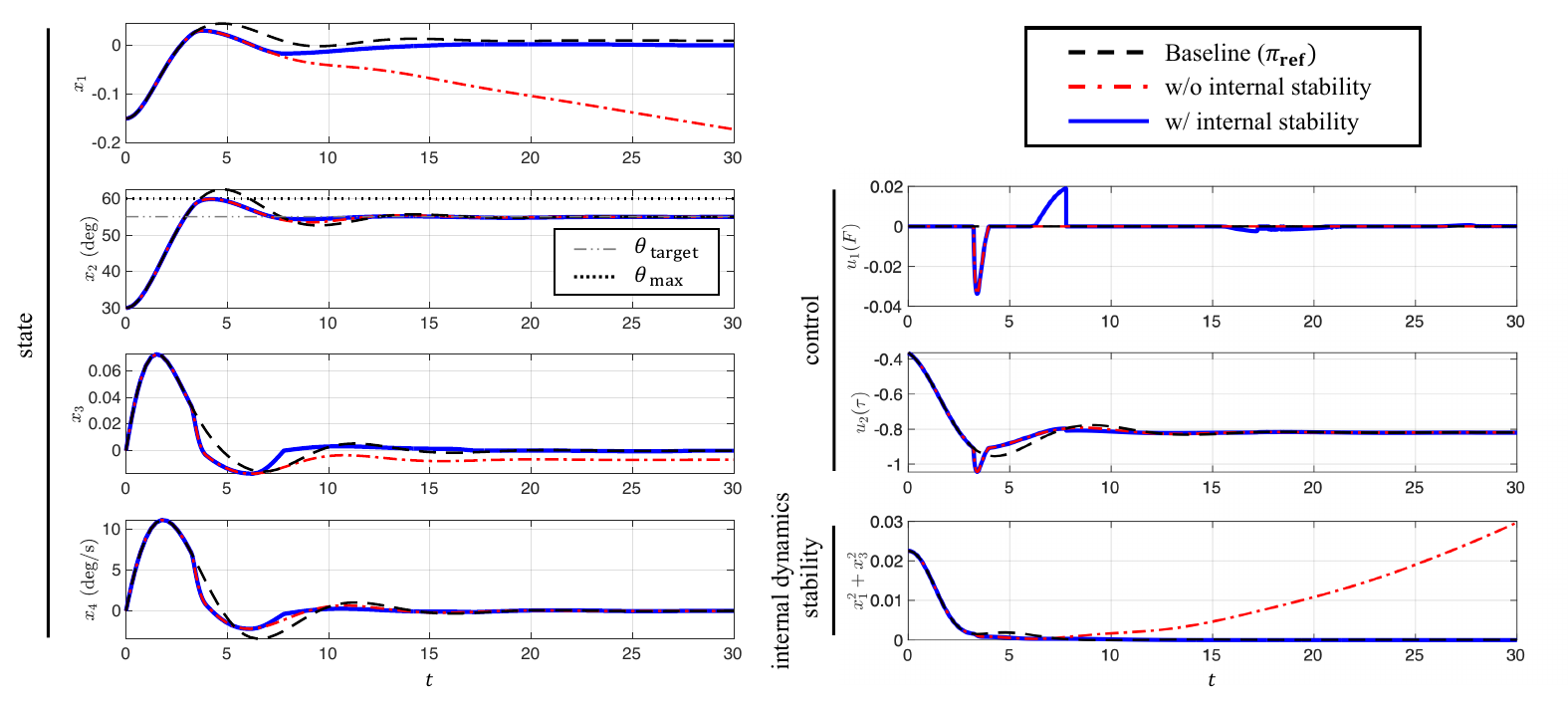}
\vspace{-1.5em}
\caption{Nonlinear system example with multi-input (cart-pole dynamics with force and torque)---(Black dashed) The reference feedback linearization controller stabilizes pole angle $x_2$ to a target value $55^\circ$, while violating the constraint $x_2 \le \theta_{\max} = 60^\circ$. (Red dash-dotted) The plain CBF safety filter prevents the safety violation, but as a result, $\eta_1=x_1$ drifts, causing instability in internal dynamics. (Blue solid) The additional internal stability condition incorporated into the CBF safety filter stabilizes the internal dynamics and prevents the cart from drifting, while satisfying the safety constraint.}
\label{fig:carte-pole-mimo}
\vspace{-1em}
\end{figure*}

The barrier constraint inequality and the internal stability condition in \eqref{eq:linear-mimo-ex-internal} can be imposed together into the CBF safety filter, which becomes a quadratic program with one linear inequality and one linear equality constraint. The trajectory under the filtered control signal, with the reference controller we designed in Section \ref{subsec:linear-system-result}, is shown in Figure \ref{fig:linear-mimo}.

\vspace{-0.5em}
\subsection{Nonlinear System Example}
\label{subsec:multi-input-nonlinear}

We modify the cart-pole example in Section \ref{subsec:nonlinear-ex} by introducing a torque between the cart and the pole. We remove the drag term, resulting in a nonminimum-phase condition in the previous single-input case. The dynamics becomes
\begin{subequations}
\label{eq:cart-pole-mimo}
    \begin{align}
        &\resizebox{0.82\hsize}{!}{$\displaystyle(1 \!+\!\sin^2\theta) \ddot{s}\! = - \dtheta^2 \sin\theta  + \cos\theta \sin\theta \! + u_1 + \cos\theta u_2,$} \label{eq:cart-pole-mimo1}\\
    &\resizebox{0.87\hsize}{!}{$\displaystyle(1 \!+ \!\sin^2\theta) \ddtheta\! = - \dtheta^2 \cos\theta \sin\theta  + 2 \sin\theta + \cos\theta u_1 + 2 u_2,$} \label{eq:cart-pole-mimo2}
    \end{align}
\end{subequations}
where $u_1$ is the horizontal force, and $u_2$ is the torque input. Since there is no drag, the linear momentum can only be altered through $u_1$, determining the evolution of $s$ and $\dot{s}$.

We consider a reference controller $\pi_{\text{ref}}$ that stabilizes the pole angle $\theta$ to a target value $\theta_{\scriptstyle\textrm{target}}=55^\circ$, while enforcing the safety constraint $|\theta| \!\le\! \theta_{\max}\!=\!60^\circ$. % , where $\theta_d = 55^\circ$, $\theta_{\max} = 60^\circ$. 
The reference controller uses only the torque $u_2$ and is designed using an IO linearization control law with output $y = \theta - \theta_{\scriptstyle\textrm{target}}$ (Fig. \ref{fig:carte-pole-mimo} black). Although this controller drives $\theta$ to $\theta_{\scriptstyle\textrm{target}}$, it violates the safety constraint during the transient phase. To prevent this, we apply a min-norm CBF safety filter, using the CBF defined in \eqref{eq:cart-pole-cbf} (Fig. \ref{fig:carte-pole-mimo} red). The filter overrides the reference input by intervening on $u_1$, which leads to the accumulation of linear momentum and instability in the CBF internal dynamics.

According to Theorem \ref{thm:multi-input-internal}, due to the additional input, the internal dynamics depends on $u$. To simplify the analysis, we instead consider an alternative internal state defined as $\eta = [s\; \dot{s}]^\top$. The resulting internal dynamics is

\;\vspace{-2.5em}
\begin{align}
\dot{\eta}_1 & = \eta_2, \label{eq:cart-pole-mimo-internal} \\
\dot{\eta}_2 & = \frac{- \dtheta^2 \sin\!\theta  + \cos\!\theta \sin\!\theta}{1 + \sin^2\!\theta} + \frac{1}{1 + \sin^2\!\theta}\begin{bmatrix}
    1 & \cos\!\theta
\end{bmatrix}\begin{bmatrix}
    u_1 \\ u_2
\end{bmatrix}. \nonumber
\end{align}

Using this internal state, we define a Lyapunov function candidate $V(\eta) = \eta_1^2 + \eta_2^2$ to design a safety filter that ensures the boundedness of $\eta$. We adopt the CLF-CBF-QP formulation from \eqref{eq:clf-cbf-qp} by replacing $W(x)$ with $V(\eta)$. The Lie derivatives of $V$ can be evaluated based on \eqref{eq:cart-pole-mimo-internal}. This results in a safety filter that incorporates both the CBF-based barrier constraint and a CLF-based stability constraint on the internal state.

As shown in Fig.~\ref{fig:carte-pole-mimo} (blue), the CLF-CBF-QP safety filter achieves three goals: 1) it successfully maintains boundedness of all state variables, 2) while still achieving the original control objective of driving $\theta$ to $\theta_{\scriptstyle\textrm{target}}$, 3) without exceeding the safety threshold $\theta_{\max}$. The plot of $u_1(t)$ shows that the additional CLF condition ensures the cancellation of the accumulated linear momentum resulting from the barrier constraint's intervention, thereby preventing the cart from drifting. % Finally, we observe that $\phi(t)$ converges to $\Gamma \mu(t)$ at the end of the simulation, verifying that the system state converges to the equilibrium $x_e$ in \eqref{eq:equilibrium}.

\vspace{-0.5em}
\section{Conclusion \& Future Work}
\label{sec:conclusion}

In this work, we investigated the internal dynamics of systems governed by CBF-constrained control laws and proposed a new set of CBF minimum phase conditions. When a given CBF satisfies one of these conditions, along with certain additional requirements, the resulting CBF-constrained dynamics can be guaranteed to remain bounded. Central to our formulation is the new notion of CBF zero dynamics, in which the virtual control input must remain nonnegative, reflecting the enforcement of the barrier constraint. From this perspective, the boundedness of the state can be ensured by the positive stabilizability of the CBF zero dynamics with a small number of additional conditions.

While our analysis and the proposed CBF minimum phase conditions establish a step toward certifying the safe use of CBF-based safety filters without introducing undesirable behaviors, several important directions remain for future work.
First, in the single-input case, while our minimum phase conditions serve as sufficient conditions for boundedness, it remains an open question whether these conditions can be further relaxed. Note that in the multi-input case, additional input variables that appear in the internal dynamics can be used to actively stabilize the internal state. Second, although our conditions are framed in terms of positive stabilizability, a constructive approach to verifying this property must still be developed. One possible direction is to verify the existence of a Lyapunov function (e.g., as in \eqref{eq:min-phase-lyap}) certifying positive stabilizability. Finally, our current analysis assumes unbounded control inputs, consistent with the original formulation of exponential CBFs \cite{nguyen2016exponential}. However, in many realistic systems, control bounds are themselves safety constraints, and such bounds can limit the ability to positively stabilize the internal dynamics. We believe that constructive techniques—such as sum-of-squares (SOS) programming \cite{dai2023convex, zhao2023convex}, which can incorporate nonnegativity and control bounds explicitly—hold promise for deriving practical, certifiable conditions under which CBF-based safety filters can be deployed without risking unbounded system behavior.

% \section*{Appendix}
\vspace{-0.5em}
\appendix

% \setcounter{subsection}{0}
% \vspace{-0.5em}
\subsection{Lemmas}
\label{appen:lemmas}

Since $A_{\gamma}$ is Hurwitz, there exists the Lyapunov matrix $P$ that is positive definite, which is the solution of a Lyapunov equation,
\begin{equation}
    A_{\gamma}^\top P + P A_{\gamma} = -I.
\label{eq:lyap}
\end{equation}

\begin{lemma}[Boundedness of $\trajphidiff$] 
\label{lemma:boundedness-phidiff}
For a Lipschitz continuous virtual control input signal $\trajmu(\cdot)$ such that $\trajmu(t) \ge 0$ for all $t \ge 0$, $\trajphidiff(\cdot)$ stays bounded.
\end{lemma}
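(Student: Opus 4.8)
The plan is to view the error dynamics \eqref{eq:error-dynamics}, namely $\dot{\trajphidiff}(t) = A_{\gamma}\trajphidiff(t) - \Gamma\dot{\trajmu}(t)$, as a linear time-invariant system with Hurwitz system matrix $A_{\gamma}$ driven by a bounded exogenous signal. Since $\trajmu(\cdot)$ is Lipschitz continuous in $t$, it is absolutely continuous, so $\dot{\trajmu}(t)$ exists for almost every $t$ and satisfies $|\dot{\trajmu}(t)|\le l_{\mu}$, where $l_{\mu}$ denotes the Lipschitz constant of $\trajmu(\cdot)$. With $|\Gamma|$ finite from \eqref{eq:def-Gamma}, the forcing term is therefore essentially bounded: $|\Gamma\dot{\trajmu}(t)|\le |\Gamma|\,l_{\mu}=:c$ for almost every $t\ge 0$. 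Note that the nonnegativity of $\trajmu$ plays no role here; only its Lipschitz continuity is used.

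A first, short route would be the variation-of-constants formula for \eqref{eq:error-dynamics}, $\trajphidiff(t) = e^{A_{\gamma}t}\trajphidiff(0) - \int_0^t e^{A_{\gamma}(t-s)}\Gamma\dot{\trajmu}(s)\,ds$. Since $A_{\gamma}$ is Hurwitz there exist $M\ge 1$, $\alpha>0$ with $\|e^{A_{\gamma}t}\|\le M e^{-\alpha t}$, and taking norms gives $|\trajphidiff(t)|\le M|\trajphidiff(0)| + Mc\int_0^t e^{-\alpha(t-s)}\,ds \le M|\trajphidiff(0)| + Mc/\alpha$, a uniform bound in $t$.

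Alternatively — and this is the route I would actually spell out, since it reuses the Lyapunov matrix $P$ already introduced in \eqref{eq:lyap} — I would take $V(\trajphidiff):=\trajphidiff^\top P\trajphidiff$. Along \eqref{eq:error-dynamics}, for almost every $t$,
\[
\dot{V} = \trajphidiff^\top(A_{\gamma}^\top P + P A_{\gamma})\trajphidiff - 2\trajphidiff^\top P\Gamma\dot{\trajmu} = -|\trajphidiff|^2 - 2\trajphidiff^\top P\Gamma\dot{\trajmu} \le -|\trajphidiff|^2 + 2\|P\|\,c\,|\trajphidiff|,
\]
so $\dot{V}<0$ whenever $|\trajphidiff|>2\|P\|c$. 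Hence $V$ is nonincreasing outside the ball $\{\,|\trajphidiff|\le 2\|P\|c\,\}$, which together with $\lambda_{\min}(P)|\trajphidiff|^2\le V(\trajphidiff)\le\lambda_{\max}(P)|\trajphidiff|^2$ yields $|\trajphidiff(t)|^2 \le \max\{\,|\trajphidiff(0)|^2,\ (2\|P\|c)^2\lambda_{\max}(P)/\lambda_{\min}(P)\,\}$ for all $t\ge 0$, i.e.\ $\trajphidiff(\cdot)$ stays bounded.

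The only subtlety — and it is minor — is the almost-everywhere differentiability of $\trajmu$: one works with the absolutely continuous trajectory $\trajphidiff(\cdot)$, notes that $\eqref{eq:error-dynamics}$ holds for a.e.\ $t$ and that $t\mapsto V(\trajphidiff(t))$ is absolutely continuous, and then the comparison/sublevel-set argument above goes through verbatim. I do not expect any real obstacle in carrying this out.
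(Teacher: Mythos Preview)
Your proposal is correct and, in its Lyapunov route, essentially identical to the paper's proof: the paper also takes $W(\phidiff)=\phidiff^\top P\phidiff$ with $P$ from \eqref{eq:lyap}, bounds $\dot W\le -|\phidiff|^2+C|\phidiff|$ via the Lipschitz bound on $\dot{\trajmu}$, and concludes confinement to a ball scaled by $\sqrt{\lambda_{\max}(P)/\lambda_{\min}(P)}$. Your variation-of-constants alternative and your remark that the nonnegativity of $\trajmu$ is not used are correct additions but not needed.
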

\begin{proof}
We set $W(\phidiff):=\phidiff^\top P \phidiff$, where $P$ is given by \eqref{eq:lyap}.
From \eqref{eq:error-dynamics}, we have
\begin{align*}
    \dot{W}(\phidiff, t) = -\phidiff^\top \phidiff - 2 \phidiff^\top P \Gamma \dot{\trajmu}(t).
\end{align*}
From the Lipschitz continuity of $\trajmu$, there exists a constant $M > 0$ such that $|\dot{\trajmu}(t)| \le M$ for all $t\ge 0$. Thus, we have
\begin{equation}
    \dot{W}(\phidiff, t) \le -\phidiff^\top \phidiff + C |\phidiff|,
\end{equation}
for some constant $C>0$. Thus, we can prove that $\phidiff$ is confined to a ball of radius $\sqrt{\frac{\lambda_{\max}(P)}{\lambda_{\min}(P)}}C$.
\end{proof}

\begin{lemma}[Bound of $\|P\|$] 
\label{lem:P} For $P$ defined as the solution to \eqref{eq:lyap}, there exists some constant $C$ such that $\|P\| \le \frac{C}{\gamma_{\min}}$.
\end{lemma}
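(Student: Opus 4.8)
The plan is to obtain an explicit integral representation of the Lyapunov solution $P$ and then bound its norm using the decay rate of $e^{A_\gamma t}$. Since $A_\gamma$ is Hurwitz, the unique solution of $A_\gamma^\top P + P A_\gamma = -I$ is
\begin{equation*}
P = \int_0^\infty e^{A_\gamma^\top t}\, e^{A_\gamma t}\, dt,
\end{equation*}
so that $\|P\| \le \int_0^\infty \|e^{A_\gamma t}\|^2\, dt$. The task therefore reduces to controlling $\|e^{A_\gamma t}\|$ in terms of $\gamma_{\min}$.

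The key step is to exhibit a decay estimate of the form $\|e^{A_\gamma t}\| \le c(r)\, (1+t)^{r-1} e^{-\gamma_{\min} t}$ (or, more simply, $\|e^{A_\gamma t}\| \le c(r)\, e^{-\tfrac{\gamma_{\min}}{2} t}$ after absorbing the polynomial factor). Recall from Section \ref{subsec:features} that $A_\gamma$ is upper bidiagonal with diagonal entries $-\gamma_1, \dots, -\gamma_r$ and superdiagonal entries all equal to $1$. I would handle the polynomial-in-$t$ growth by noting that $A_\gamma = -D + N$ where $D = \mathrm{diag}(\gamma_1,\dots,\gamma_r)$ and $N$ is the nilpotent shift matrix with $N^r = 0$; these do not commute, but one can still bound $\|e^{A_\gamma t}\|$ by a standard argument, e.g. via the variation-of-constants / Duhamel expansion $e^{A_\gamma t} = e^{-Dt} + \int_0^t e^{-D(t-s)} N e^{A_\gamma s}\, ds$ iterated $r$ times, each iteration contributing at most one factor of $t$ and using $\|e^{-Dt}\| = e^{-\gamma_{\min} t}$. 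This yields $\|e^{A_\gamma t}\| \le p_{r-1}(t)\, e^{-\gamma_{\min} t}$ for a polynomial $p_{r-1}$ of degree $r-1$ whose coefficients depend only on $r$ (since the off-diagonal entries are fixed at $1$). Substituting into the integral,
\begin{equation*}
\|P\| \le \int_0^\infty p_{r-1}(t)^2 e^{-2\gamma_{\min} t}\, dt = \frac{1}{\gamma_{\min}} \int_0^\infty p_{r-1}\!\left(\frac{\tau}{2\gamma_{\min}}\right)^2 e^{-\tau}\, d\tau \le \frac{C}{\gamma_{\min}},
\end{equation*}
where in the last step I use that $\gamma_{\min}$ may be assumed bounded below (the interesting regime of Theorem \ref{thm:min-phase-feedback} is $\gamma_{\min}$ large, so WLOG $\gamma_{\min} \ge 1$), which makes each $\tfrac{1}{(2\gamma_{\min})^k}$ factor at most $1$ and absorbs everything into a constant $C = C(r)$ via $\int_0^\infty \tau^{2(r-1)} e^{-\tau}\, d\tau = (2r-2)!$.

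The main obstacle is the non-commutativity of $D$ and $N$, which prevents the naive bound $\|e^{A_\gamma t}\| = \|e^{-Dt} e^{Nt}\|$: one genuinely needs the Duhamel/iteration argument above (or an equivalent companion-form eigenvector estimate) to control the transient polynomial growth uniformly in the $\gamma_i$'s. A secondary subtlety is ensuring the constant $C$ truly depends only on $r$ and not on the individual $\gamma_i$; this is what forces the assumption $\gamma_{\min} \ge 1$ (or any fixed lower bound), under which all the extra $\gamma_i \ge \gamma_{\min}$ only help and the superdiagonal $1$'s are the only $\gamma$-independent data. Everything else is a routine computation.
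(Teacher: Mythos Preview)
Your proposal is correct and follows the same route as the paper: write $P = \int_0^\infty e^{A_\gamma^\top t} e^{A_\gamma t}\,dt$, bound $\|e^{A_\gamma t}\|$ by a decaying exponential with rate governed by $\gamma_{\min}$, and integrate to get the $1/\gamma_{\min}$ factor. The paper simply asserts $\|e^{A_\gamma t}\| \le C' e^{-\gamma_{\min} t}$ and proceeds, whereas you go further and justify this bound via the Duhamel expansion of $A_\gamma = -D + N$, obtaining the polynomial transient $p_{r-1}(t)\,e^{-\gamma_{\min} t}$ and then absorbing it under a lower bound on $\gamma_{\min}$; this extra care is warranted, since it is exactly what guarantees the constant does not depend on the individual $\gamma_i$'s---a uniformity that Lemma~\ref{lem:PLambda} and Theorem~\ref{thm:min-phase-feedback} rely on but the paper's proof leaves implicit.
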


\begin{proof}
    The solution to the Lyapunov equation \eqref{eq:lyap} can be expressed as $P = \int_{0}^{\infty} e^{A_{\gamma}^\top t} e^{A_{\gamma} t} dt$.
% \begin{equation*}
%     \|P\| = \left\| \int_{0}^{\infty} e^{A_{\gamma} t} e^{A_{\gamma}^\top t} dt. \right\| \le \int_{0}^{\infty} \|e^{A_{\gamma} t}\|^2 dt.
% \end{equation*}
There exists some constant $C'$ such that $\| e^{A_{\gamma} t}\| \le C' e^{-\gamma_{\min} t}$, since $-\gamma_{\min}$ is the maximum eigenvalue of $A_{\gamma}$. Thus,
\begin{equation*}
    \|P\| \le \int_{0}^{\infty} \|e^{A_{\gamma} t}\|^2 dt \le C'^2 \int_{0}^{\infty} e^{-2\gamma_{\min}t}dt = \frac{C'^2}{2\gamma_{\min}}.
\end{equation*}
\vspace{-0.5em}
\end{proof}

\begin{lemma}[Bound of $\|\Gamma\|$]
\label{lem:Gamma} 
For $\Gamma$ defined in \eqref{eq:def-Gamma}, when $\gamma_{\min} \ge \sqrt{2}$, \vspace{-0.5em}
\begin{equation}
    \| \Gamma\| \le \frac{\sqrt{2}}{\gamma_{\min}}.
\end{equation}
\end{lemma}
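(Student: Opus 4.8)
The plan is to exploit that $\Gamma$ is a column vector, so its matrix two-norm coincides with its Euclidean norm, and then bound that norm by a geometric series in $\gamma_{\min}^{-1}$. First I would observe from \eqref{eq:def-Gamma} that the $j$-th entry of $\Gamma$ is $\big(\prod_{i=j}^{r}\gamma_i\big)^{-1}$ for $j=1,\dots,r$. Since $\gamma_r=\gamma_{\min}=\min_i\gamma_i$ and every $\gamma_i\ge\gamma_{\min}>0$, each such entry satisfies $\big(\prod_{i=j}^{r}\gamma_i\big)^{-1}\le\gamma_{\min}^{-(r-j+1)}$.

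Next I would sum the squares: reindexing by $\ell=r-j+1$,
\begin{equation*}
\|\Gamma\|^2=\sum_{j=1}^{r}\Big(\prod_{i=j}^{r}\gamma_i\Big)^{-2}\le\sum_{\ell=1}^{r}\gamma_{\min}^{-2\ell}
=\gamma_{\min}^{-2}\,\frac{1-\gamma_{\min}^{-2r}}{1-\gamma_{\min}^{-2}}\le\frac{1}{\gamma_{\min}^2-1},
\end{equation*}
where the last inequality uses $\gamma_{\min}>1$ (which follows from $\gamma_{\min}\ge\sqrt{2}$) to drop the numerator term $1-\gamma_{\min}^{-2r}\le 1$ and simplify.

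Finally I would check that the hypothesis $\gamma_{\min}\ge\sqrt{2}$ is exactly what converts this bound into the claimed one: the inequality $\tfrac{1}{\gamma_{\min}^2-1}\le\tfrac{2}{\gamma_{\min}^2}$ is equivalent to $\gamma_{\min}^2\le 2(\gamma_{\min}^2-1)$, i.e. $\gamma_{\min}^2\ge 2$. Hence $\|\Gamma\|^2\le 2/\gamma_{\min}^2$, and taking square roots gives $\|\Gamma\|\le\sqrt{2}/\gamma_{\min}$. There is no real obstacle here—the only mild subtlety is noticing that the threshold $\sqrt{2}$ in the statement is precisely the value that makes the geometric-series tail bound $\tfrac{1}{\gamma_{\min}^2-1}$ dominated by $\tfrac{2}{\gamma_{\min}^2}$; everything else is a routine estimate.
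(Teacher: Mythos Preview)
Your proof is correct and follows essentially the same route as the paper: bound each entry of $\Gamma$ by the corresponding power of $\gamma_{\min}^{-1}$, sum the resulting geometric series to get $\|\Gamma\|^2\le 1/(\gamma_{\min}^2-1)$, and then observe that $\gamma_{\min}\ge\sqrt{2}$ is exactly the threshold that turns this into $2/\gamma_{\min}^2$. The only cosmetic difference is that the paper writes the partial geometric sum as $\tfrac{\gamma_r^{2r}-1}{\gamma_r^{2r}(\gamma_r^2-1)}$ rather than your equivalent form.
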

\begin{proof}
Note that $\gamma_r$ is set to $\gamma_{\min}$ without loss of generality. $(\gamma_k \cdots \gamma_r)^{-1} \le \gamma_r^{r-k+1}$, thus,
\vspace{-0.5em}
\begin{align*}
\| \Gamma\| \le \sqrt{\frac{\gamma_r^{2r}-1}{\gamma_r^{2r} (\gamma_r^{2}-1)}} \le \sqrt{\frac{1}{\gamma_r^2-1}}.
\vspace{-0.5em}
\end{align*}
It is easy to see that
$\sqrt{\frac{1}{\gamma_r^2-1}}$ is less than $\frac{\sqrt{2}}{\gamma_r}$ when $\gamma_r \!\ge\! \sqrt{2}$.
% \begin{align*}
%     \| \Gamma\| &= \| - A_{\gamma}^{-1} B \| \le \|-A_{\gamma}^{-1}\| \cdot \|B\| \le \frac{1}{\gamma_{\min}} \cdot 1 = \frac{1}{\gamma_{\min}},
% \end{align*}
% by checking that the minimum eigenvalue of $-A_{\gamma}$ is $\gamma_{\min}$.
\end{proof}

\begin{lemma}[Bound of $\|P \Gamma\|$]
\label{lem:PLambda} For $P$ defined as the solution to \eqref{eq:lyap}, and $\Gamma$ defined in \eqref{eq:def-Gamma}, when $\gamma_{\min} \ge \sqrt{2}$,
\begin{equation}
    \|P \Gamma\| \le \frac{C'}{\gamma_{\min}^2},
\end{equation}
for some constant $C'$ that does not depend on $\gamma_i$s.
\end{lemma}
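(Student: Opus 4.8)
The plan is to obtain the estimate by a single application of submultiplicativity of the spectral norm, $\|P\Gamma\| \le \|P\|\,\|\Gamma\|$, followed by the two preceding lemmas. Lemma~\ref{lem:P} gives $\|P\| \le C/\gamma_{\min}$, and Lemma~\ref{lem:Gamma}, whose hypothesis $\gamma_{\min}\ge\sqrt2$ is exactly the standing assumption of the present lemma, gives $\|\Gamma\| \le \sqrt2/\gamma_{\min}$. Multiplying the two yields $\|P\Gamma\| \le \sqrt2\,C/\gamma_{\min}^2$, so the claim holds with $C' := \sqrt2\,C$. The norm-product step itself is immediate, so the bulk of the work is in checking the one subtlety below.

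The only point requiring care is the assertion that $C'$ does not depend on the $\gamma_i$'s. This reduces to verifying that the constant $C$ produced in the proof of Lemma~\ref{lem:P} is $\gamma$-uniform, i.e., that the overshoot constant appearing in the transient estimate $\|e^{A_\gamma t}\|\le (\text{const})\,e^{-\gamma_{\min}t}$ can be taken independent of $(\gamma_1,\dots,\gamma_r)$. I would argue this by observing that $A_\gamma$ is bidiagonal with its superdiagonal entries fixed at $1$, while all eigenvalues $-\gamma_i$ move further into the open left half-plane as $\gamma_{\min}$ grows; hence the non-normal part of $A_\gamma$ is uniformly bounded and the transient peak of its matrix exponential does not blow up, so the integral $\int_0^\infty \|e^{A_\gamma t}\|^2\,dt$ is bounded by a $\gamma$-independent constant times $\gamma_{\min}^{-1}$.

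As a fully self-contained alternative that sidesteps the transient-constant issue, one can argue directly from the Lyapunov equation \eqref{eq:lyap}: right-multiplying $A_\gamma^\top P + P A_\gamma = -I$ by $A_\gamma^{-1}$ gives $A_\gamma^\top(P A_\gamma^{-1}) + P = -A_\gamma^{-1}$, hence $P\Gamma = -P A_\gamma^{-1} B = A_\gamma^{-\top}A_\gamma^{-1}B + A_\gamma^{-\top}P B = -A_\gamma^{-\top}\Gamma + A_\gamma^{-\top}P B$; each term is then bounded using $\|\Gamma\|\le\sqrt2/\gamma_{\min}$ (Lemma~\ref{lem:Gamma}), $\|P\|\le C/\gamma_{\min}$ (Lemma~\ref{lem:P}), and the explicit lower-triangular entries of $A_\gamma^{-1}$, which are products of the $\gamma_i^{-1}$ and are therefore controlled by $\gamma_{\min}^{-1}$ when $\gamma_{\min}\ge\sqrt2$. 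I expect this $\gamma$-uniformity of the constant to be the only real obstacle; granting the corresponding property in Lemma~\ref{lem:P}, the proof collapses to the two-line norm-product computation above.
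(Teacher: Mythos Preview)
Your core argument---submultiplicativity $\|P\Gamma\|\le\|P\|\,\|\Gamma\|$ followed by Lemma~\ref{lem:P} and Lemma~\ref{lem:Gamma} to get $\|P\Gamma\|\le \sqrt{2}\,C/\gamma_{\min}^2$---is exactly the paper's proof, line for line. Your additional discussion of the $\gamma$-uniformity of the constant $C$ from Lemma~\ref{lem:P} is a legitimate concern that the paper does not address (Lemma~\ref{lem:P} as stated does not assert $C$ is independent of the $\gamma_i$'s, yet Lemma~\ref{lem:PLambda} requires $C'$ to be), so on this point you are being more careful than the paper itself.
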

\begin{proof}
\; \vspace{-2em}
\begin{align*}
    \|P \Gamma\| & \le \| P \| \cdot \|\Gamma\| \le \frac{C}{\gamma_{\min}} \cdot \frac{\sqrt{2}}{\gamma_{\min}} = \frac{C'}{\gamma_{\min}^2},
\end{align*}
where the second inequality is from Lemmas \ref{lem:P} and \ref{lem:Gamma}.
\end{proof}

\subsection{Proof of Theorem \ref{thm:min-phase-simple-global}}
\label{appendix:simple-global-proof}

The proof resembles that of the feedback linearization bounded tracking theorem \cite[Theorem 9.14]{sastry2013nonlinear}. 

From \eqref{eq:error-dynamics}, we have $\dot{\trajphidiff}(t) = A_{\gamma} \trajphidiff(t)$ under $\trajmu(t)\equiv \mu_e$. Since $A_{\gamma}$ is Hurwitz, $\trajphidiff(t)$ exponentially converges to zero, i.e. $|\trajphidiff(t)| \le C e^{-\gamma_{\min}t}$, for some constant $C$ (and thus $\trajphi(t)$ converges to $\Gamma \mu_e$).

Next, we prove $\trajeta(t) \rightarrow \eta_e$. Without loss of generality, we set $\eta_e=0$ in the proof. Due to Condition \ref{cond:minimum-phase-simple}, based on converse Lyapunov theorem \cite[Theorem 5.17]{sastry2013nonlinear}, there exists a Lyapunov function $V$ satisfying
\begin{equation}
\label{eq:exp_lyap1}
\begin{aligned}
    \alpha_1 |\eta|^2 \le V(\eta) \le \alpha_2 |\eta|^2, \quad \left|\frac{\partial V}{\partial \eta}\right| \le \alpha_3 |\eta|, \\
    \frac{\partial V}{\partial \eta} \cdot \qcbf (\eta, \Gamma \mu_e) \le - \alpha_4 |\eta|^2,
\end{aligned}
\end{equation}
with $\alpha_{1, 2, 3, 4} > 0$. The time derivative of $V$ becomes
\begin{align*}
    \dot{V} &= \frac{\partial V}{\partial \eta} \cdot \qcbf (\eta, \phi) \\
    & = \frac{\partial V}{\partial \eta} \cdot \qcbf (\eta, \Gamma \mu_e) + \frac{\partial V}{\partial \eta} \cdot \left\{ \qcbf (\eta, \phi) - \qcbf (\eta, \Gamma \mu_e)\right\} \\
    & \le -\alpha_4 |\eta|^2 + \alpha_3 |\eta| l_\phi |\Delta\phi| \\
    & \le -\alpha_4 |\eta|^2 + \alpha_3 l_\phi  C e^{-\gamma_{\min} t} |\eta|,
\end{align*}
where $l_\phi$ is the Lipschitz constant of $\qcbf$ with respect to $\phi$. Thus, it follows that
\begin{align*}
    \dot{V}(\eta) \le -\alpha_4 \!\left(|\eta| - \frac{\alpha_3 l_\phi C e^{-\gamma_{\min} t}}{2 \alpha_4} \right)^2\! + \frac{\alpha_3^2 l_\phi^2 C^2 e^{-2\gamma_{\min} t}}{4 \alpha_4}.
\end{align*}
Thus, $\dot{V}$ is negative for all $\eta$ satisfying $|\eta| > \frac{\alpha_3 l_\phi C e^{-\gamma_{\min} t}}{\alpha_4}$. From \eqref{eq:exp_lyap1}, we can verify that $\eta$ is confined to a ball of radius $\sqrt{\frac{\alpha_2}{\alpha_1}} \frac{\alpha_3 l_\phi C e^{-\gamma_{\min} t}}{\alpha_4}$, which exponentially reduces to 0 as $t \rightarrow \infty$. Thus, $\trajeta(t)$ converges to 0.

\subsection{Proof of Theorem \ref{thm:min-phase-feedback}}
\label{appendix:feedback-theorem-proof}

\begin{proof}
Recall that $\trajphid(t) := \Gamma \trajmu(t) = \Gamma \pspi(\trajeta(t))$, and recall the error dynamics in \eqref{eq:error-dynamics}. Then,
\begin{align}
\dot{V} &= \frac{\partial V}{\partial \eta} \cdot \qcbf (\eta, \phi) \label{eq:Vdot}\\
& = \frac{\partial V}{\partial \eta} \qcbf (\eta, \Gamma \pspi(\eta)) + \frac{\partial V}{\partial \eta} \left\{\qcbf (\eta, \phi) - \qcbf(\eta, \Gamma \pspi(\eta)) \right\} \nonumber \\
& \le - \alpha_4 |\eta|^2 + \alpha_3 l_\phi |\phidiff| |\eta| ,  \quad \text{(from Lipschitz continuity)}\nonumber 
\end{align}

Next, we set $U(\phidiff):=\phidiff^\top P \phidiff$, where $P$ is given by \eqref{eq:lyap}. We will prove the theorem by considering a Lyapunov function candidate
\begin{equation}
    W(x):= U(\phidiff(x)) + V(\eta(x)),
    \label{eq:lyapunov-total}
\end{equation}
and by evaluating its time derivative under $\mu=\pspi(\eta)$.

From
\vspace{-0.5em}
\begin{equation}
    \dot{\mu} = \dot{\pspi}(\eta) = \frac{\partial\pspi}{\partial\eta}\dot{\eta} = \frac{\partial\pspi}{\partial\eta} \qcbf(\eta, \phi),
\end{equation}
and \eqref{eq:error-dynamics}, we have $\dot{\phidiff} = A_{\gamma} \phidiff - \Gamma \frac{\partial\pspi}{\partial\eta} \qcbf(\eta, \phi)$.
Thus,
\begin{align*}
    \dot{U} & = \phidiff^\top (A_{\gamma}^\top P + P A_{\gamma}) \phidiff  - 2 \phidiff^\top P \Gamma \frac{\partial\pspi}{\partial\eta} \qcbf(\eta, \phi) \\
    & = -\phidiff^\top \phidiff  - 2 \phidiff^\top P \Gamma \frac{\partial\pspi}{\partial\eta} \qcbf(\eta, \phi)
\end{align*}
Here, we note that 1) $\left|\frac{\partial\pspi}{\partial\eta}\right| \le l_{\kappa}$, where $l_{\kappa}$ is the Lipschitz constant of $\pspi$ w.r.t $\eta$, 2) $\qcbf(0, \pspi(0))=0$ (as $\eta_e = 0$), and 3) 
\vspace{-1em}
\begin{align}
    |\qcbf(\eta, \phi)| = & \left|\qcbf(\eta, \phi) - \qcbf(0, \pspi(0))\right| \nonumber \\
    =  & \big|\big(\qcbf(\eta, \phi) - \qcbf\left(\eta, \Gamma \pspi(\eta)\right) \big) \nonumber \\
    & + \big(\qcbf\left(\eta, \Gamma \pspi(\eta)\right) - \qcbf\left(\eta, \Gamma \pspi(0)\right)\big) \nonumber \\
    & + \big(\qcbf\left(\eta, \Gamma \pspi(0)\right) - \qcbf\left(0, \Gamma \pspi(0)\right) \big)\big| \nonumber \\
    \le & l_\phi |\phidiff| + l_\phi |\Gamma| l_{\kappa} |\eta| + l_\eta |\eta|,
\end{align}
where $l_\eta$ is the Lipschitz constant of $\qcbf$ with respect to $\eta$. Thus,
\vspace{-0.5em}

{\small
\begin{align}
    \dot{U} & \le -|\phidiff|^2 \!+\!2 |\phidiff| \; \|P\Gamma\|\; l_{\kappa} \left(l_\phi |\phidiff|\!+\!l_\phi l_{\kappa} |\Gamma| |\eta|\!+\!l_\eta |\eta|\right) \nonumber \\
    & \le  -\left(1 - \frac{C_1}{\gamma_{\min}^2}\right) |\phidiff|^2 + \frac{2(C_2 + C_3 \gamma_{\min})}{\gamma_{\min}^2} |\phidiff||\eta|, \label{eq:Udot}
\end{align}
}

\noindent for some positive constants $C_1$ and $C_2$ that does not depend on $\gamma_i$'s. The second inequality results from Lemmas \ref{lem:Gamma} and \ref{lem:PLambda} in Appendix \ref{appen:lemmas}.

From \eqref{eq:Vdot} and \eqref{eq:Udot}, we get 
{\small 
\begin{align}
\label{eq:proof-uvdot}
    \dot{U} + \dot{V} \le & -\begin{bmatrix}
        |\eta| & |\phidiff|
    \end{bmatrix}^T \\
    & \resizebox{0.6\hsize}{!}{$\displaystyle\begin{bmatrix}
       \alpha_4 &  -\Big(\frac{C_2}{\gamma_{\min}^2}\!+\!\frac{C_3}{\gamma_{\min}}\!+\! C_4\!\Big) \\ -\Big(\frac{C_2}{\gamma_{\min}^2}\!+\!\frac{C_3}{\gamma_{\min}}\!+\!C_4\!\Big) & 1 - \frac{C_1}{\gamma_{\min}^2}
    \end{bmatrix}$} \!\begin{bmatrix}
        |\eta| \\ |\phidiff| \nonumber
    \end{bmatrix}
\end{align}
}
where $C_4:= {\alpha_3 l_\phi \over 2}$.

Finally, we can prove the statement by showing that the square matrix in the right hand side of \eqref{eq:proof-uvdot} can be made positive definite. The necessary and sufficient condition for this is that $\alpha_4>0$, $1 - \frac{C_1}{\gamma_{\min}^2} >0$, and the determinant of the matrix being positive. The first condition is given by the statement, and the second condition is satisfied by taking $\gamma_{\min} > \sqrt{C_1}$. Finally, for the determinant to be positive, we need \vspace{-0.5em}
\begin{equation}
\label{eq:proof-gamma-poly}
    \resizebox{0.87\hsize}{!}{$\displaystyle\alpha_4 \gamma_{\min}^4 - C_1 \alpha_4 \gamma_{\min}^2 - (C_2 + C_3 \gamma_{\min} + C_4 \gamma_{\min}^2)^2 > 0,$}
\end{equation}
which is a quartic polynomial in $\gamma_{\min}$ whose fourth-order coefficient is given as $\alpha_4 - C_4^2$. Since $\alpha_4 > C_4^2$, by the condition given in the statement, we can satisfy \eqref{eq:proof-gamma-poly} by taking sufficiently large $\gamma_{\min}$.
\end{proof}

\vspace{-0.5em}
\subsection{Proof of Theorem \ref{thm:multi-input-internal}}
\label{appen:theorem-multi}
We define the distribution $G:=\textrm{span}\{g_1, \cdots, g_m\}$. When $g_1, \cdots, g_m$ are linearly independent at $x$, the rank of $G$ is $m$. We also define \vspace{-0.35em}
\begin{align*}
    \Xi := \textrm{span} \{d \xi_1, \cdots, d \xi_r \}, \quad H := \textrm{span} \{d \eta_1, \cdots, d \eta_{n-r} \}. \vspace{-0.35em}
\end{align*}
The rank of $\Xi$ is $r$ \cite[Claim 9.5]{sastry2013nonlinear}.

% If $\eta$ satisfies Condition \ref{cond:nonsingular}, $H \cup \Xi = R^n$, and if it satisfies Condition \ref{cond:nonactuated_internal_state}, $H \subset G^\perp$. 

% \begin{lemma}
% \label{lemma:frobenius}
% (Frobenius Theorem \cite[Theorem 8.15]{sastry2013nonlinear}) Under Assumptions \ref{assmp:g_indepent} and \ref{assump:involutive}, we can find $\lambda:\R^n \rightarrow \R^{n-m}$ such that $\lambda_1, \cdots, \lambda_{n-m}$ are linearly independent, and $G^{\perp}=\textrm{span} \{\nabla \lambda_1, \cdots, \nabla \lambda_{n-m}\}$. That is, for all $i = 1, \cdots, m$, and $j =1, \cdots, n-m$,
% $$ L_{g_i} \lambda_j (x) = 0
% $$
% for all states in the open set around $x$.    
% \end{lemma}

% \begin{proof} This is a result of the .
% \end{proof}

\begin{lemma}
\label{lemma:equivalent_cond}
If $\eta: \R^n \rightarrow \R^{n-r}$ satisfies Conditions \ref{cond:nonsingular} and \ref{cond:nonactuated_internal_state}, \vspace{-0.35em}
\begin{equation}
G^{\perp} \cup \Xi = \R^n.
\label{eq:equivalent_cond} \vspace{-0.35em}
\end{equation}
\end{lemma}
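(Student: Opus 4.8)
The plan is to restate the two hypotheses as statements about codistributions at the base point $x$, after which the conclusion falls out immediately. I read $G^{\perp}$ as the annihilator $G^{\perp} := \{\omega \;|\; \langle \omega, g_i \rangle = 0,\ i=1,\dots,m\}$ and read \eqref{eq:equivalent_cond} as the codistribution identity $G^{\perp} + \Xi = \R^n$, i.e., the span of $G^{\perp}\cup\Xi$ is all of $\R^n$ (identifying the cotangent space at $x$ with $\R^n$).

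First I would unpack Condition \ref{cond:nonactuated_internal_state}: the identities $L_{g_i}\eta_j(x) = \langle d\eta_j(x), g_i(x)\rangle = 0$ for all $i$ and $j$ say exactly that each $d\eta_j(x)$ annihilates every $g_i(x)$, hence $H := \textrm{span}\{d\eta_1,\dots,d\eta_{n-r}\} \subseteq G^{\perp}$ at $x$. This inclusion uses only Condition \ref{cond:nonactuated_internal_state}; linear independence of $g_1,\dots,g_m$ is not needed for it. Next I would invoke Condition \ref{cond:nonsingular}: the rows of the Jacobian of $[\xi^{\top},\eta^{\top}]^{\top}$ at $x$ are exactly $d\xi_1,\dots,d\xi_r,d\eta_1,\dots,d\eta_{n-r}$, so nonsingularity of this $n\times n$ matrix is equivalent to these $n$ covectors being linearly independent, i.e., $\Xi + H = \R^n$ (indeed $\Xi \oplus H = \R^n$, since $\Xi$ has rank $r$). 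Combining the two, $\Xi + G^{\perp} \supseteq \Xi + H = \R^n$, and the reverse inclusion is trivial, giving \eqref{eq:equivalent_cond}.

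I do not expect a genuine obstacle in proving this lemma: the forward implication is a short bookkeeping argument, and the only point needing a word of care is the convention that \eqref{eq:equivalent_cond} denotes the pointwise sum (span) of the two codistributions rather than a set-theoretic union. The substantive content in the surrounding analysis --- that \eqref{eq:equivalent_cond} becomes impossible once $m\ge 2$ with $g_1,\dots,g_m$ linearly independent, which is what ultimately yields Theorem \ref{thm:multi-input-internal} --- is a separate dimension count: $\dim G^{\perp} = n-m$, while $\textrm{span}\{d h,\dots,d L_f^{r-2} h\}$ is an $(r-1)$-dimensional subspace of $G^{\perp}\cap\Xi$ (each $L_{g_i}L_f^{k} h$ vanishes for $k\le r-2$ by the relative-degree definition), so $\dim(G^{\perp}+\Xi) \le (n-m) + r - (r-1) = n-m+1 < n$. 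That step, however, is downstream of the present lemma, which only records the necessary condition \eqref{eq:equivalent_cond}.
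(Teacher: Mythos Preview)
Your proof is correct and mirrors the paper's exactly: Condition~\ref{cond:nonactuated_internal_state} gives $H \subseteq G^{\perp}$, Condition~\ref{cond:nonsingular} gives $\Xi + H = \R^n$, and combining these yields $\Xi + G^{\perp} = \R^n$. Your explicit reading of $\cup$ in \eqref{eq:equivalent_cond} as the sum (span) of subspaces is the intended interpretation, and your aside on the dimension count is a clean alternative to the paper's Lemma~\ref{lemma:existence_of_q} for the downstream contradiction, though it is not part of the present lemma.
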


\begin{proof} If $\eta$ satisfies Condition \ref{cond:nonsingular}, $H \cup \Xi = R^n$, and if it satisfies Condition \ref{cond:nonactuated_internal_state}, $H \subset G^\perp$. Thus, $G^{\perp} \cup \Xi = \R^n$. \vspace{-0.35em}
\end{proof}

\begin{lemma}
\label{lemma:existence_of_q}
If $g_1, \cdots, g_m$ are linearly independent at $x$, there \textit{always} exist a nontrivial $q \in \R^n$  such that $q \perp G^{\perp}$, and $q \perp \Xi$.
\end{lemma}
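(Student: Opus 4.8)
The plan is to reduce the two orthogonality requirements on $q$ to a single scalar linear constraint, which can always be met nontrivially once $m\ge 2$. Throughout I identify the covectors $d\xi_j$ with their gradient vectors in $\R^n$, so that $G^{\perp}$, $\Xi$, and $q$ all live in $\R^n$. The first step is purely linear-algebraic: since $g_1,\ldots,g_m$ are linearly independent at $x$, $G=\mathrm{span}\{g_1,\ldots,g_m\}$ is an $m$-dimensional subspace, and $q\perp G^{\perp}$ is equivalent to $q\in(G^{\perp})^{\perp}=G$. So the lemma reduces to finding a nonzero $q\in G$ with $q\perp\Xi$.

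The second step exploits the relative-degree structure of $\xi$. Writing $\xi_j=L_f^{j-1}h$, the defining conditions of relative degree $r$ (Definition \ref{def:relative-degree}) give $L_{g_i}L_f^{k}h(x)=0$ for all $i$ and all $k=0,\ldots,r-2$, i.e. $\langle d\xi_j,g_i\rangle=0$ for all $i$ and all $j=1,\ldots,r-1$. Hence $d\xi_1,\ldots,d\xi_{r-1}$ all lie in $G^{\perp}$. Consequently, for \emph{any} $q\in G$ one automatically has $q\perp d\xi_j$ for $j=1,\ldots,r-1$ (a vector in $G$ is orthogonal to every vector in $G^{\perp}$), so the requirement $q\perp\Xi$ collapses to the single condition $q\perp d\xi_r$.

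The final step is a dimension count on $G$ alone. The assignment $q\mapsto\langle d\xi_r,q\rangle$ is a linear functional on the $m$-dimensional space $G$, so its kernel $\{q\in G:\ q\perp d\xi_r\}$ has dimension at least $m-1$. Since $m\ge 2$, this kernel is nonzero, and any nonzero element $q$ of it works: it lies in $G$, so $q\perp G^{\perp}$; and it is orthogonal to $d\xi_r$ and to $\mathrm{span}\{d\xi_1,\ldots,d\xi_{r-1}\}\subseteq G^{\perp}$, hence to all of $\Xi$. As a consistency check, the remaining part of Definition \ref{def:relative-degree} (the row $L_gL_f^{r-1}h$ has a nonzero entry) shows $d\xi_r\notin G^{\perp}$, so the functional is genuinely nonzero and the kernel has dimension exactly $m-1$; and for $m=1$ the construction correctly breaks down, matching the single-input case where Theorem \ref{thm:diffeomorphism} does supply a non-actuated $\eta$.

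I do not anticipate a substantive obstacle: the only things needing care are the covector/vector identification and tracking which orthogonal complement sits in which space. The conceptual crux — the step I would flag — is the observation that the relative-degree condition forces $d\xi_1,\ldots,d\xi_{r-1}$ into $G^{\perp}$, so that $\Xi$ contributes only one new constraint beyond membership in $G$; the role of the multi-input hypothesis $m\ge 2$ then enters solely at the concluding dimension count.
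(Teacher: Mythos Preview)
Your proof is correct and follows essentially the same approach as the paper: write $q\in G$ (equivalently $q=[g_1\ \cdots\ g_m]c$), use the relative-degree identities to kill the first $r-1$ rows of $d\xi\cdot[g_1\ \cdots\ g_m]$, and then pick $c$ in the kernel of the remaining single row $[L_{g_1}L_f^{r-1}h\ \cdots\ L_{g_m}L_f^{r-1}h]$, which is nontrivial when $m\ge2$. Your phrasing via a dimension count on $G$ is a clean abstraction of the paper's explicit matrix computation, and your closing remarks on the exact kernel dimension and the $m=1$ degeneration are a nice consistency check the paper does not include.
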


\begin{proof}
We consider a coefficient vector $c\in \R^m$ and set $q = [g_1, \cdots, g_m] c$. Such $q$ is not identically zero due to the linear independence of $g_i$'s. Since $q \in G$, $q \perp G^{\perp}$. For $q \perp \Xi$, we want to satisfy \vspace{-0.35em}
\begin{equation}
    \underbrace{\resizebox{.15\hsize}{!}{$\displaystyle\begin{bmatrix}dh \\ d L_f h \\ \vdots \\ d L_f^{r -1 } h \end{bmatrix}$}}_{=d \xi \in\R^{r \times n}} \underbrace{\underbrace{\begin{bmatrix} g_1 &  \cdots & g_m \end{bmatrix}}_{\in\R^{n \times m}} \begin{bmatrix} c_1 \\ \vdots \\ c_m \end{bmatrix}}_{=q} = 0. \label{eq:multiplication} \vspace{-0.35em}
\end{equation}
The multiplication of the first two matrices becomes
$$
\resizebox{.45\hsize}{!}{$\displaystyle\begin{bmatrix} 0 & \cdots & 0 \\ \; & \ddots & \; \\ 0 & \cdots & 0 \\ L_{g_1} L_f^{r -1 } h  & \cdots  & L_{g_m} L_f^{r -1 } h \end{bmatrix}$}
$$
where the top $r-1$ rows become zero due to the definition of the relative degree. Thus, by choosing $c$ orthogonal to $\begin{bmatrix} L_{g_1} L_f^{r -1 } h  & \cdots  & L_{g_m} L_f^{r -1 } h \end{bmatrix}$, \eqref{eq:multiplication} is satisfied.
\end{proof}

The proof of Theorem \ref{thm:multi-input-internal} results from Lemmas \ref{lemma:equivalent_cond} and \ref{lemma:existence_of_q}. If $\eta$ satisfies Conditions \ref{cond:nonsingular} and \ref{cond:nonactuated_internal_state}, by Lemma \ref{lemma:equivalent_cond}, $G^{\perp} \cup \Xi = \R^n$. This contradicts with Lemma \ref{lemma:existence_of_q} since Lemma \ref{lemma:existence_of_q} implies that $G^{\perp} \cup \Xi \neq \R^n$. \hfill $\qed$
% If $m > r$, the rank of $G^{\perp}$, $n-m$, is smaller than $n-r$, thus, since the rank of $Xi$ is $r$, $G^{\perp} \cup \Xi \subsetneq \R^n$. If $m \le r$, Lemma \ref{lemma:existence_of_q} implies that $G^{\perp} \cup \Xi \neq \R^n$. Thus, in both cases, by Lemma \ref{lemma:equivalent_cond}, $\eta$ cannot satisfy Conditions \ref{cond:nonsingular} and \ref{cond:nonactuated_internal_state} at the same time. 
% \end{proof}

\bibliographystyle{IEEEtran}
\bibliography{references}

% Generated by IEEEtran.bst, version: 1.14 (2015/08/26)
\begin{thebibliography}{10}
\providecommand{\url}[1]{#1}
\csname url@samestyle\endcsname
\providecommand{\newblock}{\relax}
\providecommand{\bibinfo}[2]{#2}
\providecommand{\BIBentrySTDinterwordspacing}{\spaceskip=0pt\relax}
\providecommand{\BIBentryALTinterwordstretchfactor}{4}
\providecommand{\BIBentryALTinterwordspacing}{\spaceskip=\fontdimen2\font plus
\BIBentryALTinterwordstretchfactor\fontdimen3\font minus \fontdimen4\font\relax}
\providecommand{\BIBforeignlanguage}[2]{{%
\expandafter\ifx\csname l@#1\endcsname\relax
\typeout{** WARNING: IEEEtran.bst: No hyphenation pattern has been}%
\typeout{** loaded for the language `#1'. Using the pattern for}%
\typeout{** the default language instead.}%
\else
\language=\csname l@#1\endcsname
\fi
#2}}
\providecommand{\BIBdecl}{\relax}
\BIBdecl

\bibitem{Wabersich2023}
K.~P. Wabersich, A.~J. Taylor, J.~J. Choi, K.~Sreenath, C.~J. Tomlin, A.~D. Ames, and M.~N. Zeilinger, ``Data-driven safety filters: Hamilton-jacobi reachability, control barrier functions, and predictive methods for uncertain systems,'' \emph{IEEE Control Systems Magazine}, 2023.

\bibitem{hsu2023safety}
K.-C. Hsu, H.~Hu, and J.~F. Fisac, ``The safety filter: A unified view of safety-critical control in autonomous systems,'' \emph{Annual Review of Control, Robotics, and Autonomous Systems}, vol.~7, 2023.

\bibitem{he2024agile}
T.~He, C.~Zhang, W.~Xiao, G.~He, C.~Liu, and G.~Shi, ``Agile but safe: Learning collision-free high-speed legged locomotion,'' \emph{arXiv preprint arXiv:2401.17583}, 2024.

\bibitem{lavanakul2024safety}
W.~Lavanakul, J.~Choi, K.~Sreenath, and C.~Tomlin, ``Safety filters for black-box dynamical systems by learning discriminating hyperplanes,'' in \emph{6th Annual Learning for Dynamics \& Control Conference}.\hskip 1em plus 0.5em minus 0.4em\relax PMLR, 2024, pp. 1278--1291.

\bibitem{Prajna2006}
S.~Prajna, ``Barrier certificates for nonlinear model validation,'' \emph{Automatica}, 2006.

\bibitem{ogren2006autonomous}
P.~Ogren, A.~Backlund, T.~Harryson, L.~Kristensson, and P.~Stensson, ``Autonomous ucav strike missions using behavior control lyapunov functions,'' in \emph{AIAA Guidance, Navigation, and Control Conference and Exhibit}, 2006.

\bibitem{Ames2016}
A.~D. Ames, X.~Xu, J.~W. Grizzle, and P.~Tabuada, ``Control barrier function based quadratic programs for safety critical systems,'' \emph{IEEE Trans. on Automatic Control}, 2016.

\bibitem{bansal2017hamilton}
S.~Bansal, M.~Chen, S.~Herbert, and C.~J. Tomlin, ``Hamilton-jacobi reachability: A brief overview and recent advances,'' in \emph{IEEE Conference on Decision and Control (CDC)}, 2017.

\bibitem{wabersich2021predictive}
K.~P. Wabersich and M.~N. Zeilinger, ``A predictive safety filter for learning-based control of constrained nonlinear dynamical systems,'' \emph{Automatica}, 2021.

\bibitem{reis2020control}
M.~F. Reis, A.~P. Aguiar, and P.~Tabuada, ``Control barrier function-based quadratic programs introduce undesirable asymptotically stable equilibria,'' \emph{IEEE Control Systems Letters}, 2020.

\bibitem{grover2021deadlock}
J.~S. Grover, C.~Liu, and K.~Sycara, ``Deadlock analysis and resolution for multi-robot systems,'' in \emph{Algorithmic Foundations of Robotics XIV: Proceedings of the Fourteenth Workshop on the Algorithmic Foundations of Robotics 14}.\hskip 1em plus 0.5em minus 0.4em\relax Springer, 2021.

\bibitem{cortez2022compatibility}
W.~S. Cortez and D.~V. Dimarogonas, ``On compatibility and region of attraction for safe, stabilizing control laws,'' \emph{IEEE Transactions on Automatic Control}, vol.~67, no.~9, pp. 4924--4931, 2022.

\bibitem{borquez2024safety}
J.~Borquez, K.~Chakraborty, H.~Wang, and S.~Bansal, ``On safety and liveness filtering using hamilton-jacobi reachability analysis,'' \emph{IEEE Transactions on Robotics}, 2024.

\bibitem{mestres2025control}
P.~Mestres, Y.~Chen, E.~Dall'Anese, and J.~Cort{\'e}s, ``Control barrier function-based safety filters: characterization of undesired equilibria, unbounded trajectories, and limit cycles,'' \emph{arXiv preprint arXiv:2501.09289}, 2025.

\bibitem{sastry2013nonlinear}
S.~Sastry, \emph{Nonlinear systems: analysis, stability, and control}.\hskip 1em plus 0.5em minus 0.4em\relax Springer Science \& Business Media, 2013, vol.~10.

\bibitem{heemels1998positive}
W.~Heemels and A.~Stoorvogel, ``Positive stabilizability of a linear continuous-time system,'' 1998.

\bibitem{nguyen2016exponential}
Q.~Nguyen and K.~Sreenath, ``Exponential control barrier functions for enforcing high relative-degree safety-critical constraints,'' in \emph{2016 American Control Conference (ACC)}.\hskip 1em plus 0.5em minus 0.4em\relax IEEE, 2016, pp. 322--328.

\bibitem{xu2016control}
X.~Xu, ``Control sharing barrier functions with application to constrained control,'' in \emph{2016 IEEE 55th Conference on Decision and Control (CDC)}.\hskip 1em plus 0.5em minus 0.4em\relax IEEE, 2016, pp. 4880--4885.

\bibitem{xiao2021high}
W.~Xiao and C.~Belta, ``High-order control barrier functions,'' \emph{IEEE Transactions on Automatic Control}, vol.~67, no.~7, pp. 3655--3662, 2021.

\bibitem{isidori2013nonlinear}
A.~Isidori, \emph{Nonlinear Control Systems}.\hskip 1em plus 0.5em minus 0.4em\relax Springer Science \& Business Media, 2013.

\bibitem{SONTAG1989117}
\BIBentryALTinterwordspacing
E.~D. Sontag, ``A ‘universal’ construction of artstein's theorem on nonlinear stabilization,'' \emph{Systems \& Control Letters}, vol.~13, no.~2, pp. 117--123, 1989. [Online]. Available: \url{https://www.sciencedirect.com/science/article/pii/0167691189900285}
\BIBentrySTDinterwordspacing

\bibitem{didier2024predictive}
A.~Didier, A.~Zanelli, K.~P. Wabersich, and M.~N. Zeilinger, ``Predictive stability filters for nonlinear dynamical systems affected by disturbances,'' \emph{IFAC-PapersOnLine}, vol.~58, no.~18, pp. 200--207, 2024.

\bibitem{dai2024verification}
H.~Dai, C.~Jiang, H.~Zhang, and A.~Clark, ``Verification and synthesis of compatible control lyapunov and control barrier functions,'' in \emph{IEEE Conference on Decision and Control (CDC)}.\hskip 1em plus 0.5em minus 0.4em\relax IEEE, 2024, pp. 8178--8185.

\bibitem{romdlony2016stabilization}
M.~Z. Romdlony and B.~Jayawardhana, ``Stabilization with guaranteed safety using control lyapunov--barrier function,'' \emph{Automatica}, vol.~66, pp. 39--47, 2016.

\bibitem{sastry1988feedback}
S.~Sastry and P.~Kokotovic, ``Feedback linearization in the presence of uncertainties,'' \emph{International Journal of Adaptive Control and Signal Processing}, vol.~2, no.~4, pp. 327--346, 1988.

\bibitem{dai2023convex}
H.~Dai and F.~Permenter, ``Convex synthesis and verification of control-lyapunov and barrier functions with input constraints,'' in \emph{American Control Conference (ACC)}.\hskip 1em plus 0.5em minus 0.4em\relax IEEE, 2023.

\bibitem{zhao2023convex}
P.~Zhao, R.~Ghabcheloo, Y.~Cheng, H.~Abdi, and N.~Hovakimyan, ``Convex synthesis of control barrier functions under input constraints,'' \emph{IEEE Control Systems Letters}, vol.~7, pp. 3102--3107, 2023.

\end{thebibliography}

\end{document}